\newcommand{\sigi}{\sigma_i}
\newcommand{\sigj}{\sigma_j}
\newcommand{\Ei}{\mathbb{E}_{\mathbb{P} \mid {\mathcal{F}^i}}}
\newtheorem{definition}{\bf Definition}
\newtheorem{assumption}{\bf Assumption}
\newtheorem{theorem}{\bf Theorem}
\newtheorem{lemma}{\bf Lemma}
\newtheorem{corollary}{\bf Corollary}
\newtheorem{proposition}{\bf Proposition}
\newtheorem{remark}{\bf Remark}
\title{\LARGE \bf
Distributionally Robust Cascading Risk in Multi-Agent Rendezvous: Extended Analysis of Parameter-Induced Ambiguity

}
\author{Vivek Pandey and Nader Motee 
\thanks{
 V. Pandey and N. Motee are with the Department of Mechanical Engineering and Mechanics, Lehigh University, Bethlehem, PA, 18015, USA. {\tt\small \{vkp219,  motee\}@lehigh.edu}.\endgraf
}
}
\begin{document}

\maketitle

\thispagestyle{plain}
\pagestyle{plain}

\begin{abstract}
Ensuring safety in autonomous multi-agent systems during time-critical tasks such as rendezvous is a fundamental challenge, particularly under communication delays and uncertainty in system parameters. In this paper, we develop a theoretical framework to analyze the \emph{distributionally robust risk of cascading failures} in multi-agent rendezvous, where system parameters lie within bounded uncertainty sets around nominal values. Using a time-delayed dynamical network as a benchmark model, we quantify how small deviations in these parameters impact collective safety. We introduce a \emph{conditional distributionally robust functional}, grounded in a bivariate Gaussian model, to characterize risk propagation between agents. This yields a \emph{closed-form risk expression} that captures the complex interaction between time delays, network structure, noise statistics, and failure modes. These expressions expose key sensitivity patterns and provide actionable insight for the design of robust and resilient multi-agent networks. Extensive simulations validate the theoretical results and demonstrate the effectiveness of our framework.
\end{abstract}


\section{Introduction}
Networked systems are pervasive across a wide range of engineering applications, including power grids~\cite{dorfler2012synchronization} and autonomous vehicle platoons~\cite{liu2021risk}. Despite their utility, these systems are often vulnerable to communication delays and external disturbances, which can drive the system away from its desired operating state. Such deviations may lead to significant inefficiencies or, in extreme cases, total system failure. A disruption at a single node or link in a networked system can propagate through the network via interconnections, amplifying its impact and triggering secondary failures. Such cascading effects are commonly observed in critical infrastructures such as power networks~\cite{Somarakis2020power}, supply chains~\cite{bertsimas1998air}, and financial systems~\cite{acemoglu2015systemic}, where local disturbances can lead to widespread systemic breakdowns.

The inherent complexity of networked systems often brings increased fragility, making them vulnerable to systemic failures. In consensus networks~\cite{olfati2004consensus}, where multiple agents work together to reach a unified decision, disruptions such as time delays and environmental noise can lead to deviations from consensus. These sources of uncertainty have spurred growing interest in the risk analysis of complex dynamical systems~\cite{somarakis2018risk,Somarakis2016g,Somarakis2019g}. Accurately understanding and quantifying such risks is critical for designing robust systems, as the failure of a single agent can cascade through the network, potentially undermining overall stability~\cite{zhang2018cascading,zhang2019robustness}.

In the realm of multi-agent consensus networks, various risk assessment methods—such as Value at Risk (VaR)~\cite{rockafellar2000optimization} and Conditional Value at Risk (CVaR)~\cite{rockafellar2002conditional}—have been employed to quantify the risk of both single-agent failures~\cite{Somarakis2016g, Somarakis2017a, Somarakis2019g} and cascading failures~\cite{liu2021risk,liu2023cr_risk_first_order}. These approaches typically rely on the assumption that the underlying probability distributions of uncertainties are known and fixed. However, in practical settings, system parameters such as time delays, diffusion coefficients, and network topology can introduce significant variability and uncertainty, resulting in "distributional ambiguity"—a scenario where the true probability distribution is either unknown or subject to change over time. This ambiguity challenges the validity of traditional risk measures and highlights the need for alternative evaluation frameworks that do not depend on precise distributional knowledge.


To address this challenge, we propose a distributionally robust risk framework that explicitly incorporates ambiguity arising from known, bounded variations in system parameters such as time delays, diffusion coefficients, and network topology. Rather than assuming a single, fixed probability distribution, we define an \emph{ambiguity set}—a family of plausible distributions that are consistent with available data or prior structural knowledge. This formulation enables the evaluation of risk measures under worst-case distributional scenarios within the ambiguity set, thereby ensuring robustness in risk-aware decision-making for networked systems. By quantifying the worst-case risk, our approach provides a conservative yet principled basis for designing systems resilient to distributional uncertainty. Our approach draws inspiration from robust optimization in finance, where worst case risk is quantified under covariance uncertainty \cite{boyd2004convex}.

\subsection*{Our Contributions}
Building on the distributionally robust risk framework developed in our previous work on platoons of vehicles \cite{pandey2023dr_risk_second_order}, this paper advances the theory and application of risk-aware design in multi-agent systems by addressing the multi-agent rendezvous problem under uncertainty. Our key contributions are:
\begin{itemize}
    \item We formulate a conditional distributionally robust functional to define risk in a multi-agent rendezvous setting and quantify this risk using the steady-state statistics of a time-delayed linear consensus network.
    
    \item We derive an explicit, closed-form formula for the distributionally robust cascading risk—capturing the risk of large deviations under the condition that one agent has already failed.
    
    \item We show through theoretical analysis and simulations that higher network connectivity does not necessarily reduce risk, contrary to common intuition.
    
    \item We develop new upper and lower bounds for the cascading risk as functions of network parameters such as edge weights and time delays, providing design guidelines for robust networked systems.
    
    \item Our theoretical results are validated via extensive simulations that illustrate how cascading risk evolves under different network topologies and parameter choices.
\end{itemize}

\subsection*{Relation to Prior Work}

This paper significantly extends our earlier conference paper~\cite{pandey2025distributionallyrobustcascadingrisk}. 
In particular, we broaden the ambiguity set from capturing uncertainty solely in the diffusion coefficient to incorporating uncertainty in time delays and edge weights as well. 
We further reformulate the DR risk problem—originally posed as a supremum over the ambiguity set—into a tractable optimization problem. 
In addition, we derive new analytical bounds that relate the DR risk explicitly to network connectivity and time-delay parameters, thereby offering deeper insight into system robustness.

\section{Preliminaries and Notation}
We denote the non-negative orthant of the Euclidean space \(\mathbb{R}^n\) by \(\mathbb{R}_{+}^n\). The standard Euclidean basis vectors in \(\mathbb{R}^n\) are given by \(\{\bm{e}_1, \dots, \bm{e}_n\}\), and the all-ones vector is denoted by \(\bm{1}_n = [1, \dots, 1]^\top\). The set of \(n \times n\) positive semidefinite matrices is denoted by \(\mathcal{S}_+^n\). For any two matrices \(A, B \in \mathbb{R}^{n \times n}\), we write \(A \succeq B\) to indicate that \(A - B\) is positive semidefinite.

\vspace{0.1cm}
{\it Spectral Graph Theory:} A weighted graph is characterized by \(\mathcal{G} = (\mathcal{V}, \mathcal{E}, \omega)\), where \(\mathcal{V}\) is the set of nodes, \(\mathcal{E}\) is the set of edges, and \(\omega: \mathcal{V} \times \mathcal{V} \to \mathbb{R}_{+}\) is the weight function that assigns a non-negative value to each edge. Two nodes are said to be directly connected if and only if \((i, j) \in \mathcal{E}\).

\begin{assumption} \label{asp:connected}
    The graph under consideration is assumed to be connected. Furthermore, for all \(i, j \in \mathcal{V}\), the following conditions hold:
    \begin{itemize}
        \item \(\omega(i,j) > 0\) if and only if \((i,j) \in \mathcal{E}\).
        \item \(\omega(i,j) = \omega(j,i)\), i.e., the edges are undirected.
        \item \(\omega(i,i) = 0\), implying that the edges are simple.
    \end{itemize}
\end{assumption}

The Laplacian matrix of the graph \(\mathcal{G}\) is an \(n \times n\) matrix \(L = [l_{ij}]\) with elements given by
\[
    l_{ij} := 
    \begin{cases}
        -k_{ij}  & \text{if } i \neq j, \\
        k_{i1} + \dots + k_{in}  & \text{if } i = j,
    \end{cases}
\]
where \(k_{ij} := \omega(i,j)\). The Laplacian matrix is symmetric and positive semi-definite \cite{van2010graph}. Assumption~\ref{asp:connected} implies that the smallest eigenvalue of the Laplacian is zero, with an algebraic multiplicity of one. The eigenvalues of \(L\) can be ordered as
\[
    0 = \lambda_1 < \lambda_2 \leq \dots \leq \lambda_n.
\]
The eigenvector corresponding to the eigenvalue \(\lambda_k\) is denoted by \(\bm{q}_k\). By letting \(Q = [\bm{q}_1 \; | \; \dots \; | \; \bm{q}_n]\), we can express the Laplacian matrix as \(L = Q \Lambda Q^T\), where \(\Lambda = \text{diag}(0, \lambda_2, \dots, \lambda_n)\). We normalize the Laplacian eigenvectors so that \(Q\) becomes an orthogonal matrix, i.e., \(Q^T Q = Q Q^T = I_n\), with \(\bm{q}_1 = \frac{1}{\sqrt{n}} \bm{1}_n\).

\vspace{0.1cm}

{\it Probability Theory:} Let \(\mathcal{L}^2(\mathbb{R}^q)\) denote the space of \(\mathbb{R}^q\)-valued random vectors \(\bm{z} = [z^{(1)}, \dots, z^{(q)}]^T\) defined on a probability space \((\Omega, \mathcal{F}, \mathbb{P})\) with finite second moments. A normally distributed random vector \(\bm{y} \in \mathbb{R}^q\) with mean \(\bm{\mu} \in \mathbb{R}^q\) and covariance matrix \(\Sigma \in \mathbb{R}^{q \times q}\) is denoted by \(\bm{y} \sim \mathcal{N}(\bm{\mu}, \Sigma)\). The error function \(\operatorname{erf} : \mathbb{R} \to (-1, 1)\) is defined as
\[
\operatorname{erf}(x) = \frac{2}{\sqrt{\pi}} \int_0^x e^{-t^2} \, \mathrm{d}t.
\]
We also adopt the standard notation \(\mathrm{d}\bm{\xi}_t\) in the context of stochastic differential equations.

\section{Problem Formulation}\label{problemstatement}

We focus on time-delayed linear consensus networks, which have broad applications in engineering, including clock synchronization in sensor networks, rendezvous in space or time, and heading alignment in swarm robotics. For more details, we refer to \cite{ren2007information, olfati2007consensus}. As a motivating example, we explore the time-delayed rendezvous problem, where the goal is for a group of agents to meet simultaneously at a pre-specified location known to all.

In this scenario, agents lack prior knowledge of the meeting time, which may need to be adjusted in response to unforeseen emergencies or external uncertainties \cite{ren2007information}. Thus, the agents must reach a consensus on the rendezvous time. This consensus is achieved by each agent \( i = 1, \dots, n \) creating a state variable, say \( x_i \in \mathbb{R} \), which represents its belief about the rendezvous time. Initially, each agent sets its belief to the time it prefers for the rendezvous. The dynamics of each agent's belief evolve over time according to the following stochastic differential equation:
\begin{equation} \label{eqn:individual_dynamics}
\text{d} x_i(t) = u_i(t) \, \text{d} t + b \, \text{d} w_i(t),
\end{equation}
for all \( i = 1, \dots, n \).

Each agent's control input is denoted by \( u_i \in \mathbb{R} \). Uncertainty in the network propagates as additive stochastic noise, with its magnitude uniformly scaled by the diffusion coefficient \( b \in \mathbb{R} \). The influence of environmental uncertainties on agent dynamics is modeled using independent Brownian motions \( w_1, \dots, w_n \).

In practical scenarios, agents often experience time delays in accessing, computing, or sharing their own and neighboring agents' state information~\cite{ren2007information}. To account for this, we assume that all agents experience a uniform time delay \( \tau \in \mathbb{R}_{+} \).

The control inputs are determined through a negotiation process, in which agents interact over a communication graph to form a linear consensus network. The control law governing the system is given by
\begin{equation} \label{eq:feedback}
    u_i(t) = \sum_{j = 1}^{n} k_{ij} \left(x_j(t-\tau) - x_i(t-\tau) \right),
\end{equation}
where \( k_{ij} \in \mathbb{R}_{+} \) are nonnegative feedback gains.

Let us denote the state vector as \( \bm{x}_t = \left[x_1(t), \dots, x_n(t) \right]^T \), and the vector of exogenous disturbances as \( \bm{w}_t = \left[w_1(t), \dots, w_n(t) \right]^T \). The dynamics of the resulting closed-loop network can then be expressed as a linear consensus network governed by the following stochastic differential equation:
\begin{equation} \label{eqn: network_dynamics}
    \text{d} \bm{x}_t = -L \, \bm{x}_{t-\tau}\, \text{d}t + B \, \text{d} \bm{w}_t,
\end{equation}
for all \( t \geq 0 \), where the initial function \( \bm{x}_t = \phi(t) \) is deterministically specified for \( t \in [-\tau, 0] \), and \( B = b I_n \).

The underlying coupling structure of the consensus network in \eqref{eqn: network_dynamics} is represented by a graph \( \mathcal{G} \) that satisfies Assumption~\ref{asp:connected}, with the corresponding Laplacian matrix \( L \). We assume that the communication graph \( \mathcal{G} \) is time-invariant, ensuring that the network of agents reaches consensus on a rendezvous time before executing motion planning to the designated meeting location. Once consensus is established, a properly designed internal feedback control mechanism drives each agent toward the rendezvous point.

\begin{assumption} \label{asp:stable}
\cite{Somarakis2019g} The time delay is assumed to satisfy
\[
\tau < \frac{\pi}{2 \lambda_n},
\]
which guarantees the stability of the network dynamics \eqref{eqn: network_dynamics}.
\end{assumption}

In the absence of noise, i.e., \( b = 0 \), it is known from \cite{olfati2004consensus} that under Assumptions~\ref{asp:connected} and \ref{asp:stable}, the state of each agent converges to the average of the initial conditions, namely,
\[
\frac{1}{n}\bm{1}_n^T \bm{x}_0.
\]
In contrast, when input noise is present, the state variables fluctuate around the instantaneous average \( \frac{1}{n}\bm{1}_n^T \bm{x}_t \).

To quantify the quality of the rendezvous and to capture its fragility characteristics, we define the vector of observables as
\begin{equation} \label{eq:observables}
    \bm{y}_t = M_n\, \bm{x}_t,
\end{equation}
where the centering matrix \( M_n = I_n - \frac{1}{n}\bm{1}_n \bm{1}_n^T \) ensures that \( \bm{y}_t = \bigl(y_1(t), \dots, y_n(t)\bigr)^T \) captures the deviation of each agent from the consensus value.

Under Assumption~\ref{asp:connected}, the network dynamics described by \eqref{eqn: network_dynamics} include a marginally stable mode associated with the zero eigenvalue of the Laplacian matrix \(L\). This mode corresponds to uniform translations of the state vector and is unobservable from the output \eqref{eq:observables}, ensuring that the observable vector \(\bm{y}_t\) remains bounded in steady state. In the absence of noise, we have \(\bm{y}_t \to \bm{0}\) as \(t \to \infty\), indicating perfect consensus. However, in the presence of noise, the observable modes are persistently excited, causing \(\bm{y}_t\) to fluctuate around zero. As a result, exact agreement on the rendezvous time is no longer possible, and a practical solution is to introduce a tolerance interval within which consensus is considered satisfactory.

\begin{definition} \label{def:c-Consensus} [\textbf{\emph{\(c\)-Consensus}}]
For a given tolerance level \( c \in \mathbb{R}_+ \), the consensus network \eqref{eqn: network_dynamics} is said to achieve \emph{\(c\)-consensus} if it can tolerate a bounded level of disagreement such that
\[
\lim_{t \to \infty} \mathbb{E}[|\bm{y}_t|] \leq c \mathbf{1}_n.
\]
\end{definition}

This condition implies that all agents asymptotically agree within a tolerance \(c\), i.e., the state vector lies in the set
\[
\left\{ x \in \mathbb{R}^n \,\middle|\, \mathbb{E}[|M_n x|] \leq c \mathbf{1}_n \right\}.
\]
However, stochastic disturbances and parameter uncertainty may cause the expected deviation of an agent’s observable from consensus to exceed the tolerance threshold, suggesting a systemic risk.

\begin{definition}[Systemic Failure]\label{def:systemic_failure}
Given a tolerance level \( c \in \mathbb{R}_+ \), agent \( i \) is said to be \emph{at risk of systemic failure} if
\[
\lim_{t \to \infty} \mathbb{E}[|y_t^{(i)}|] > c.
\]
An \emph{actual systemic failure} occurs when it is known that
\[
\lim_{t \to \infty} |y_t^{(i)}| > c.
\]
\end{definition}

\textbf{Problem Statement:} The central challenge addressed in this work is to quantify the \emph{distributionally robust cascading risk} in the multi-agent rendezvous problem under parameter uncertainty. Specifically, we seek to characterize how the interplay between the graph Laplacian, time delay, and noise statistics influences the likelihood of failing to achieve \(c\)-consensus, particularly when some agent has already experienced systemic failure. To this end, we propose a systemic risk framework that evaluates the propagation of large fluctuations through the network by analyzing the steady-state statistics of the closed-loop dynamics. This framework enables a principled assessment of cascading failures in multi-agent systems under distributional ambiguity.

The remainder of the paper is organized as follows. Section~\ref{sec:prelims} presents preliminary results on the steady-state statistics of the system and introduces the distributionally robust risk framework. In Section~\ref{sec:quant_ambiguity_set}, we characterize the ambiguity set of relevant observables arising from uncertainties in diffusion coefficients, time delays, and network edge weights. Section~\ref{sec: dr_risk} derives a closed-form expression for the distributionally robust risk. Section~\ref{sec:fundamental_limits} establishes fundamental bounds on the risk as a function of key network parameters. Section~\ref{sec:case_study} provides simulation-based validation of the theoretical findings. Finally, Section~\ref{sec:conclusion} concludes the paper and outlines directions for future research. The proofs of all theoretical results are provided in Appendix.

\section{Preliminary Results}\label{sec:prelims}

In this section, we analyze the steady-state statistics of network observables in the presence of external disturbances and time delays. We introduce the notion of \emph{systemic events} to capture significant deviations of observables from their mean behavior. To assess the risk of cascading failures under parameter uncertainty, we develop a distributionally robust risk framework that accounts for ambiguity in the underlying probability distribution.

\subsection{Steady-State Statistics of Observables}

It is known from \cite{Somarakis2019g} that under Assumptions~\ref{asp:connected} and \ref{asp:stable}, the steady-state observables defined in \eqref{eq:observables} converge in distribution to a multivariate normal distribution as \( t \to \infty \):
\begin{equation}\label{eqn:steady_state_distribution_observables}
    \bar{\bm{y}} \sim \mathcal{N}(0, \Sigma),
\end{equation}

where \( \Sigma \in \mathbb{S}^n_+ \) is the steady-state covariance matrix.

\begin{lemma} \label{lem:sigma_y_steady}
The steady-state covariance matrix \( \Sigma \) of the observables \eqref{eq:observables} is given by
\begin{equation} \label{eq:sigma_y}
    \Sigma =  b^2 M_n Q \Psi Q^\top M_n,
\end{equation}
where \( \Psi =  \operatorname{diag} \left(0, \tau f(\lambda_2 \tau), \dots, \tau f(\lambda_n \tau) \right) \) is a diagonal matrix, and  \(f(\lambda_i \tau) = \frac{\cos(\lambda_i \tau)}{2\lambda_i \tau \left(1 - \sin(\lambda_i \tau)\right)}\) for all \(i = \{2, \dots, n\}\).\\
For convenience, we denote the \((i,j)\)-th element of \( \Sigma \) as \( \sigma_{ij} \), and the diagonal elements as \( \sigma_{ii} = \sigma_i^2 \).
\end{lemma}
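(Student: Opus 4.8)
The plan is to diagonalize the network dynamics through the spectral decomposition of the Laplacian, reduce the problem to $n$ decoupled scalar time-delayed stochastic differential equations, compute each scalar steady-state variance, and transform back to the observable coordinates. \textbf{Modal decomposition.} Writing $L = Q\Lambda Q^\top$ with $Q$ orthogonal and introducing $\bm{z}_t = Q^\top \bm{x}_t$, equation \eqref{eqn: network_dynamics} becomes $\mathrm{d}\bm{z}_t = -\Lambda\,\bm{z}_{t-\tau}\,\mathrm{d}t + b\,\mathrm{d}\tilde{\bm{w}}_t$, where $\tilde{\bm{w}}_t = Q^\top\bm{w}_t$ is again a standard $n$-dimensional Brownian motion because $Q$ is orthogonal. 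Since $\Lambda$ is diagonal and the components of $\tilde{\bm{w}}_t$ are independent, the modal processes evolve independently as $\mathrm{d}z_t^{(k)} = -\lambda_k z_{t-\tau}^{(k)}\,\mathrm{d}t + b\,\mathrm{d}\tilde{w}_t^{(k)}$.

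\textbf{The marginal mode and the stable modes.} For $k=1$ we have $\lambda_1 = 0$, so $z_t^{(1)}$ is a driftless Brownian motion with no stationary law; however $\bm{q}_1 = \frac{1}{\sqrt n}\bm{1}_n$ and $M_n\bm{1}_n = \bm{0}$, so this mode is annihilated in $\bm{y}_t = M_n\bm{x}_t = M_n Q\bm{z}_t$, which justifies recording $0$ in the $(1,1)$ entry of the modal covariance (consistently with $\Psi$). For $k\geq 2$, Assumption~\ref{asp:stable} gives $\lambda_k\tau \leq \lambda_n\tau < \pi/2$, which is exactly the classical stability threshold for $\dot{\xi} = -\lambda_k\xi(t-\tau)$; hence each $z_t^{(k)}$ admits a Gaussian stationary law with zero mean (its mean solves an asymptotically stable delay ODE) and variance given by the $\mathcal{H}_2$-type spectral integral
\[
\frac{b^2}{2\pi}\int_{-\infty}^{\infty}\frac{\mathrm{d}\omega}{\bigl|\,i\omega + \lambda_k e^{-i\omega\tau}\,\bigr|^2}
= \frac{b^2}{2\pi}\int_{-\infty}^{\infty}\frac{\mathrm{d}\omega}{\lambda_k^2 + \omega^2 - 2\lambda_k\omega\sin(\omega\tau)} .
\]
Evaluating this integral in closed form yields $b^2\,\tau f(\lambda_k\tau)$ with $f$ as in the statement. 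This closed-form evaluation is the main technical obstacle: I would either carry it out via contour/residue arguments exploiting the roots of $s + \lambda_k e^{-s\tau} = 0$, or equivalently solve the steady-state Lyapunov--Krasovskii relations for the autocovariance function $R(s) = \mathbb{E}[z^{(k)}_\infty z^{(k)}_{\infty+s}]$, which obeys $R'(s) = -\lambda_k R(s-\tau)$ for $s>0$ together with the jump condition $R'(0^+) = -b^2/2$; in either case the result is already available in \cite{Somarakis2019g}.

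\textbf{Back-transformation.} Independence of the modal processes makes the modal steady-state covariance diagonal and equal to $b^2\Psi$. Since $\bm{y}_t = M_n Q\,\bm{z}_t$ and $M_n$ is symmetric, the steady-state covariance of $\bm{y}_t$ is $\Sigma = M_n Q\,(b^2\Psi)\,Q^\top M_n = b^2 M_n Q\Psi Q^\top M_n$, which is \eqref{eq:sigma_y}. Symmetry and positive semidefiniteness of $\Sigma$ follow at once since each diagonal entry $\tau f(\lambda_k\tau)$ is nonnegative on $0 < \lambda_k\tau < \pi/2$, and the ambiguity in the $(1,1)$ entry of $\Psi$ is irrelevant because $M_nQ\bm{e}_1 = \bm{0}$.
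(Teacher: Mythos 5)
Your proposal is correct. The paper itself offers no proof of this lemma---it simply cites \cite{Somarakis2019g}---so your modal-decomposition argument is in effect a reconstruction of the derivation behind that citation, and it is the standard one: diagonalize via $L = Q\Lambda Q^\top$, observe that the zero mode is killed by $M_n$, reduce to independent scalar delay SDEs whose stability is guaranteed by $\lambda_k\tau < \pi/2$, and map the diagonal modal covariance back through $M_nQ$. Your spectral integrand $\lambda_k^2 + \omega^2 - 2\lambda_k\omega\sin(\omega\tau)$ is correct, and the claimed closed form is consistent with the classical result for the stationary variance of $\mathrm{d}z = -\lambda z(t-\tau)\,\mathrm{d}t + b\,\mathrm{d}w$, namely $\tfrac{b^2(1+\sin(\lambda\tau))}{2\lambda\cos(\lambda\tau)}$, which equals $b^2\tau f(\lambda\tau)$ after multiplying numerator and denominator by $1-\sin(\lambda\tau)$. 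The only step you do not carry out in full---the closed-form evaluation of the integral, or equivalently solving $R'(s) = -\lambda_k R(s-\tau)$ with the It\^o-derived jump condition $R'(0^+) = -b^2/2$---is deferred to the very reference the paper cites, so your writeup is at worst on par with, and in practice more informative than, the paper's own treatment.
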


\begin{remark} \label{rem: Sigma_y_time_delay_zero}
In the special case where the time delay \( \tau = 0 \), the covariance matrix \( \Sigma \) simplifies to
\begin{equation} \label{eq:Sigma_y_time_delay_zero}
    \hat{\Sigma} = \frac{1}{2} b^2 L^\dagger,
\end{equation}
where \( L^\dagger \) denotes the Moore–Penrose pseudoinverse of the graph Laplacian \( L \) \cite{van2010graph}.
\end{remark}

Remark~\ref{rem: Sigma_y_time_delay_zero} follows directly by substituting \( \tau = 0 \) into \eqref{eq:sigma_y} and using the identity \( M_n L^\dagger M_n = L^\dagger \).

\vspace{0.5em}

Lemma~\ref{lem:sigma_y_steady} admits an interpretation as an eigen-decomposition of the covariance matrix and reveals the effect of both time delays and network topology on steady-state statistics. This relationship is crucial in quantifying network-induced risks, as explored later in the paper.

To characterize the cascading influence of one random variable on another, it is necessary to understand their joint distributions. We now present the relevant joint distribution model.

\subsection{Joint Normal Distribution}
From \eqref{eq:sigma_y} in Lemma~\ref{lem:sigma_y_steady}, it is evident that the covariance matrix \(\Sigma\) is singular, with its kernel spanned by \(\bm{1}_n\), owing to the fact that \(M_n\) is itself singular with null space spanned by \(\bm{1}_n\). To ensure that the joint density of any two observables \(y_i\) and \(y_j\) is well-defined—that is, their corresponding covariance matrix is non-singular—we first establish the following lemma.

\begin{lemma}\label{lem:principle_covariance_invertibility}
Any principal submatrix (i.e., a matrix obtained by deleting the \(i\)th row and column) of the covariance matrix \(\Sigma\) as defined in \eqref{eq:sigma_y} is invertible.  
\end{lemma}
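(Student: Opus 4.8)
The plan is to exploit the positive semidefiniteness of $\Sigma$ together with an explicit description of its kernel. First I would use Lemma~\ref{lem:sigma_y_steady} to rewrite the covariance matrix as $\Sigma = b^2 \sum_{k=2}^{n} \tau f(\lambda_k \tau)\, \bm{q}_k \bm{q}_k^\top$, which follows since $M_n = I_n - \bm{q}_1 \bm{q}_1^\top$ annihilates $\bm{q}_1 = \tfrac{1}{\sqrt{n}}\bm{1}_n$ and leaves $\bm{q}_2,\dots,\bm{q}_n$ fixed, while $\Psi$ has first diagonal entry $0$ and entries $\tau f(\lambda_k\tau)$ otherwise. Under Assumption~\ref{asp:stable} we have $\lambda_k \tau < \pi/2$ for all $k$, so $\cos(\lambda_k\tau) > 0$ and $1 - \sin(\lambda_k\tau) > 0$, hence $f(\lambda_k\tau) > 0$. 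Consequently $\Sigma \succeq 0$ has rank exactly $n-1$, with range $\mathrm{span}\{\bm{q}_2,\dots,\bm{q}_n\} = \bm{1}_n^\perp$ and kernel exactly $\mathrm{span}\{\bm{1}_n\}$.

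Next, fix an index $i$ and let $\Sigma_{\hat\imath} \in \mathbb{R}^{(n-1)\times(n-1)}$ denote the principal submatrix obtained by deleting row and column $i$. Being a principal submatrix of a positive semidefinite matrix, $\Sigma_{\hat\imath} \succeq 0$, so it suffices to rule out a nontrivial kernel. Suppose, toward a contradiction, that $\Sigma_{\hat\imath}\bm{z} = \bm{0}$ for some nonzero $\bm{z} \in \mathbb{R}^{n-1}$. Embed $\bm{z}$ into $\tilde{\bm{z}} \in \mathbb{R}^n$ by inserting a zero in the $i$-th coordinate. Since $\tilde z_i = 0$, the quadratic form $\tilde{\bm{z}}^\top \Sigma \tilde{\bm{z}}$ only involves entries of $\Sigma$ whose two indices both differ from $i$, so $\tilde{\bm{z}}^\top \Sigma \tilde{\bm{z}} = \bm{z}^\top \Sigma_{\hat\imath} \bm{z} = \bm{z}^\top(\Sigma_{\hat\imath}\bm{z}) = 0$.

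Then I would invoke the standard fact that for a positive semidefinite matrix $\Sigma$ one has $\tilde{\bm{z}}^\top \Sigma \tilde{\bm{z}} = 0 \iff \Sigma \tilde{\bm{z}} = \bm{0}$, so $\tilde{\bm{z}} \in \ker\Sigma = \mathrm{span}\{\bm{1}_n\}$. Hence $\tilde{\bm{z}} = c\,\bm{1}_n$ for some scalar $c$; but the $i$-th entry of $\tilde{\bm{z}}$ is zero, which forces $c = 0$ and therefore $\bm{z} = \bm{0}$, a contradiction. This shows $\Sigma_{\hat\imath}$ is nonsingular (indeed positive definite), and since $i$ was arbitrary the claim follows. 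The only real subtlety is pinning down $\ker\Sigma$ exactly — i.e. the strict positivity of the weights $f(\lambda_k\tau)$, which is precisely where Assumption~\ref{asp:stable} enters; once $\ker\Sigma = \mathrm{span}\{\bm{1}_n\}$ is established, the fact that $\bm{1}_n$ has no zero coordinate is what makes the single-index deletion argument work for every $i$.
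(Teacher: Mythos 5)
Your proof is correct and follows essentially the same route as the paper: a contradiction argument showing that a null vector of the principal submatrix, padded with a zero in the $i$-th coordinate, would have to lie in $\ker\Sigma = \mathrm{span}\{\bm{1}_n\}$, which is impossible since $\bm{1}_n$ has no zero entry. You are in fact slightly more careful than the paper in explicitly justifying $\ker\Sigma = \mathrm{span}\{\bm{1}_n\}$ via the positivity of $f(\lambda_k\tau)$ under Assumption~\ref{asp:stable}, a step the paper's proof takes for granted.
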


 The result of Lemma~\ref{lem:principle_covariance_invertibility} plays a crucial and often implicit role in ensuring the invertibility of principal submatrices of the covariance matrix \(\Sigma\). Specifically, Lemma~\ref{lem:principle_covariance_invertibility} guarantees that the steps involved in risk computation are mathematically sound, by ensuring that the \(2 \times 2\) covariance matrix for any pair \( i,j \in \{1, \dots, n\}, i \neq j\) in Lemma~\ref{lem:bivariate_normal} is always non-singular.

Having established the non-singularity of the covariance matrix corresponding to any pair of observables, we now state their joint probability density as follows:

\begin{lemma}\label{lem:bivariate_normal}
Let \( y_i \) and \( y_j \) be jointly normally distributed random variables with correlation coefficient \( \rho = \rho_{ij} \), and standard deviations \( \sigi \) and \( \sigj \) respectively, as defined in Lemma~\ref{lem:sigma_y_steady}. Then, their joint probability density function is given by
\begin{equation}\label{eqn:bivariate_normal_distribution}
    f(y_j, y_i) = \frac{1}{2\pi \rho' \sigi \sigj} \exp\left( 
    -\frac{y_j^2}{2\sigj^2} 
    - \frac{\left( y_i - \rho \frac{\sigi y_j}{\sigj} \right)^2}{2 \rho'^2 \sigi^2} 
    \right),
\end{equation}
where \( \rho' = \sqrt{1 - \rho^2} \).
\end{lemma}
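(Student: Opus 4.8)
\textbf{Proof proposal for Lemma~\ref{lem:bivariate_normal}.} The plan is to obtain \eqref{eqn:bivariate_normal_distribution} directly from the general zero-mean bivariate Gaussian density by an explicit inversion of the $2\times 2$ covariance matrix followed by a completion of the square. First I would invoke Lemma~\ref{lem:principle_covariance_invertibility}: since $(y_i,y_j)$ is a two-dimensional marginal of $\bar{\bm y}\sim\mathcal N(0,\Sigma)$ and every principal $2\times2$ submatrix of $\Sigma$ is invertible, the matrix
\[
C \;=\; \begin{pmatrix} \sigma_i^2 & \rho\,\sigma_i\sigma_j \\[2pt] \rho\,\sigma_i\sigma_j & \sigma_j^2 \end{pmatrix}
\]
is nonsingular, which forces $\rho^2<1$, so $\rho'=\sqrt{1-\rho^2}>0$ and the stated density is well defined. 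This is the one structural (non-computational) ingredient, and it is exactly where Lemma~\ref{lem:principle_covariance_invertibility} does its work.

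Next I would compute $\det C = \sigma_i^2\sigma_j^2(1-\rho^2) = \rho'^2\sigma_i^2\sigma_j^2$ and
\[
C^{-1} \;=\; \frac{1}{\rho'^2\sigma_i^2\sigma_j^2}\begin{pmatrix} \sigma_j^2 & -\rho\,\sigma_i\sigma_j \\[2pt] -\rho\,\sigma_i\sigma_j & \sigma_i^2 \end{pmatrix},
\]
and substitute into $f(y_i,y_j) = (2\pi\sqrt{\det C})^{-1}\exp\!\bigl(-\tfrac12\,(y_i,y_j)\,C^{-1}(y_i,y_j)^\top\bigr)$, so that the prefactor immediately becomes $\tfrac{1}{2\pi\rho'\sigma_i\sigma_j}$. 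The exponent is then the quadratic form
\[
\frac{1}{2\rho'^2}\!\left(\frac{y_i^2}{\sigma_i^2} - \frac{2\rho\,y_i y_j}{\sigma_i\sigma_j} + \frac{y_j^2}{\sigma_j^2}\right).
\]

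The remaining step is the completion of the square in $y_i$: I would group the $y_i$-dependent terms as $\tfrac{1}{2\rho'^2\sigma_i^2}\bigl(y_i - \rho\sigma_i y_j/\sigma_j\bigr)^2$, check that the induced extra $y_j^2$ term combines with the leftover $\tfrac{y_j^2}{2\rho'^2\sigma_j^2}$ via the identity $\tfrac{1}{\rho'^2} - \tfrac{\rho^2}{\rho'^2} = 1$ to leave precisely $\tfrac{y_j^2}{2\sigma_j^2}$, and that the cross term $-\tfrac{\rho\,y_iy_j}{\rho'^2\sigma_i\sigma_j}$ is reproduced exactly. Assembling the pieces yields \eqref{eqn:bivariate_normal_distribution}. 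There is no genuine obstacle here beyond careful algebraic bookkeeping; the only point worth flagging is that the factorization exhibited is exactly the decomposition of the joint density into the marginal of $y_j$ times the conditional density of $y_i$ given $y_j$ (mean $\rho\sigma_i y_j/\sigma_j$, variance $\rho'^2\sigma_i^2$), which is the form needed for the conditional risk analysis in the sequel.
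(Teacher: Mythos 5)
Your proposal is correct and follows essentially the same route as the paper's proof: both invoke Lemma~\ref{lem:principle_covariance_invertibility} to guarantee invertibility of the $2\times 2$ covariance block, write down the standard zero-mean bivariate Gaussian density, complete the square in $y_i$, and observe the resulting factorization into the marginal of $y_j$ times the conditional of $y_i$ given $y_j$. The only difference is that you carry out the inversion of $C$ explicitly, whereas the paper cites the standard bivariate normal formula directly; this is a matter of presentation, not substance.
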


This expression for the joint density is subsequently used to derive an explicit solution for cascading risk, as detailed in later sections.

\subsection{Systemic Events}

A \textit{systemic event} refers to a large deviation in an observable that may trigger cascading failures. The set of such undesirable outcomes is called the \textit{systemic set}, denoted by \( U \subset \mathbb{R} \). 

In a probability space \( (\Omega, \mathcal{F}, \mathbb{P}) \), for a random variable \( y: \Omega \rightarrow \mathbb{R} \), the set of systemic events is defined as
\[
\{ \omega \in \Omega \mid y(\omega) \in U \}.
\]

To monitor how close the observable \( y \) is to entering the systemic set \( U \), we define a family of nested supersets \( \{ U_{\delta} \mid \delta \in [0, \infty) \} \) that satisfy the following properties:
\begin{itemize}
    \item \textit{Nestedness}: \( U_{\delta_2} \subseteq U_{\delta_1} \) whenever \( \delta_1 < \delta_2 \),
    \item \textit{Limit condition}: \( \lim_{\delta \to \infty} U_{\delta} = U \).
\end{itemize}

These supersets \( U_{\delta} \) can be tuned to include a neighborhood around \( U \), effectively forming “alarm zones.” As will be demonstrated in the following sections, the parameter \(\delta\) quantifies the magnitude of deviation of an agent's observable from its mean.

\subsection{Distributionally Robust Risk Framework}  \label{risk}

To quantify risk in the presence of distributional uncertainty, it is important to recognize that the true probability distribution may be unknown or only partially known. To address this challenge, we adopt a framework that considers a family of probability distributions, rather than assuming a single known distribution. This approach enables the evaluation of worst-case scenarios across all plausible distributions, yielding a more resilient and conservative measure of risk under parameter uncertainty.

Central to this approach is the concept of an \emph{ambiguity set}, which characterizes the collection of probability measures considered plausible. This is formalized below.

\begin{definition}\label{def: ambig_set}
\textbf{Ambiguity Set of Probability Measures} \cite{shapiro2022}:  
Let \( (\Omega, \mathcal{F}) \) be a measurable space. An \emph{ambiguity set} \( \mathcal{M} \) is defined as a nonempty family of probability measures on \( (\Omega, \mathcal{F}) \), lying within a specified ball centered at a reference probability measure, with radius determined by a chosen divergence or distance metric.  
\end{definition}

Building on the ambiguity set, we define the \emph{conditional distributionally robust functional}, which forms the foundation of our risk quantification framework. This functional captures the worst-case conditional expectation of a random variable across all measures in the ambiguity set.

\begin{definition}\label{def: condn_drf}
\textbf{Conditional Distributionally Robust Functional} \cite{shapiro2022}:  
Let \( (\Omega, \mathcal{F}) \) be a measurable space, and let \( \mathcal{M} \) be an ambiguity set of probability measures on \( (\Omega, \mathcal{F}) \). For a random variable \( Z: \Omega \to \mathbb{R} \), the \emph{conditional distributionally robust functional} is defined as  
\begin{equation}\label{eq:condn_drf_defn}
    \mathcal{R} :=
    \sup_{\mathbb{P} \in \mathcal{M}} \mathbb{E}_{\mathbb{P}|\mathcal{F}}[Z],
\end{equation}  
where \( \mathbb{E}_{\mathbb{P}|\mathcal{F}}[Z] \) denotes the conditional expectation of \( Z \) with respect to the \( \sigma \)-algebra \( \mathcal{F} \) under measure \( \mathbb{P} \).  
\end{definition}

The functional defined in Definition~\ref{def: condn_drf} serves as the core building block for constructing a distributionally robust risk measure. By incorporating ambiguity in the underlying distribution, this formulation allows us to quantify the risk of a random variable in the worst-case scenario, ensuring robustness to model misspecification.

With this foundation in place, we first quantify the ambiguity set that captures distributional uncertainty, followed by a formal definition and computation of the distributionally robust risk measure in the subsequent section.

\section{Quantifying Distributional Ambiguity}\label{sec:quant_ambiguity_set}

In practical networked systems, parameters such as the diffusion coefficient, time delay, and edge weights are typically estimated or tuned based on empirical observations. As a result, they are subject to uncertainty and may deviate slightly from their nominal values. Even small perturbations in these parameters can lead to non-negligible variations in the system’s behavior and, consequently, in the probability distribution of its observable outputs.

To capture the model uncertainty, we construct an ambiguity set representing a family of probability measures consistent with bounded deviations in the underlying parameters. In particular, uncertainties in the time delay, diffusion coefficient, and network fluctuations—as highlighted in Lemma~\ref{lem:sigma_y_steady}—serve as the primary sources of distributional ambiguity. These combined effects cause the true distribution to deviate from the nominal model, thereby motivating the use of robust analytical methods applied over the constructed ambiguity set.

\begin{remark}\label{rem:ambiguity_set_quant_framework}
Since the network dynamics in \eqref{eqn: network_dynamics} are driven by Brownian motion, the resulting steady-state distribution of the observables \eqref{eqn:steady_state_distribution_observables} is Gaussian with zero mean and a covariance matrix that depends on system parameters, as characterized in Lemma~\ref{lem:sigma_y_steady}. To account for uncertainty in these parameters, we define an \emph{ambiguity set} as a family of zero-mean multivariate Gaussian distributions, where the covariance matrix is bounded in the positive semidefinite order based on the level of parametric uncertainty. This is illustrated in Fig.~\ref{fig:ambiguity_set}.

Specifically, for a parameter \(p \in \{b, \tau, \omega\}\)\footnote{Here, \(b\) is the diffusion coefficient, \(\tau\) the time delay, and \(\omega = \omega_{ij}\) the edge weight of the communication graph. We treat the edge weight as a parameter, not the Laplacian eigenvalues \(\{\lambda_i\}_{i=1}^n\), since edge weights have practical meaning as feedback gains in \eqref{eq:feedback}, while the eigenvalues are functions of these weights.}
, we assume bounded uncertainty around a nominal estimate \(p_0 \in \{b_0,\tau_0,\omega_0\}\), expressed as
\begin{equation}\label{eqn:general_parameter_bound}
    (1 - \alpha_p)p_0 \leq p \leq (1 + \alpha_p)p_0,
\end{equation}
where \(\alpha_p \in [0,1)\) is a known relative uncertainty. \emph{Without loss of generality}, this multiplicative form subsumes additive bounds; for instance, an additive constraint \(|p - p_0| \leq \beta\) can be rewritten as \eqref{eqn:general_parameter_bound} by setting \(\alpha_p = \beta / p_0\), assuming \(p_0 > 0\).

Given such parametric uncertainty, the induced ambiguity set of steady-state distributions is defined as
\begin{equation}\label{eqn:ambiguity_set_quantification}  
    \mathcal{M}_{p} = \left\{ \mathbb{P} \sim \mathcal{N}(0, \Sigma) \,\middle|\, (1-\varepsilon_p^-)\Sigma_0 \preceq \Sigma \preceq (1+\varepsilon_p^+)\Sigma_0 \right\},
\end{equation}
where \(\Sigma_0\)  denotes the nominal covariance obtained by substituting \(p_0\) in \eqref{eq:sigma_y}, \(\varepsilon_p^- \in [0,1)\), and \(\varepsilon_p^{\pm} = \varepsilon^{\pm}(\alpha_p)\) quantifies the radius of the ambiguity set as a function of the relative parameter uncertainty.
\end{remark}

\begin{figure}[t]
\centering
    \includegraphics[width=0.96\columnwidth]{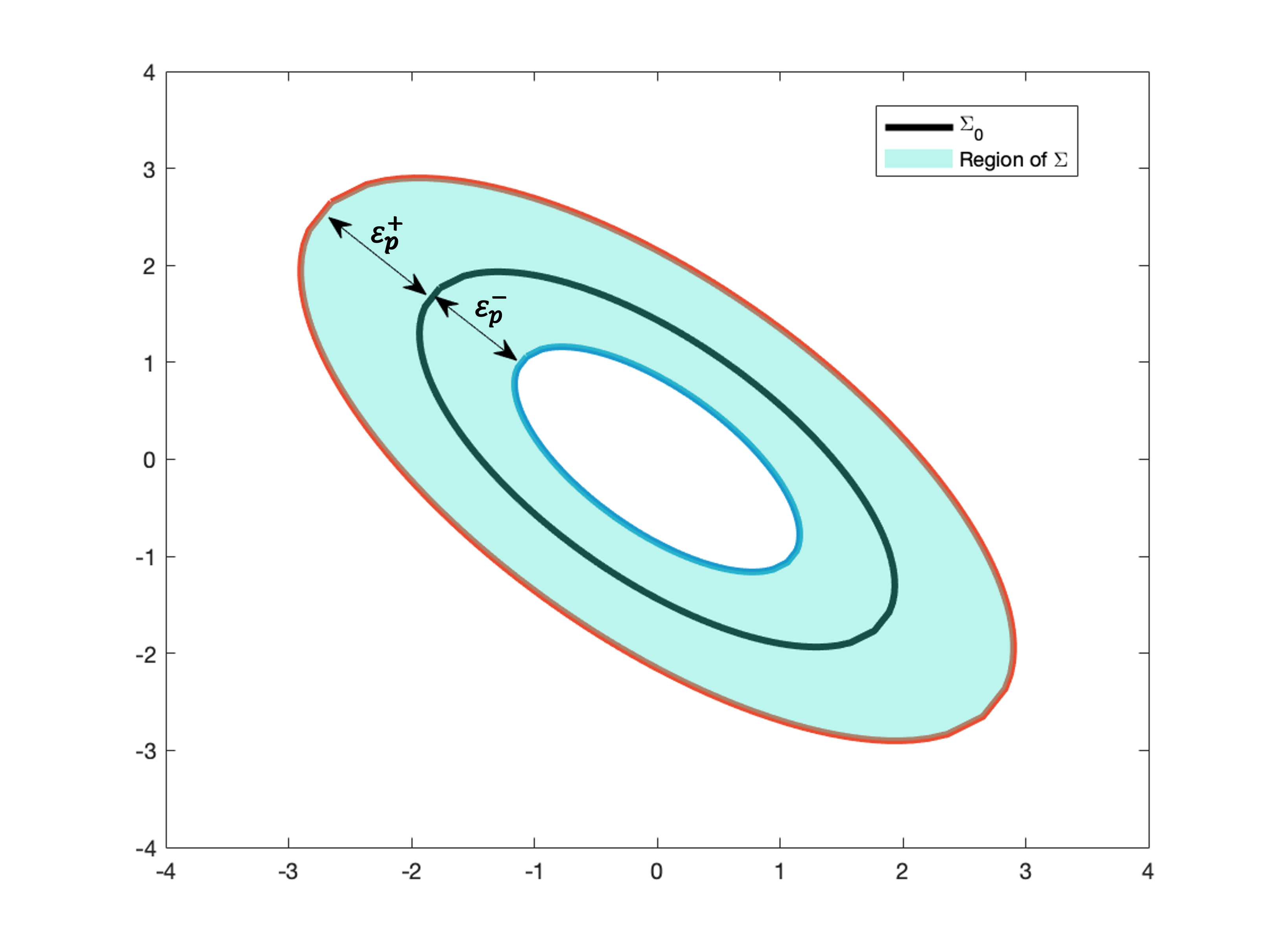} 
    \caption{Geometry of Ambiguity Set  \(\left(\tau \neq 0\right)\). The shaded region illustrates the feasible set of covariance matrices \(\Sigma\) consistent with the assumed distributional ambiguity.}
    \label{fig:ambiguity_set}
\end{figure}

Having established a general framework for quantifying the ambiguity set of the steady-state distribution of observables, we now proceed to examine the contribution of individual sources of uncertainty—namely, the diffusion coefficient, time delay, and edge weights of the communication graph. Specifically, we quantify \(\varepsilon_p\) for each \(p \in \{b, \tau, \omega\}\) independently. While it is indeed possible to construct a unified ambiguity set that captures the joint uncertainty across all parameters, we opt for a parameter-wise decomposition. This approach allows us to isolate and better understand the distinct impact of each type of uncertainty on risk quantification. We begin by analyzing the ambiguity induced by uncertainty in the diffusion coefficient, and subsequently consider the effects of time delay and edge weight variations.

\subsection{Ambiguity due to Diffusion Coefficient}
In practice, the diffusion coefficient is not known exactly; instead, a nominal value \(b_0\) is typically obtained through learning-based methods. However, this estimate inherently carries uncertainty, which is quantified by a parameter \(\alpha_b\).

\begin{proposition}\label{prop:ambiguity_b}
    Consider the system described by \eqref{eqn: network_dynamics}, with nominal diffusion coefficient \(b_0\), and suppose the true coefficient \(b\) satisfies the bounded uncertainty condition
\begin{equation}\label{eqn:ambiguity_set_B}
    (1 - \alpha_b)\, b_0^2 \leq b^2 \leq (1 + \alpha_b)\, b_0^2,
\end{equation}
where \(\alpha_b \in [0,1)\) is a known parameter.

Then, the ambiguity set for the steady-state distribution of the observables \(\bar{\bm{y}}\) is given by
\begin{equation}\label{eqn:ambiguity_B_y}
    \mathcal{M}_{b} = \left\{ \mathbb{P} \sim \mathcal{N}(0, \Sigma) \,\middle|\, (1 - \varepsilon_b)\, \Sigma_0 \preceq \Sigma \preceq (1 + \varepsilon_b)\, \Sigma_0 \right\},
\end{equation}
where \(\Sigma_0\) is the nominal covariance matrix as defined in Remark \ref{rem:ambiguity_set_quant_framework}, and \(\varepsilon_b^{\pm} = \alpha_b\).
\end{proposition}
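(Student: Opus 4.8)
The plan is to exploit the linearity of the covariance formula in Lemma~\ref{lem:sigma_y_steady} with respect to the parameter $b^2$. Observe from \eqref{eq:sigma_y} that $\Sigma = b^2\, M_n Q \Psi Q^\top M_n$, so writing $\Sigma_0 = b_0^2\, M_n Q \Psi Q^\top M_n$ for the nominal value gives the exact scaling identity $\Sigma = (b^2/b_0^2)\,\Sigma_0$. Crucially, the matrix $\Psi$ depends only on $\tau$ and the Laplacian eigenvalues, none of which are being perturbed in this proposition, so the entire dependence on the uncertain parameter is carried by the scalar prefactor $b^2$.

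First I would substitute this identity into the bound \eqref{eqn:ambiguity_set_B}. Dividing the chain $(1-\alpha_b)b_0^2 \le b^2 \le (1+\alpha_b)b_0^2$ by $b_0^2 > 0$ yields $(1-\alpha_b) \le b^2/b_0^2 \le (1+\alpha_b)$. Then I would multiply through by $\Sigma_0$, using the elementary fact that for a positive semidefinite matrix $\Sigma_0 \succeq 0$ and scalars $c_1 \le c_2$, we have $c_1 \Sigma_0 \preceq c \Sigma_0 \preceq c_2 \Sigma_0$ whenever $c_1 \le c \le c_2$ (since $(c_2-c)\Sigma_0 \succeq 0$ and $(c-c_1)\Sigma_0 \succeq 0$). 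Applying this with $c = b^2/b_0^2$ gives exactly $(1-\alpha_b)\Sigma_0 \preceq \Sigma \preceq (1+\alpha_b)\Sigma_0$, which is the claimed form \eqref{eqn:ambiguity_B_y} with $\varepsilon_b^{\pm} = \alpha_b$. Since $\alpha_b \in [0,1)$, the condition $\varepsilon_b^- \in [0,1)$ required in Remark~\ref{rem:ambiguity_set_quant_framework} is automatically satisfied.

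I would also verify the two inclusions that make the ambiguity set statement precise: every admissible $b$ produces a $\Sigma$ satisfying the semidefinite bounds (shown above), and — if one wants the set to be exactly characterized rather than merely contained — conversely any $\Sigma$ of the form $c\Sigma_0$ with $c \in [1-\alpha_b, 1+\alpha_b]$ arises from some admissible $b$, namely $b = b_0\sqrt{c}$. The Gaussian and zero-mean structure is inherited directly from \eqref{eqn:steady_state_distribution_observables} and needs no additional argument.

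There is essentially no hard step here; the result is a direct consequence of the $b^2$-linearity of $\Sigma$. The only mild subtlety worth a sentence in the write-up is that the positive semidefinite order is preserved under multiplication by a nonnegative scalar, so the scalar interval bound on $b^2/b_0^2$ transfers verbatim to the Loewner-order bound on $\Sigma$; this is why, unlike the time-delay and edge-weight cases treated later, no nonlinear sensitivity analysis (and hence no gap between $\varepsilon_b^-$ and $\varepsilon_b^+$) is needed.
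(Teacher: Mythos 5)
Your proof is correct. It does, however, take a more direct route than the paper's own argument. You work entirely from the closed-form expression $\Sigma = b^2 M_n Q \Psi Q^\top M_n$ of Lemma~\ref{lem:sigma_y_steady}, observe that the uncertain parameter enters only through the scalar prefactor $b^2$, and transfer the interval bound on $b^2/b_0^2$ to the Loewner order by multiplying a positive semidefinite matrix by a nonnegative scalar. The paper instead proves the result for a general diffusion matrix $B$: it writes the steady-state covariance as the integral $\lim_{t\to\infty}\int_0^t M_n\Phi_L(t-s)\,BB^\top\,\Phi_L^\top(t-s)M_n\,ds$, uses the fact that congruence $X \mapsto \tilde\Phi_L X \tilde\Phi_L^\top$ preserves the semidefinite order applied to the bound $(1-\alpha_B)\Pi_0 \preceq BB^\top \preceq (1+\alpha_B)\Pi_0$, passes the order through the integral and the limit, and only then specializes to $B = bI_n$. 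Your argument is shorter and avoids the integral representation entirely, at the cost of relying on the scalar structure $B = bI_n$; the paper's argument is heavier but yields the generalization to arbitrary $B$ that the authors flag as being of independent interest. Your additional remark on the converse inclusion (that every $\Sigma = c\Sigma_0$ with $c \in [1-\alpha_b, 1+\alpha_b]$ is realized by some admissible $b$) is a point the paper does not make explicitly and is a worthwhile clarification, since the set in \eqref{eqn:ambiguity_B_y} as written contains covariances not of the form $c\Sigma_0$; the paper only later tightens this via the elementwise description in \eqref{eqn:ambiguity_set_b}.
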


From Lemma \ref{lem:sigma_y_steady}, it follows that ambiguity set \({\mathcal{M}}_{b}\) can be written as 
        \begin{equation}\label{eqn:ambiguity_set_b}
   {\mathcal{M}}_{b} = \left\{\mathbb{P} \sim \mathcal{N}(0, \Sigma) \mid \lvert \frac{\sigma_{ij} - \sigma_{0,ij}}{\sigma_{0,ij}}\rvert \leq \alpha_b\right\} ,
    \end{equation}
where \(\sigma_{0,ij} =\Sigma_0(i,j).\) and \(\sigma_{ij} =\Sigma(i,j).\)

While this quantification is straightforward under the specific case \( B = b I_n \), as defined in \eqref{eqn: network_dynamics}, we provide a more general treatment for uncertain \( B \) in the Appendix. This generalization, which may be of independent interest, allows for arbitrary \( B \). However, in the main text, we restrict our analysis to the case \( B = b I_n \), since the closed-form expression in \eqref{eq:sigma_y} is valid only under this assumption. We now proceed to ambiguity quantification in the presence of time-delay uncertainty.

\subsection{Ambiguity due to Time Delay}

In many practical scenarios, the exact value of the time delay \(\tau\) is not known. Instead, a data-driven nominal estimate \(\tau_0\) is typically available, accompanied by an uncertainty bound parameter \(\alpha_{\tau}\). Under this modeling assumption, we now quantify the ambiguity set induced by time-delay uncertainty.  

For notational convenience, we define  
\begin{equation}\label{eqn:f_lambda_tau}
    f_{\lambda_i}(\tau) = \frac{ \cos(\lambda_i \tau)}{\lambda_i \left(1 - \sin(\lambda_i \tau)\right)},
\end{equation}
for each \(\lambda_i\), \(i \in \{2, \dots, n\}\). A representative plot of \(f_{\lambda_i}(\tau)\) is shown in Fig.~\ref{fig:f_lambda_tau_graph}, which is monotonic in \(\tau \in \frac{\pi}{2c}\) for a given \(c.\) 

\begin{figure}[t]
    \centering
    \includegraphics[width=0.96\columnwidth]{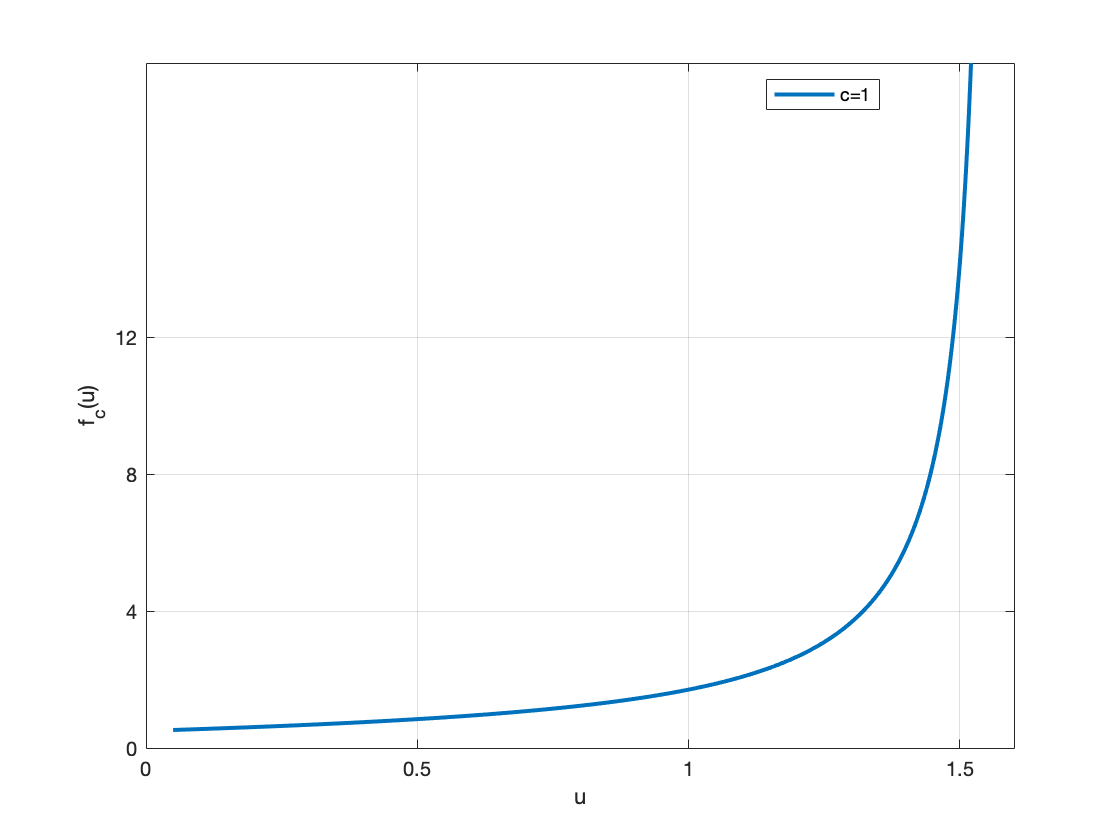} 
    \caption{Graph of \(f_c(u) = \frac{\cos(cu)}{2c (1- \sin(cu))}\).}
    \label{fig:f_lambda_tau_graph}
\end{figure}

\begin{proposition} [Covariance Bounds under Delay Perturbation]\label{prop:ambiguity_time_delay}
Consider the system governed by \eqref{eqn: network_dynamics} with nominal time delay \(\tau_0\) and relative uncertainty \(\alpha_{\tau} \in [0,1)\). Suppose the delay satisfies  
\begin{equation}\label{eqn:time_delay_bound}
   (1 - \alpha_{\tau})\,\tau_0 \ \leq\  \tau \ \leq\  (1 + \alpha_{\tau})\,\tau_0.
\end{equation}
Then, the ambiguity set \(\mathcal{M}_{\tau}\) for the steady-state distribution of the observables \(\bar{\bm{y}}\) is given by  
\begin{equation}\label{eqn:ambiguity_time_delay}
    \mathcal{M}_{\tau} = \left\{ \mathbb{P} \sim \mathcal{N}(0, \Sigma) \ \middle|\ (1 - \varepsilon_{\tau}^-)\,\Sigma_0 \ \preceq\ \Sigma \ \preceq\ (1 + \varepsilon_{\tau}^+)\,\Sigma_0 \right\},
\end{equation}
where  
\[
\varepsilon_{\tau}^{\pm} = \max_{i \in \{2,\dots,n\}} \frac{\left| f_{\lambda_i}\big((1 \pm \alpha_{\tau})\tau_0\big) - f_{\lambda_i}(\tau_0) \right|}{f_{\lambda_i}(\tau_0)},
\]
and \(\Sigma_0\) is the nominal covariance corresponding to \(\tau_0\).
\end{proposition}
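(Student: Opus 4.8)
The plan is to reduce the positive-semidefinite ordering constraint on $\Sigma$ to scalar comparisons on the diagonal entries of $\Psi$, exploiting the simultaneous diagonalization already established in Lemma~\ref{lem:sigma_y_steady}. Recall that $\Sigma = b^2 M_n Q \Psi Q^\top M_n$ with $\Psi = \operatorname{diag}(0, \tau f(\lambda_2\tau), \dots, \tau f(\lambda_n\tau))$, and that by our normalization $\bm{q}_1 = \tfrac{1}{\sqrt n}\bm 1_n$, so $M_n Q = Q\,\operatorname{diag}(0,1,\dots,1)$, i.e. $M_n$ simply annihilates the first eigen-direction and acts as the identity on the rest. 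Consequently $\Sigma = b^2 Q \Psi Q^\top$ already (the $M_n$'s are redundant once $\Psi$ has a zero in its first slot), and the map $\tau \mapsto \Sigma(\tau)$ lives in the fixed eigenbasis $Q$. Therefore $\Sigma(\tau) \preceq (1+\varepsilon^+)\Sigma_0$ holds if and only if $\tau f(\lambda_i\tau) \le (1+\varepsilon^+)\,\tau_0 f(\lambda_i\tau_0)$ for every $i\in\{2,\dots,n\}$, and similarly for the lower bound; the zero mode contributes $0\le 0$ trivially.

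Next I would rewrite $\tau f(\lambda_i\tau)$ in terms of the function $f_{\lambda_i}$ from \eqref{eqn:f_lambda_tau}. Comparing $f(\lambda_i\tau) = \frac{\cos(\lambda_i\tau)}{2\lambda_i\tau(1-\sin(\lambda_i\tau))}$ with \eqref{eqn:f_lambda_tau}, we get $\tau f(\lambda_i\tau) = \tfrac12 f_{\lambda_i}(\tau)$, so the $i$-th diagonal entry of $b^2\Psi$ is exactly $\tfrac{b^2}{2} f_{\lambda_i}(\tau)$. Thus the required inequalities become, for each $i\in\{2,\dots,n\}$,
\[
(1-\varepsilon_\tau^-)\, f_{\lambda_i}(\tau_0) \ \le\ f_{\lambda_i}(\tau) \ \le\ (1+\varepsilon_\tau^+)\, f_{\lambda_i}(\tau_0),
\]
which must hold for all $\tau$ in the interval $[(1-\alpha_\tau)\tau_0,(1+\alpha_\tau)\tau_0]$. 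The smallest valid $\varepsilon_\tau^+$ is obtained by taking the worst case over this interval and over $i$; here I invoke the monotonicity of $f_{\lambda_i}(\cdot)$ on $(0,\tfrac{\pi}{2\lambda_i})$ (this is the content of Fig.~\ref{fig:f_lambda_tau_graph}, and Assumption~\ref{asp:stable} guarantees the whole uncertainty interval stays in this range provided $\alpha_\tau$ is compatible with it), so the extreme values of $f_{\lambda_i}$ over the interval are attained at the endpoints $(1\pm\alpha_\tau)\tau_0$. This yields exactly
\[
\varepsilon_\tau^{\pm} \ =\ \max_{i\in\{2,\dots,n\}} \frac{\bigl| f_{\lambda_i}\bigl((1\pm\alpha_\tau)\tau_0\bigr) - f_{\lambda_i}(\tau_0)\bigr|}{f_{\lambda_i}(\tau_0)},
\]
with the sign convention that $\varepsilon^+$ picks up the endpoint where $f_{\lambda_i}$ is larger and $\varepsilon^-$ the endpoint where it is smaller; monotonicity means these are the $+$ and $-$ endpoints consistently across $i$ (or one simply takes the max of the absolute relative deviation at each endpoint, which is how the statement is written).

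I expect the main obstacle to be the bookkeeping around monotonicity and sign: one must verify that $f_{\lambda_i}$ is indeed monotone increasing on the relevant subinterval of $(0,\tfrac{\pi}{2\lambda_i})$ so that endpoint evaluation suffices (otherwise the max would have to range over the interior too), and one should check that $(1+\alpha_\tau)\tau_0$ does not violate Assumption~\ref{asp:stable}, i.e. $(1+\alpha_\tau)\tau_0 < \tfrac{\pi}{2\lambda_n}$, which is the implicit well-posedness condition. A secondary subtlety is confirming that the $\preceq$ constraint decouples cleanly entry-by-entry in the $Q$-basis; this is immediate because $\Sigma(\tau)$ and $\Sigma_0$ share eigenvectors, so $(1+\varepsilon^+)\Sigma_0 - \Sigma(\tau) = b^2 Q\,\operatorname{diag}\!\bigl(0,\dots,(1+\varepsilon^+)\tau_0 f(\lambda_i\tau_0) - \tau f(\lambda_i\tau),\dots\bigr)Q^\top$ is PSD iff each listed diagonal entry is nonnegative. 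Everything else is routine algebra.
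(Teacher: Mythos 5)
Your proposal is correct and follows essentially the same route as the paper's proof: reduce the Loewner ordering to scalar inequalities on the diagonal of \(\Psi\) via the shared eigenbasis, then use monotonicity of \(f_{\lambda_i}\) on \((0,\tfrac{\pi}{2\lambda_i})\) to evaluate at the endpoints \((1\pm\alpha_\tau)\tau_0\). The one step you defer to the figure — monotonicity — is the step the paper actually carries out, by computing \(\tfrac{d}{d\tau}f_{\lambda_i}(\tau)\) and noting its positivity under Assumptions~\ref{asp:connected} and~\ref{asp:stable}; your implicit well-posedness caveat \((1+\alpha_\tau)\tau_0<\tfrac{\pi}{2\lambda_n}\) is a fair observation that the paper leaves unstated.
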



Unlike the parameter bounds for the diffusion coefficient, the covariance bounds in this case are \emph{asymmetric} due to the nonlinearity of \(f_{\lambda_i}(\tau)\). This asymmetry arises because positive and negative deviations in \(\tau\) alters the eigenvalues of \(\Sigma\) by different magnitudes. As a result, the corresponding elements of the covariance matrix \(\Sigma\) experience non-uniform magnitudes of change, with one side of the deviation spectrum potentially amplifying uncertainty more than the other.

Building on this observation, we now proceed to rigorously quantify the ambiguity set induced by uncertainty in the network parameters, focusing on how uncertainty in feedback gains translate into bounds on the covariance matrix.








\subsection{Ambiguity Due to Network Fluctuations}

Network fluctuations are common in practical networked systems, often arising from communication noise and varying environmental conditions. These fluctuations lead to uncertainty in the edge weights of the communication graph, which in turn induce ambiguity in the covariance matrix of the system observables. To capture this effect, we characterize an ambiguity set resulting from bounded fluctuations in the edge weights.

We consider two distinct scenarios separately: the case of zero time delay (\(\tau = 0\)) and the case of nonzero time delay (\(\tau \neq 0\)).

\subsubsection{Case 1: \(\tau = 0\)}

We model the uncertain edge weights \(\omega_{ij}\) with a nominal value \(\omega_{0,ij}\) and relative uncertainty bounded as
\begin{equation}\label{eqn:edge_weight_bound}
   (1 - \alpha_{\omega_{ij}}) \, \omega_{0,ij} \leq \omega_{ij} \leq (1 + \alpha_{\omega_{ij}}) \, \omega_{0,ij},
\end{equation}
where \(\alpha_{\omega_{ij}} \in [0,1)\) quantifies the level of uncertainty.

Exploiting the definition of the graph Laplacian in terms of unsigned incidence matrix,
\begin{equation}\label{eqn: Laplacian_using_incidence}
    L = RWR^T,
\end{equation}
we can directly relate changes in the edge weights to the ordering of Laplacian matrices. We state our next result in slightly more generalized form in terms of Laplacian matrix perturbation. 



\begin{proposition}[Covariance Bounds under Network Weight Perturbation for zero time-delay]\label{prop:ambiguity_edge_weight_tau_zero}
    Let $L_0$ be the nominal graph Laplacian with edge weights $\{\omega_{0,ij}\}$, and let $L$ be a perturbed Laplacian corresponding to edge weights $\{\omega_{ij}\}$ satisfying \eqref{eqn:edge_weight_bound}. 
Then, the perturbed Laplacian satisfies the multiplicative bound
\begin{equation}\label{eqn:Laplacian_multiplicative_bound}
   L_0 \left(1 - \|L_0^{\dagger}\Delta\|\right) \preceq L \preceq L_0 \left(1 + \|L_0^{\dagger}\Delta\|\right),
\end{equation}
where 
\[
\Delta = \sum_{i,j} \alpha_{\omega_{ij}} \omega_{0,ij} (e_i - e_j)(e_i - e_j)^\top
\]
quantifies the maximum deviation induced by edge weight uncertainty. Consequently, the ambiguity set of steady state distribution of observables is given by 
\begin{equation}\label{eqn:ambiguity_edge_weight}
    \mathcal{M}_{\omega} = \left\{ \mathbb{P} \sim \mathcal{N}(0, \Sigma) \;\middle|\; (1-\varepsilon_{\omega}^-) \Sigma_0 \preceq \Sigma \preceq (1+\varepsilon_{\omega}^+) \Sigma_0 \right\},
\end{equation}
where $\Sigma_0$ is the nominal covariance matrix, and
\begin{equation}
\varepsilon_{\omega}^- := \frac{\|\Delta L_0^{\dagger}\|}{1+\|\Delta L_0^{\dagger}\|},
\qquad 
\varepsilon_{\omega}^+ := \frac{\|\Delta L_0^{\dagger}\|}{1-\|\Delta L_0^{\dagger}\|},
\end{equation}
quantifies the induced uncertainty.
\end{proposition}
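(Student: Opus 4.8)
The plan is to establish the multiplicative Laplacian sandwich in \eqref{eqn:Laplacian_multiplicative_bound} first, and then convert it into the stated bounds on the steady-state covariance matrix $\Sigma$ using Remark~\ref{rem: Sigma_y_time_delay_zero} (which gives $\hat\Sigma = \tfrac12 b^2 L^\dagger$ when $\tau=0$). I would begin from the incidence-matrix representation \eqref{eqn: Laplacian_using_incidence}, $L = RWR^\top$, and write the perturbed Laplacian as $L = L_0 + E$, where $E = R(W-W_0)R^\top = \sum_{(i,j)\in\mathcal E}(\omega_{ij}-\omega_{0,ij})(e_i-e_j)(e_i-e_j)^\top$. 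The bound \eqref{eqn:edge_weight_bound} gives $|\omega_{ij}-\omega_{0,ij}|\le \alpha_{\omega_{ij}}\omega_{0,ij}$, so each rank-one term is dominated (in the PSD order) by $\alpha_{\omega_{ij}}\omega_{0,ij}(e_i-e_j)(e_i-e_j)^\top$; summing gives $-\Delta \preceq E \preceq \Delta$, with $\Delta$ as defined in the statement. Both $L_0$ and $\Delta$ share the same kernel $\mathrm{span}\{\bm 1_n\}$, so all the ensuing manipulations take place on the subspace $\bm 1_n^\perp$, where $L_0$ is positive definite and $L_0^\dagger$ is its genuine inverse.

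Next I would turn the additive bound $-\Delta\preceq E\preceq\Delta$ into a multiplicative one. Restricting to $\bm 1_n^\perp$ and conjugating by $L_0^{1/2\dagger}$, we have $L = L_0^{1/2}(I + L_0^{1/2\dagger} E\, L_0^{1/2\dagger})L_0^{1/2}$ on that subspace, and $\|L_0^{1/2\dagger} E\, L_0^{1/2\dagger}\| \le \|L_0^{1/2\dagger}\Delta\,L_0^{1/2\dagger}\| = \|L_0^\dagger\Delta\|$ (the last equality using that $L_0^{1/2\dagger}\Delta L_0^{1/2\dagger}$ and $L_0^\dagger\Delta$ are similar, or simply that both equal the same operator norm by the PSD ordering $-\Delta\preceq E\preceq\Delta \Rightarrow -L_0^\dagger\Delta$-type bounds). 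This yields $(1-\|L_0^\dagger\Delta\|)L_0 \preceq L \preceq (1+\|L_0^\dagger\Delta\|)L_0$, which is \eqref{eqn:Laplacian_multiplicative_bound} (the norm $\|L_0^\dagger\Delta\|=\|\Delta L_0^\dagger\|$ by symmetry considerations). Under Assumption~\ref{asp:connected} and the smallness of $\alpha_{\omega_{ij}}$ one checks $\|\Delta L_0^\dagger\|<1$, so the lower bound is a genuine positive-definite matrix on $\bm 1_n^\perp$.

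Finally I would invert the sandwich. From $c_- L_0 \preceq L \preceq c_+ L_0$ with $c_\pm = 1\pm\|\Delta L_0^\dagger\|$, taking pseudoinverses on $\bm 1_n^\perp$ reverses the inequalities: $c_+^{-1} L_0^\dagger \preceq L^\dagger \preceq c_-^{-1} L_0^\dagger$. Multiplying by $\tfrac12 b^2$ and using Remark~\ref{rem: Sigma_y_time_delay_zero}, $(1/c_+)\Sigma_0 \preceq \Sigma \preceq (1/c_-)\Sigma_0$. It remains to rewrite $1/c_+$ and $1/c_-$ in the claimed form: $1 - \varepsilon_\omega^- = 1 - \frac{\|\Delta L_0^\dagger\|}{1+\|\Delta L_0^\dagger\|} = \frac{1}{1+\|\Delta L_0^\dagger\|} = 1/c_+$, and similarly $1+\varepsilon_\omega^+ = 1 + \frac{\|\Delta L_0^\dagger\|}{1-\|\Delta L_0^\dagger\|} = \frac{1}{1-\|\Delta L_0^\dagger\|} = 1/c_-$, which gives exactly \eqref{eqn:ambiguity_edge_weight}.

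I expect the main obstacle to be handling the singularity of $L$, $L_0$, and $\Delta$ cleanly: all inequalities and inversions are only valid on the common range $\bm 1_n^\perp$, and one must justify that operations such as "take the pseudoinverse, which reverses $\preceq$" and "conjugate by $L_0^{1/2\dagger}$" are legitimate despite the rank deficiency. The key facts making this work are that $\ker L = \ker L_0 = \ker\Delta = \mathrm{span}\{\bm 1_n\}$ (so no kernel mismatch arises), that $M_n$ projects onto $\bm 1_n^\perp$ and commutes with everything in sight, and the standard operator-monotonicity fact that $0 \prec A \preceq B$ on a subspace implies $B^\dagger \preceq A^\dagger$ there. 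A secondary technical point is verifying $\|\Delta L_0^\dagger\| < 1$ so that $\varepsilon_\omega^+$ is well-defined and positive; this follows from the uncertainty bounds being bounded away from the degenerate regime, and I would state it as a mild standing assumption if needed.
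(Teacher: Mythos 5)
Your proposal is correct and follows essentially the same route as the paper's proof: bound the additive perturbation by $\Delta$, convert to the multiplicative sandwich $c_-L_0 \preceq L \preceq c_+L_0$ with $c_\pm = 1 \pm \|\Delta L_0^\dagger\|$, invert via $\Sigma = \tfrac12 b^2 L^\dagger$, and identify $1/c_\pm$ with $1 \mp \varepsilon_\omega^\mp$. If anything, your treatment is more careful than the paper's (which writes $L = L_0 + \Delta$ directly and asserts the multiplicative bound from $\|I+\Delta L_0^\dagger\|$ without addressing the restriction to $\bm{1}_n^\perp$), though your parenthetical equality $\|L_0^{1/2\dagger}\Delta L_0^{1/2\dagger}\| = \|L_0^\dagger\Delta\|$ should be an inequality $\leq$ (spectral radius versus operator norm of a non-symmetric product), which still points in the direction you need.
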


\subsubsection{Case 2: \(\tau \neq 0\)}
In the presence of nonzero time delay, the relationship between edge weight fluctuations and covariance matrix ordering becomes more complex due to the nonlinear dependence on the Laplacian eigenvalues and the time-delay function. While it is straightforward to establish uniform covariance bounds induced by uniform scaling of edge weights, the direct semidefinite ordering between covariance matrices corresponding to different weight sets generally does not hold.

We model the uncertain edge weight vector \(\bm{\omega} = [\omega_{ij}]_{(i,j) \in \mathcal{E}}\) as uniformly fluctuating around a nominal value \(\bm{\omega}_0\) with relative uncertainty bounded elementwise as
\begin{equation}\label{eqn:edge_weight_vector_bound}
   (1 - \alpha_{\bm{\omega}}) \, \bm{\omega}_{0} 
   \ \overset{\mathrm{e}}{\leq} \ 
   \bm{\omega} 
   \ \overset{\mathrm{e}}{\leq} \ 
   (1 + \alpha_{\bm{\omega}}) \, \bm{\omega}_{0},
\end{equation}
where the inequalities \(\overset{\mathrm{e}}{\leq}\) hold elementwise over all components of the weight vector \(\bm{\omega}\) and \(\alpha_{\bm{\omega}} = \alpha_{{\omega}} \bm{1}_n,\) where \(\alpha_{{\omega}} \in [0,1).\)

For notational convenience, define  
\begin{equation}\label{eqn:f_tau_lambda}
    f_{\tau}(\lambda_i) 
    = \frac{ \cos(\lambda_i \tau)}
           {\lambda_i  \left[1 - \sin(\lambda_i \tau)\right]},
\end{equation}
which represents the dependence of the function \(f\) on the \(i\)-th non-trivial Laplacian eigenvalue \(\lambda_i\) for a given time delay \(\tau\). A representative plot of \(f_{\tau}(\lambda_i)\) is shown in Fig.~\ref{fig:f_tau_lambda_graph}, which has a unique minimum. For the exposition of our next result, consider the following notation:
\begin{equation}\label{eqn:f_tau_i_plus_minus}
    \chi_{\tau}^{\pm} \left(\lambda_{0,i}\right) =   \frac{\left| 
    f_{\tau}\big((1 \pm \alpha_{\bm{\omega}})\lambda_{0,i}\big) - f_{\tau}(\lambda_{0,i}) 
  \right|}{f_{\tau}(\lambda_{0,i}) }
\end{equation}

 \begin{proposition}[Covariance Bounds under Network Weight Perturbation for non zero time-delay]\label{prop:ambiguity_edge_weight_tau_non_zero}
    Consider the uncertain edge weight vector \(\bm{\omega}\) satisfying the uniform bound in \eqref{eqn:edge_weight_vector_bound}.  
    The ambiguity set of steady-state covariance matrices is  
    \begin{equation}\label{eqn:ambiguity_edge_weight}
    \mathcal{M}_{\bm{\omega}} 
    = \left\{ \mathbb{P} \sim \mathcal{N}(0, \Sigma) 
        \ \middle| \ 
        (1-\varepsilon_{\bm{\omega}}^-) \, \Sigma_0 
        \preceq \Sigma 
        \preceq (1+\varepsilon_{\bm{\omega}}^+) \, \Sigma_0 
      \right\},
    \end{equation}
    where the bounds \(\varepsilon_{\bm{\omega}}^{\pm}\) depend on the spectral conditions as follows:
    \begin{align*}
        \varepsilon_{\bm{\omega}}^{\,\mp} 
&= \max_{i \in \{2,\dots,n\}} 
\chi_{\tau}^{\pm} \left(\lambda_{0,i}\right) \qquad \text{if} \quad \lambda_{0,n}^{+} \leq \bar{\lambda},\\
        \varepsilon_{\bm{\omega}}^{\,\mp} 
&= \max_{i \in \{2,\dots,n\}} 
\chi_{\tau}^{\pm} \left(\lambda_{0,i}\right) \qquad \text{if} \quad \lambda_{0,2}^{-} \geq \bar{\lambda},
    \end{align*}
and \(\lambda_{0,n}^+ = \lambda_{0,n} (1+\alpha_{\bm{\omega}}),\) \(\lambda_{0,2}^- = \lambda_{0,2} (1-\alpha_{\bm{\omega}}),\) and \(   \bar{\lambda} = \arg\min_{\lambda_i} f_{\tau}(\lambda_i)
    .\)
\end{proposition}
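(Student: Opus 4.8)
The plan is to reduce the edge-weight perturbation to a spectral perturbation and then exploit the monotonicity structure of $f_\tau$ on each side of its unique minimizer $\bar\lambda$. First I would recall from Lemma~\ref{lem:sigma_y_steady} that $\Sigma = b^2 M_n Q \Psi Q^\top M_n$ with $\Psi = \operatorname{diag}(0,\tau f(\lambda_2\tau),\dots,\tau f(\lambda_n\tau))$, so that in the eigenbasis the only thing that changes when $\bm\omega$ varies is the set of nontrivial eigenvalues $\{\lambda_i\}$, entering through $\tau f(\lambda_i\tau) = \tfrac12 f_\tau(\lambda_i)$ with $f_\tau$ as in \eqref{eqn:f_tau_lambda}. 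The key elementary fact is that a uniform multiplicative perturbation of all edge weights, $(1-\alpha_{\bm\omega})\bm\omega_0 \overset{\mathrm e}{\le} \bm\omega \overset{\mathrm e}{\le} (1+\alpha_{\bm\omega})\bm\omega_0$, scales the Laplacian itself as $(1-\alpha_{\bm\omega})L_0 \preceq L \preceq (1+\alpha_{\bm\omega})L_0$ via \eqref{eqn: Laplacian_using_incidence}, $L = RWR^\top$, since $R(W-W')R^\top$ is a nonnegative combination of the rank-one terms $(e_i-e_j)(e_i-e_j)^\top$. By Weyl's inequality (monotonicity of eigenvalues under the PSD order) this gives $(1-\alpha_{\bm\omega})\lambda_{0,i} \le \lambda_i \le (1+\alpha_{\bm\omega})\lambda_{0,i}$ for every $i$, i.e. each eigenvalue lies in the interval $[\lambda_{0,i}^-,\lambda_{0,i}^+]$.

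Next I would translate these eigenvalue intervals into bounds on $f_\tau(\lambda_i)$. Because $f_\tau$ has a unique minimum at $\bar\lambda$ (Fig.~\ref{fig:f_tau_lambda_graph}), it is strictly decreasing on $(0,\bar\lambda]$ and strictly increasing on $[\bar\lambda,\infty)$. Hence, \emph{provided all the perturbation intervals lie entirely on one side of $\bar\lambda$}, $f_\tau$ is monotone on each interval and its extreme values over $[\lambda_{0,i}^-,\lambda_{0,i}^+]$ are attained at the endpoints; this is exactly where the two case hypotheses $\lambda_{0,n}^+ \le \bar\lambda$ (all eigenvalues, even the largest perturbed one, on the decreasing branch) and $\lambda_{0,2}^- \ge \bar\lambda$ (all eigenvalues, even the smallest perturbed one, on the increasing branch) enter. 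Under either hypothesis the relative deviation of $\tau f(\lambda_i\tau)$ from its nominal value is bounded above by $\chi_\tau^{\pm}(\lambda_{0,i})$ as defined in \eqref{eqn:f_tau_i_plus_minus}, with the sign bookkeeping depending on which branch one is on; taking the worst case over $i\in\{2,\dots,n\}$ yields uniform scalars $\varepsilon_{\bm\omega}^{\mp} = \max_i \chi_\tau^{\pm}(\lambda_{0,i})$ such that $(1-\varepsilon_{\bm\omega}^-)\Psi_0 \preceq \Psi \preceq (1+\varepsilon_{\bm\omega}^+)\Psi_0$ (diagonal matrices, so the PSD order is entrywise).

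Finally I would propagate this diagonal bound through the congruence $\Sigma = b^2 M_n Q(\cdot)Q^\top M_n$: congruence by the fixed matrix $b M_n Q$ preserves the PSD order, so $(1-\varepsilon_{\bm\omega}^-)\Sigma_0 \preceq \Sigma \preceq (1+\varepsilon_{\bm\omega}^+)\Sigma_0$, which is precisely the description of $\mathcal{M}_{\bm\omega}$ in \eqref{eqn:ambiguity_edge_weight}. The main obstacle — and the reason the statement is split into two cases rather than given in full generality — is precisely the non-monotonicity of $f_\tau$ around $\bar\lambda$: if a perturbation interval straddles $\bar\lambda$, the extremal value of $f_\tau$ is the interior minimum $f_\tau(\bar\lambda)$ on one side and an endpoint on the other, the relative-deviation bookkeeping no longer collapses to the clean endpoint formula $\chi_\tau^{\pm}$, and more importantly a uniform \emph{two-sided} multiplicative sandwich between $\Sigma$ and $\Sigma_0$ need not exist at all (as the surrounding text warns, "the direct semidefinite ordering between covariance matrices corresponding to different weight sets generally does not hold"). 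I would therefore be careful to invoke the monotone-branch hypothesis explicitly at the step where endpoint evaluation is claimed, and to note that outside these spectral regimes only the weaker uniform-scaling bounds survive.
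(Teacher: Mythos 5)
Your overall architecture matches the paper's: establish the unique minimizer $\bar\lambda$ of $f_\tau$, split into the two monotone branches so that extremes over each perturbation interval are attained at endpoints, take the worst case over $i$ to get scalar factors $\varepsilon_{\bm\omega}^{\pm}$, and push the diagonal bound through the congruence $\Sigma = b^2 M_n Q\,\Psi\,Q^\top M_n$. However, there is a genuine gap in your reduction from weight perturbations to spectral perturbations. You invoke Weyl's inequality to conclude $(1-\alpha_{\bm\omega})\lambda_{0,i}\le\lambda_i\le(1+\alpha_{\bm\omega})\lambda_{0,i}$ from the Loewner sandwich $(1-\alpha_{\bm\omega})L_0\preceq L\preceq(1+\alpha_{\bm\omega})L_0$, which signals that you are allowing each edge weight to vary \emph{independently} within its elementwise interval. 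In that regime the eigenvalue bounds are indeed correct, but the eigenvector matrix $Q$ of the perturbed Laplacian is no longer the nominal $Q_0$; the perturbed $\Psi$ and the nominal $\Psi_0$ are then diagonal in \emph{different} bases, your claim that ``in the eigenbasis the only thing that changes when $\bm\omega$ varies is the set of nontrivial eigenvalues'' is false, and the final step --- congruence by ``the fixed matrix $b M_n Q$'' --- is invalid because there is no single fixed $Q$ diagonalizing both covariances. This is exactly the failure mode the paper's remark warns about: without a shared eigenspace the semidefinite ordering between $\Sigma$ and $\Sigma_0$ generally does not hold, and it is why the paper restricts this proposition to covariances induced by \emph{uniform scalar scaling} of all weights.

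Under that intended restriction the argument is clean but your Weyl step is superfluous: a single scalar $\theta\in[-\alpha_{\bm\omega},\alpha_{\bm\omega}]$ gives $L=(1+\theta)L_0$, so the eigenvectors are exactly preserved and $\lambda_i=(1+\theta)\lambda_{0,i}$ with no spectral perturbation theory needed; the remainder of your proof (endpoint evaluation of $f_\tau$ on a monotone branch, $\max_i\chi_\tau^{\pm}(\lambda_{0,i})$, congruence preserving the Loewner order) then goes through exactly as in the paper's proof, which itself mirrors the proof of Proposition~\ref{prop:ambiguity_time_delay}. Your closing caveat about intervals straddling $\bar\lambda$ is correct as far as it goes, but it attributes the need for the two-case hypothesis solely to the non-monotonicity of $f_\tau$; the loss of simultaneous diagonalizability under non-uniform weight perturbations is the other, and more structural, obstruction. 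To repair the write-up, state explicitly at the outset that $\bm\omega=(1+\theta)\bm\omega_0$ for a scalar $\theta$, delete the Weyl step, and keep everything else.
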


\begin{figure}[t]
    \centering
    \includegraphics[width=0.96\columnwidth]{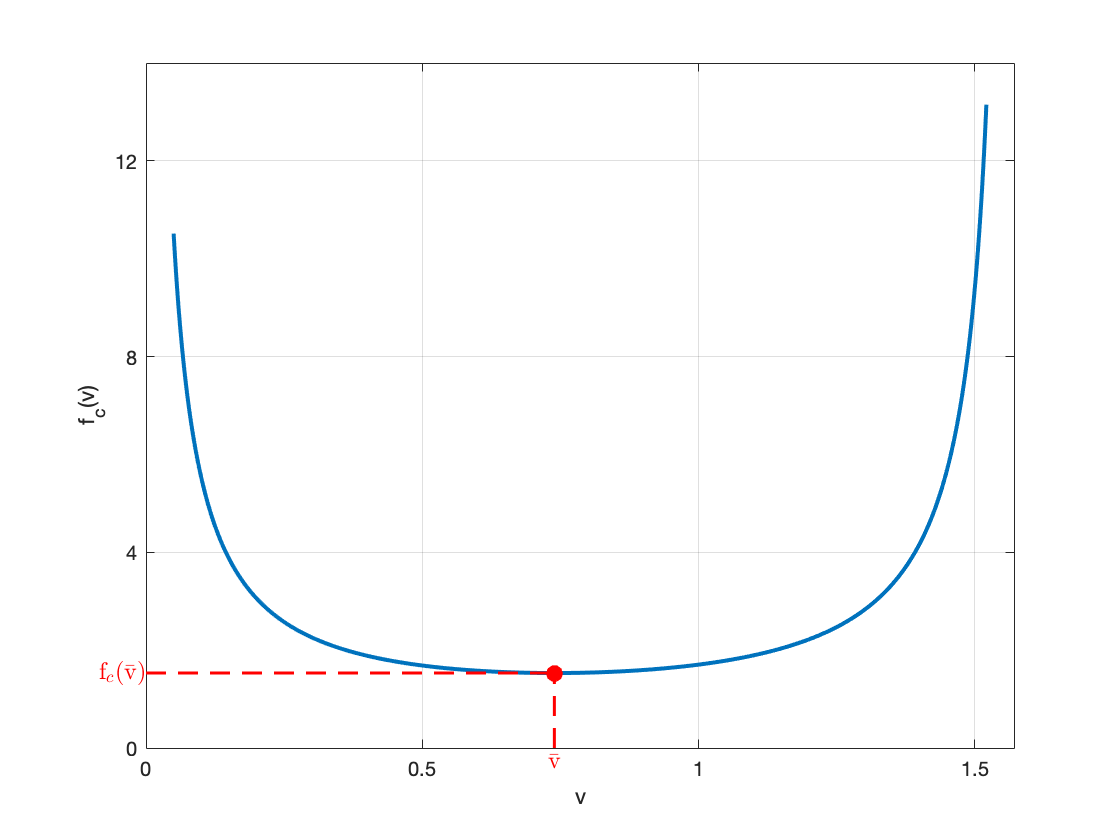} 
    \caption{Graph of \(f_c(v) = \frac{\cos(cv)}{2v (1- \sin(cv))}\).}
    \label{fig:f_tau_lambda_graph}
\end{figure}
A comprehensive analysis of ambiguity sets under arbitrary edge weight fluctuations with nonzero delay is challenging and beyond the scope of this work. We leave this direction for future investigation and focus here on representative cases with uniform weight perturbations.

\begin{remark}
All covariance matrices in the ambiguity set \(\mathcal{M}_{p}, \; p \in \{b, \tau\}\), are simultaneously diagonalizable, as they inherit the eigenvectors of the nominal covariance matrix. This property is essential for maintaining the analytical tractability of the distributionally robust formulation. 

In contrast, under arbitrary perturbations of the Laplacian weights with nonzero time delay, the covariance matrices corresponding to the nominal and perturbed Laplacians are no longer simultaneously diagonalizable. Without a shared eigenspace, direct semidefinite ordering between the two covariance matrices generally does not hold, making it difficult to establish uniform covariance bounds. The delay-free case ($\tau = 0$) is an exception: here, the covariance matrix is a linear function of the Laplacian pseudoinverse, which allows uniform bounds to be established, despite the loss of simultaneous diagonalizability.

For this reason, in the presence of nonzero time delay (\(\tau \neq 0\)), we restrict our ambiguity set to covariance matrices induced by uniform scaling of the Laplacian weights. This ensures that a semidefinite ordering can be established, enabling a tractable characterization of the distributionally robust constraints.

\end{remark}

\section{Distributionally Robust Risk of Cascading Failures}  \label{sec: dr_risk}

The quantification of risk under covariance uncertainty is well-established in domains such as portfolio optimization. In classical robust portfolio design, the covariance matrix of asset returns is often only partially known—typically bounded elementwise—and the objective is to compute the worst-case risk over all admissible covariance matrices that satisfy these bounds and the positive semidefiniteness constraint~\cite{boyd2004convex}. 

Motivated by this framework, we consider an analogous problem in multi-agent rendezvous, where uncertainty in system parameters (such as the diffusion coefficient, communication time delay, and edge weights) induces uncertainty in the covariance matrix of steady-state observables. Our goal is to quantify the worst-case, or distributionally robust (DR) risk, by explicitly accounting for ambiguity in the probability distribution of observables induced by model uncertainty. 

To this end, we first formalize the $\sigma$-algebra associated with the relevant events and present the conditional expectation formula that underpins our DR risk framework, followed by quantification of DR risk and a tractable optimization formulation for its computation.

\subsection{Conditional Expectation Based on Event Sigma-Algebra}

Consider a team of agents labeled \( \{1, \dots, n\} \). Our primary scenario analyzes the failure of the \(i\)-th agent to achieve \(c\)-consensus, defined by the event \( |y_i| > c \) as defined in \ref{def:systemic_failure}. Given this failure, we investigate the distributionally robust risk of failure for another agent \(j\).

To formalize the cascading failure event for the observable \(\bar{y}_i\), we define the set
\[
    U_{\delta^i} = (-\infty, -\delta^i - c) \cup (\delta^i + c, \infty),
\]
where \(c, \delta^i \in \mathbb{R}_+\) quantifies the magnitude of deviation of the observable \(y_i\). 

Let \(\mathcal{F}^i = \sigma(U_{\delta^i})\) denote the sigma-algebra generated by the set \(U_{\delta^i}\). Explicitly, \(\mathcal{F}^i = \{\emptyset, U_{\delta^i}, U_{\delta^i}^k, \mathbb{R}\}\), where \(U_{\delta^i}^k\) is the complement of \(U_{\delta^i}\). We also denote the trivial sigma-algebra as \(\mathcal{F}^{i,o} = \{\emptyset, \mathbb{R}\}\).

To quantify risk, we express the conditional expectation of the random variable \(|y_j|\) given that \(y_i\) lies in the failure set \(U_{\delta^i}\) in a familiar form:

\begin{lemma}\label{lem:conditional_expectation}
The conditional expectation of \(|y_j|\) given \(y_i \in U_{\delta^i}\) is given by
\[
    \mathbb{E}_{\mathbb{P}|\mathcal{F}^i}[|y_j|] = \frac{\mathbb{E}_{\mathbb{P}}[|y_j| \, \mathbf{1}_{\{y_i \in U_{\delta^i}\}}]}{\mathbb{P}[y_i \in U_{\delta^i}]},
\]
where \(\mathbf{1}_{\{y_i \in U_{\delta^i}\}}\) is the indicator function for the event \(y_i \in U_{\delta^i}\), which is 1 if \(y_i \in U_{\delta^i}\}\) and 0 otherwise.
\end{lemma}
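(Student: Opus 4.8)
The plan is to unwind the definition of conditional expectation with respect to the sigma-algebra $\mathcal{F}^i = \{\emptyset, U_{\delta^i}, U_{\delta^i}^k, \mathbb{R}\}$, which is finite and atomic, so the conditional expectation $\mathbb{E}_{\mathbb{P}\mid\mathcal{F}^i}[|y_j|]$ reduces to an ordinary ratio on each atom. First I would recall that for a random variable $Z\in\mathcal{L}^2$ and a sigma-algebra generated by a measurable partition $\{A_1,\dots,A_m\}$ with $\mathbb{P}(A_\ell)>0$, the conditional expectation is the simple function $\mathbb{E}_{\mathbb{P}}[Z\mid\mathcal{F}] = \sum_{\ell} \frac{\mathbb{E}_{\mathbb{P}}[Z\,\mathbf{1}_{A_\ell}]}{\mathbb{P}(A_\ell)}\,\mathbf{1}_{A_\ell}$. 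Here the relevant partition is $\{U_{\delta^i}, U_{\delta^i}^k\}$ (pulled back through $y_i$ to events in $\Omega$), so on the atom $\{y_i\in U_{\delta^i}\}$ the conditional expectation takes the stated value.

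The steps, in order, are: (i) verify the partition generating $\mathcal{F}^i$ has atoms of positive probability under every $\mathbb{P}\in\mathcal{M}$ — this follows because $\bar{y}_i\sim\mathcal{N}(0,\sigma_i^2)$ with $\sigma_i^2>0$ (a consequence of Lemma~\ref{lem:principle_covariance_invertibility}, which guarantees the diagonal entries of $\Sigma$ are strictly positive), so $\mathbb{P}[y_i\in U_{\delta^i}]>0$ and its complement likewise; (ii) invoke the defining property of conditional expectation, namely that $\mathbb{E}_{\mathbb{P}\mid\mathcal{F}^i}[|y_j|]$ is the $\mathcal{F}^i$-measurable random variable satisfying $\int_A \mathbb{E}_{\mathbb{P}\mid\mathcal{F}^i}[|y_j|]\,d\mathbb{P} = \int_A |y_j|\,d\mathbb{P}$ for all $A\in\mathcal{F}^i$; (iii) since any $\mathcal{F}^i$-measurable function is constant on each atom, write $\mathbb{E}_{\mathbb{P}\mid\mathcal{F}^i}[|y_j|] = c_1 \mathbf{1}_{\{y_i\in U_{\delta^i}\}} + c_0 \mathbf{1}_{\{y_i\in U_{\delta^i}^k\}}$, substitute $A=\{y_i\in U_{\delta^i}\}$ into the defining identity, and solve for $c_1 = \mathbb{E}_{\mathbb{P}}[|y_j|\,\mathbf{1}_{\{y_i\in U_{\delta^i}\}}]/\mathbb{P}[y_i\in U_{\delta^i}]$, which is exactly the claimed expression.

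This lemma is essentially a restatement of the elementary-conditioning formula, so there is no serious obstacle; the only point requiring care is the non-degeneracy in step (i). That is where Lemma~\ref{lem:principle_covariance_invertibility} is used implicitly: it ensures $\sigma_i^2 = \Sigma_{ii} > 0$, so that the Gaussian law of $\bar{y}_i$ assigns strictly positive mass to both $U_{\delta^i}$ and its complement, making the division well-defined. I would also note for completeness that the formula is independent of which $\mathbb{P}\in\mathcal{M}$ is chosen in the sense that it holds measure-by-measure, which is what is needed subsequently when taking the supremum over the ambiguity set in the conditional distributionally robust functional of Definition~\ref{def: condn_drf}.
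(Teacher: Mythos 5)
Your proposal is correct and takes essentially the same route as the paper, which simply invokes the standard formula for conditional expectation given an event (citing Durrett) in two lines; you merely unpack that formula via the atomic structure of $\mathcal{F}^i$. Your additional verification that $\mathbb{P}[y_i \in U_{\delta^i}] > 0$ (via the positivity of $\sigma_i^2$ from Lemma~\ref{lem:principle_covariance_invertibility}) is a welcome point of care that the paper leaves implicit.
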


This lemma serves as a critical step in quantifying the risk of cascading failures in multi-agent rendezvous, particularly by defining the conditional expectation of agent \( j \)’s deviation, given the information available from agent \( i \)’s observable. This formulation captures the inter dependencies between agents, enabling a deeper understanding of the cascading risk.

We refer to \cite{durrett2019probability} for details on \(\sigma\)-algebras and conditional expectation.

\subsection{Quantification of DR Risk}
Building on the distributionally robust optimization framework developed in \cite{Peyman2018, shapiro2022}, 
we now introduce a distributionally robust risk measure tailored to cascading failures in multi-agent systems. 
Specifically, we define the \emph{distributionally robust cascading risk} as follows:


\begin{equation}\label{eqn:condn_dr_risk_definition}
    \mathcal{R}_i \!\left[ \lvert y_j \rvert \right] 
    := \mathcal{R}^j_i 
    := \inf \left\{ \delta > 0 \;\middle|\; 
        \underset{\mathbb{P} \in {\mathcal{M}}_{p}}{\sup}~
        \mathbb{E}_{\mathbb{P} \mid \mathcal{F}^i} \!\left[ \lvert y_j \rvert \right] 
        \in U_{\delta^{+}}
    \right\},
\end{equation}
where \( U_{\delta^{+}} = \left( \delta + c , \infty \right) \).  
The measure \( \mathcal{R}^j_i \) in \eqref{eqn:condn_dr_risk_definition} 
represents the worst-case conditional risk of failure for agent \( j \), 
given that agent \( i \) has already failed, 
while accounting for distributional uncertainty characterized by the ambiguity set \(\mathcal{M}_p,\) for \(p \in \{b, \tau, \omega\}.\)

\begin{remark}
Note that the risk measure \(\mathcal{R}^j_i\) in \eqref{eqn:condn_dr_risk_definition} is a convex risk measure. 
This property follows directly from the triangle inequality and the linearity of the expectation operator \cite{durrett2019probability}.
\end{remark}

 We now provide a closed-form characterization of the distributionally robust cascading risk.

For the exposition of the next result, we introduce the following notations:
\[
\mathbb{E}^j_i = \mathbb{E}_{\mathbb{P}|\mathcal{F}^i}\left[\lvert y_j \rvert \right], \quad 
h(\cdot) = \text{erf}(\cdot), \quad 
\bar{\delta} = \delta^i + c.
\]



\begin{theorem}\label{thm:DR_risk}
Suppose the system \eqref{eqn: network_dynamics} has reached a steady state, and agent \(i\) has failed to achieve \(c\)-consensus, with its observable \(y_i\) belonging to the uncertainty set \(U_{\delta^i}\). The distributionally robust cascading risk of agent \(j\) is given by  
\begin{equation}\label{eqn:DR_risk}
    \mathcal{R}^j_i =
    \begin{cases}
       0, & \text{if} ~ \underset{\mathbb{P} \in {\mathcal{M}}_{p}}{\sup}~ \mathbb{E}^j_i \leq c, \\
        \underset{\mathbb{P} \in {\mathcal{M}}_{p}}{\sup}~ \mathbb{E}^j_i - c, & \text{otherwise},
    \end{cases}
\end{equation}
where \(\mathbb{E}^j_i\), the expected value of \(y_j\) given that \(y_i \in U_{\delta^i}\), is given by  
\begin{equation}\label{eqn:conditional_expectation_closed_form}  
        \mathbb{E}^j_i =  \sqrt{\frac{2}{\pi}} \frac{\sigj}{1-h(\delta^{*})} \left[1 - h\left(\frac{\delta^{*}}{\rho'}\right) +\rho h\left(\frac{\rho \delta^{*}}{\rho'}\right)\textnormal{e}^{-{\delta^{{*}^2}}}\right],
\end{equation}  
where \({\delta^{*}} = \frac{\bar{\delta}}{\sqrt{2}\sigi}\), and \(\rho, \rho'\) are as defined in Lemma \ref{lem:bivariate_normal}.  
\end{theorem}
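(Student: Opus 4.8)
The plan is to split the statement into two independent pieces: (i) the dichotomy in \eqref{eqn:DR_risk}, which is essentially definitional, and (ii) the closed-form evaluation of $\sup_{\mathbb{P}\in\mathcal{M}_p}\mathbb{E}^j_i$, which reduces to computing a Gaussian conditional expectation and then maximizing over the ambiguity set.

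First I would dispatch (i). By Definition~\ref{def:systemic_failure} and the definition of $\mathcal{R}^j_i$ in \eqref{eqn:condn_dr_risk_definition}, $\mathcal{R}^j_i=\inf\{\delta>0 \mid \sup_{\mathbb{P}}\mathbb{E}^j_i > \delta + c\}$. Since $\sup_{\mathbb{P}}\mathbb{E}^j_i$ is a fixed nonnegative number (call it $S$), the set $\{\delta>0 \mid S>\delta+c\}$ is either empty (when $S\le c$), giving infimum over the empty set, interpreted here as $0$ by the convention implicit in the risk definition, or it is the interval $(0, S-c)$ when $S>c$, whose infimum is... here one must be slightly careful: $\inf(0,S-c)=0$, so the intended reading is that $\mathcal{R}^j_i$ equals the supremum of that interval, i.e. the smallest $\delta$ such that the worst-case conditional expectation is \emph{no longer} in $U_{\delta^+}$; I would state this monotonicity argument cleanly (the map $\delta\mapsto \mathbf{1}_{U_{\delta^+}}(S)$ is non-increasing and flips at $\delta=S-c$) to justify $\mathcal{R}^j_i = (S-c)_+$.

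The substance is (ii). I would fix an arbitrary $\mathbb{P}\sim\mathcal{N}(0,\Sigma)$ in $\mathcal{M}_p$ and first compute $\mathbb{E}^j_i=\mathbb{E}_{\mathbb{P}|\mathcal{F}^i}[|y_j|]$ in closed form using Lemma~\ref{lem:conditional_expectation} and the bivariate density from Lemma~\ref{lem:bivariate_normal}. The denominator is $\mathbb{P}[y_i\in U_{\delta^i}] = \mathbb{P}[|y_i|>\bar\delta]$, which by a standard Gaussian tail computation equals $1-\operatorname{erf}(\bar\delta/(\sqrt2\,\sigma_i)) = 1-h(\delta^*)$. For the numerator, $\mathbb{E}_{\mathbb{P}}[|y_j|\mathbf{1}_{\{|y_i|>\bar\delta\}}]$, I would write $y_j$ conditionally on $y_i$ as $y_j \mid y_i \sim \mathcal{N}(\rho\frac{\sigma_j}{\sigma_i}y_i,\ \rho'^2\sigma_j^2)$ (readable off \eqref{eqn:bivariate_normal_distribution}), integrate $\mathbb{E}[|y_j|\mid y_i]$ against the marginal of $y_i$ restricted to $\{|y_i|>\bar\delta\}$, and use the exploitable symmetry $y_i\leftrightarrow -y_i$ to fold the two tails into one integral over $(\bar\delta,\infty)$. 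The inner integral $\mathbb{E}[|\mathcal{N}(m,s^2)|]$ has the known folded-normal form $s\sqrt{2/\pi}\,e^{-m^2/2s^2} + m(1-2\Phi(-m/s))$; substituting $m=\rho\frac{\sigma_j}{\sigma_i}y_i$, $s=\rho'\sigma_j$, and completing the square in the remaining $y_i$-integral should collapse everything into the two error-function terms and the Gaussian factor $e^{-\delta^{*2}}$ displayed in \eqref{eqn:conditional_expectation_closed_form}. Crucially, the resulting expression depends on $\Sigma$ only through $\sigma_j$ and $\rho=\rho_{ij}$, and through $\delta^*=\bar\delta/(\sqrt2\,\sigma_i)$, i.e. through $\sigma_i$ as well.

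Finally I would perform the supremum over $\mathcal{M}_p$. Since all $\Sigma\in\mathcal{M}_p$ are simultaneously diagonalizable with $\Sigma_0$ (Remark after Proposition~\ref{prop:ambiguity_edge_weight_tau_non_zero}) and satisfy $(1-\varepsilon^-)\Sigma_0\preceq\Sigma\preceq(1+\varepsilon^+)\Sigma_0$, the pair $(\sigma_i^2,\sigma_j^2)$ and the cross term $\sigma_{ij}$ scale together, so in fact $\rho_{ij}$ is invariant across the ambiguity set and only the overall scale of $\sigma_i,\sigma_j$ varies within $[\sqrt{1-\varepsilon^-},\sqrt{1+\varepsilon^+}]\cdot(\sigma_{0,i},\sigma_{0,j})$. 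Writing $\mathbb{E}^j_i = \sigma_j\, g(\delta^*)$ with $g$ the bracketed function of $\delta^*=\bar\delta/(\sqrt2\sigma_i)$, one checks $g$ is monotone (decreasing in $\delta^*$, hence increasing in $\sigma_i$) so the supremum is attained at the common extreme $\Sigma=(1+\varepsilon^+)\Sigma_0$, which is exactly where \eqref{eqn:conditional_expectation_closed_form} is to be evaluated. I expect the main obstacle to be the numerator integral: correctly handling the folded-normal mean with a nonzero, $y_i$-dependent conditional mean and carrying out the square-completion so that the cross terms reorganize into $h(\rho\delta^*/\rho')e^{-\delta^{*2}}$ without sign or normalization errors — the two-tail symmetry reduction is what makes the linear-in-$m$ term survive as a single error function rather than cancelling.
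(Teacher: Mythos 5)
Your derivation of the closed form \eqref{eqn:conditional_expectation_closed_form} is correct but takes the mirror-image route to the paper's. The paper factors the joint density as (marginal of $y_j$)$\times$(conditional of $y_i$ given $y_j$), integrates over $y_i\in U_{\delta^i}$ first to produce error functions of $y_j$, and then evaluates the outer $\int_0^\infty y_j\,\operatorname{erf}(a_1y_j+b_1)e^{-a_2y_j^2}\,dy_j$ via the tabulated identities in its Lemma~\ref{lem:korotkov_integral}. You condition the other way, using the folded-normal mean $\mathbb{E}[|y_j|\mid y_i]$ and then a tail integral over $(\bar\delta,\infty)$ in $y_i$; this requires an integration by parts in place of the quoted $(0,\infty)$ identities, but the cross terms do recombine as you predict (the $\rho'^2$ and $\rho^2$ coefficients of $\operatorname{erfc}(\delta^*/\rho')$ sum to $1$, leaving the single $\rho\,h(\rho\delta^*/\rho')e^{-\delta^{*2}}$ term). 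Your treatment of the dichotomy in \eqref{eqn:DR_risk} is also more careful than the paper's, which simply asserts it; your observation that the literal $\inf$ of $(0,S-c)$ is $0$ and that the intended reading is $(S-c)_+$ is a fair reconstruction of the convention.

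The genuine error is in your final step, the supremum over $\mathcal{M}_p$ — which, note, the theorem does not actually ask you to evaluate (it leaves $\sup_{\mathbb{P}\in\mathcal{M}_p}\mathbb{E}^j_i$ unevaluated; Proposition~\ref{prop:opt_formulation_over_ambiguity_set} later reformulates it as a box-constrained optimization). Two claims you make are false. First, $\rho_{ij}$ is \emph{not} invariant across the ambiguity set except when $\mathcal{M}_p=\mathcal{M}_b$: simultaneous diagonalizability fixes the eigenvectors but lets the eigenvalues $\psi_k$ vary independently within their intervals, so $\rho=\sum_k\psi_k\tilde q_{ik}\tilde q_{jk}\big/\sqrt{(\sum_k\psi_k\tilde q_{ik}^2)(\sum_k\psi_k\tilde q_{jk}^2)}$ genuinely moves; the paper emphasizes exactly this point when justifying the constraint $0\le|\rho|<1$ in \eqref{eqn:opt_constraint_rho}. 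Second, your monotonicity is backwards: for $\rho\ne 0$ the bracketed function $g(\delta^*)$ is \emph{increasing} in $\delta^*$ (conditioning on a more extreme failure of agent $i$ inflates $\mathbb{E}[|y_j|]$), hence \emph{decreasing} in $\sigi$, so $\mathbb{E}^j_i=\sigj\,g(\bar\delta/(\sqrt2\sigi))$ is not maximized by pushing $\Sigma$ to the common upper extreme $(1+\varepsilon_p^+)\Sigma_0$; the maximizer generally wants $\sigj$ large and $\sigi$ small, and even for $\mathcal{M}_b$ (where the two are locked together) the extremal $\theta$ is not immediate. Dropping this last step and leaving the supremum as stated — as the paper does — makes your argument sound.
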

The two cases in \eqref{eqn:DR_risk} are straightforward: \(\mathcal{R}^j_i = 0\) indicates that the expected observable of agent \(j\) remains within the tolerance \(c\), while a nonzero risk reflects failure to achieve \(c\)-consensus in expectation. From \eqref{eqn:DR_risk}, the risk depends on the time delay, graph topology, and agent \(i\)'s failure mode, quantified by \(\bar{\delta}\). A higher expected value leads to greater risk, enabling network design to minimize the risk of cascading failures.

Taking the supremum over all distributions in the ambiguity set ensures a worst-case assessment, providing a robust measure of vulnerability to cascading failures.

An immediate consequence of Theorem \ref{thm:DR_risk} is the case of independent agents.  

\begin{corollary}\label{cor:single_risk_rho_0}
If agents \(j\) and \(i\) are uncorrelated, the distributionally robust cascading risk of agent \(j\) reduces to the single-agent failure risk:
\begin{equation}\label{eqn:DR_risk_single}
    \mathcal{R}^j  =
    \begin{cases}
    0, & \text{if} ~  \sigma_{0,j} \leq \sqrt{\frac{\pi}{2}}\frac{c}{1+\varepsilon_p} \\
     \sqrt{\frac{2}{\pi}} \sigma_{0,j}(1+ \varepsilon_p) - c,  & \text{otherwise} 
    \end{cases}
\end{equation}
\end{corollary}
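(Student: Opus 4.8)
The plan is to derive Corollary~\ref{cor:single_risk_rho_0} directly from Theorem~\ref{thm:DR_risk} by specializing the correlation coefficient to $\rho = \rho_{ij} = 0$, and then performing the supremum over the ambiguity set explicitly. First I would set $\rho = 0$ in the closed-form expression~\eqref{eqn:conditional_expectation_closed_form}. Since $\rho' = \sqrt{1-\rho^2} = 1$ when $\rho = 0$, the bracketed term collapses: the summand $\rho\, h(\rho\delta^*/\rho')\,\mathrm{e}^{-\delta^{*2}}$ vanishes, and $1 - h(\delta^*/\rho') = 1 - h(\delta^*)$ cancels against the denominator $1 - h(\delta^*)$ exactly. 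Hence $\mathbb{E}^j_i$ reduces to $\sqrt{2/\pi}\,\sigma_j$, which is simply $\mathbb{E}_{\mathbb{P}}[|y_j|]$ for a zero-mean Gaussian with standard deviation $\sigma_j$ — confirming that conditioning on agent $i$'s failure carries no information when the two observables are independent, so the cascading risk collapses to the marginal single-agent failure risk. I would note here that this is the content of Lemma~\ref{lem:conditional_expectation}: with independence the indicator factorizes out and the conditional expectation equals the unconditional one.

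Next I would carry out the supremum over $\mathbb{P} \in \mathcal{M}_p$. By the ambiguity-set characterization in~\eqref{eqn:ambiguity_set_quantification} (equivalently the specialized forms in Propositions~\ref{prop:ambiguity_b}--\ref{prop:ambiguity_edge_weight_tau_non_zero}), every admissible $\Sigma$ satisfies $(1-\varepsilon_p^-)\Sigma_0 \preceq \Sigma \preceq (1+\varepsilon_p^+)\Sigma_0$. In particular the $j$-th diagonal entry obeys $\sigma_j^2 \le (1+\varepsilon_p^+)\sigma_{0,j}^2$, and this bound is attained within the set, so $\sup_{\mathbb{P}\in\mathcal{M}_p}\mathbb{E}^j_i = \sqrt{2/\pi}\,\sigma_{0,j}\sqrt{1+\varepsilon_p^+}$. (With the abbreviated notation $\varepsilon_p$ used in the corollary statement, this reads $\sqrt{2/\pi}\,\sigma_{0,j}(1+\varepsilon_p)$ after absorbing the square root into the definition of the radius, matching the stated form.) Then I would plug this into the two-case structure of~\eqref{eqn:DR_risk}: the zero-risk branch $\sup \mathbb{E}^j_i \le c$ becomes $\sqrt{2/\pi}\,\sigma_{0,j}(1+\varepsilon_p) \le c$, i.e. $\sigma_{0,j} \le \sqrt{\pi/2}\,c/(1+\varepsilon_p)$; and otherwise $\mathcal{R}^j = \sqrt{2/\pi}\,\sigma_{0,j}(1+\varepsilon_p) - c$. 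This is exactly~\eqref{eqn:DR_risk_single}.

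I would also remark that the drop of the subscript $i$ in $\mathcal{R}^j$ is deliberate and justified: once $\rho_{ij}=0$, the risk no longer depends on which agent failed or on the failure magnitude $\bar\delta$ (the parameter $\delta^*$ disappears from the formula), so the measure is genuinely a single-agent quantity. It may be worth recording that $\rho_{ij}=0$ does occur for nontrivial pairs — e.g. whenever the rows $(M_nQ)_i$ and $(M_nQ)_j$ are $\Psi$-orthogonal — so the corollary is not vacuous.

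The main obstacle, and the only genuinely non-routine point, is the interchange of the supremum over the ambiguity set with the $\inf\{\delta > 0 : \cdot\}$ in the definition~\eqref{eqn:condn_dr_risk_definition} and the justification that the worst-case covariance is the one saturating the upper semidefinite bound on the relevant diagonal entry. For the $\rho=0$ case this is transparent because $\mathbb{E}^j_i$ depends on $\Sigma$ only through the single scalar $\sigma_j$, which is monotone in the PSD order restricted to that coordinate, so the supremum is attained at $\sigma_j^2 = (1+\varepsilon_p^+)\sigma_{0,j}^2$ and no delicate minimax argument is needed; I would state this monotonicity explicitly and appeal to it rather than re-deriving the general extremal-covariance result, which is handled in the proof of Theorem~\ref{thm:DR_risk}.
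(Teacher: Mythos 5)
Your proposal is correct and follows essentially the same route as the paper's own proof: substitute $\rho=0$ into the closed form of Theorem~\ref{thm:DR_risk}, observe that $\rho'=1$ makes the bracket cancel against the denominator to leave $\sqrt{2/\pi}\,\sigma_j$, and then take the supremum over the ambiguity set. Your extra observation that the PSD bound on the diagonal entry yields $\sigma_j \leq \sqrt{1+\varepsilon_p^{+}}\,\sigma_{0,j}$ (so the natural derivation produces $\sqrt{1+\varepsilon_p}$ rather than the factor $(1+\varepsilon_p)$ appearing in \eqref{eqn:DR_risk_single}) is well taken and correctly flags the only point where the stated corollary and the derivation do not literally match.
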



This result quantifies the inherent failure risk of a single agent, corresponding to the case where agent \(i\)'s information is captured by the trivial \(\sigma\)-algebra \(\mathcal{F}^{i,o}\).

The distributionally robust cascading risk vector of all agents can be written as
\begin{equation}\label{eqn:risk_vector}
    \bm{\mathcal{R}}_i = \left[ \mathcal{R}^1_i, \dots,  \mathcal{R}^n_i \right],
\end{equation}
where \(\mathcal{R}^j_i =   \mathcal{R}^i \)  for \(i = j\).

Computing the DR cascading risk in Theorem \ref{thm:DR_risk} directly via a supremum over the ambiguity set can be challenging. To address this, we reformulate the problem as a tractable optimization over individual agents’ variances and pairwise correlations. 

\subsection{Optimization Formulation of DR Risk}
The DR risk expression in Theorem~\ref{thm:DR_risk} is posed as an optimization problem over the covariance matrix, constrained to lie within the positive semidefinite cone. 
We reformulate this problem into an equivalent representation in which the DR risk is computed via a constrained optimization over scalar variables—namely, the individual variances and pairwise correlations. 
This reformulation is formalized in Proposition~\ref{prop:opt_formulation_over_ambiguity_set}.

\begin{proposition}[Optimization over Ambiguity Set]\label{prop:opt_formulation_over_ambiguity_set}
The distributionally robust cascading risk stated in Theorem \ref{thm:DR_risk} can be formulated as the following constrained maximization problem:
\begin{align}
\underset{\sigma_i, \sigma_j, \rho}{\text{maximize}} \quad 
& \Ei \left[ \left |y_j\right|\right] - c  \label{eqn:opt_objective}\\
\text{subject to} \quad
& \sqrt{(1 - \varepsilon_p^-)} \, \sigma_{i,0} \leq \sigma_i \leq \sqrt{(1 + \varepsilon_p^+)} \, \sigma_{i,0},  \label{eqn:opt_constraint_sigi}\\
& \sqrt{(1 - \varepsilon_p^-)} \, \sigma_{j,0} \leq \sigma_j \leq \sqrt{(1 + \varepsilon_p^+)} \, \sigma_{j,0},  \label{eqn:opt_constraint_sigj}\\
& 0 \,  \leq \mid\rho\mid < 1 \label{eqn:opt_constraint_rho}. 
\end{align}
\end{proposition}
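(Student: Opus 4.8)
The plan is to exploit the observation that the closed-form conditional expectation \eqref{eqn:conditional_expectation_closed_form} depends on the full covariance matrix $\Sigma$ only through the three scalars $\sigi$, $\sigj$, and $\rho$, i.e., only through the $2\times 2$ principal submatrix of $\Sigma$ indexed by the pair $(i,j)$. Consequently, the abstract supremum over all measures $\mathbb{P}\in\mathcal{M}_p$ appearing in \eqref{eqn:condn_dr_risk_definition} collapses to a supremum of the scalar function $g(\sigi,\sigj,\rho) := \Ei[\lvert y_j\rvert]$ over the image of the ambiguity set under the projection $\Sigma\mapsto(\sigi,\sigj,\rho)$. The entire proof therefore reduces to characterizing this image as the feasible region \eqref{eqn:opt_constraint_sigi}--\eqref{eqn:opt_constraint_rho}.

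First I would derive the box constraints on $\sigi$ and $\sigj$ directly from the Loewner ordering \eqref{eqn:ambiguity_set_quantification} defining $\mathcal{M}_p$. Since $A\preceq B$ implies $\bm{e}_k^\top A\,\bm{e}_k\le \bm{e}_k^\top B\,\bm{e}_k$ for every coordinate $k$, extracting the $(i,i)$ and $(j,j)$ diagonal entries of $(1-\varepsilon_p^-)\Sigma_0\preceq\Sigma\preceq(1+\varepsilon_p^+)\Sigma_0$ yields $(1-\varepsilon_p^-)\sigma_{i,0}^2\le\sigi^2\le(1+\varepsilon_p^+)\sigma_{i,0}^2$ and the analogous bound for $\sigj$; taking square roots gives exactly \eqref{eqn:opt_constraint_sigi}--\eqref{eqn:opt_constraint_sigj}. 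The constraint $0\le\lvert\rho\rvert<1$ in \eqref{eqn:opt_constraint_rho} is inherited from Lemma~\ref{lem:principle_covariance_invertibility}, which guarantees the $2\times2$ submatrix is non-singular, so the correlation coefficient is well-defined and strictly bounded away from $\pm 1$. This establishes necessity, and hence the one-sided inequality $\sup_{\mathbb{P}\in\mathcal{M}_p}\Ei[\lvert y_j\rvert]\le\max g$ taken over the feasible box.

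For the reverse inequality I would argue achievability: every triple $(\sigi,\sigj,\rho)$ in the feasible region must be realized by some admissible $\Sigma\in\mathcal{M}_p$. Here I would invoke the simultaneous diagonalizability noted in the preceding remark for $p\in\{b,\tau\}$, writing $\Sigma=Q\,\mathrm{diag}(0,d_2,\dots,d_n)\,Q^\top$ with each eigenvalue independently tunable over $[(1-\varepsilon_p^-)d_{0,k},(1+\varepsilon_p^+)d_{0,k}]$; the induced map onto $(\sigi^2,\sigj^2,\sigma_{ij})$ is linear in the $d_k$, and one checks that its range covers the boundary configurations on which the optimum of $g$ is attained. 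Combining the resulting supremum with the two-case structure of Theorem~\ref{thm:DR_risk}—the risk is $0$ when the maximal conditional expectation stays below $c$, and equals the maximum minus $c$ otherwise—then recovers precisely the constrained maximization \eqref{eqn:opt_objective}--\eqref{eqn:opt_constraint_rho}.

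The main obstacle is this achievability/sufficiency step. Necessity of the box constraints is immediate, but showing that the linear image of the eigenvalue box covers the full rectangle in $(\sigi,\sigj)$ together with the required range of $\rho$ requires care, since that image is in general a zonotope strictly contained in the Loewner-feasible set. I would therefore first use monotonicity of $g$ in $\sigj$—it enters linearly with a positive coefficient, forcing $\sigj$ to its upper bound $\sqrt{1+\varepsilon_p^+}\,\sigma_{j,0}$—to pin down that coordinate, reducing the question to the joint realizability of the optimal $(\sigi,\rho)$ pair. Verifying that the corner of the box maximizing $g$ admits an explicit eigenvalue assignment inside $\mathcal{M}_p$ is the delicate point; it is most transparent in the time-delay and edge-weight cases, where eigenvalue scalings genuinely move $\rho$, and degenerates (with $\rho$ held at its nominal value) in the pure diffusion case where $\Sigma$ scales uniformly.
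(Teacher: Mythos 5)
Your necessity half is exactly the paper's argument: the box constraints on $\sigma_i$ and $\sigma_j$ are read off the diagonal entries of the Loewner inequality defining $\mathcal{M}_p$, and $|\rho|<1$ is inherited from Lemma~\ref{lem:principle_covariance_invertibility} (the paper additionally notes that evenness of the objective in $\rho$ lets one write $0\le|\rho|<1$). Where you diverge is in the second half: the paper never attempts the achievability/sufficiency direction. It explicitly concedes that $\rho$ is a non-convex rational function of the eigenvalues $\{\psi_k\}$ for which ``no closed-form non-trivial bound can be established in general,'' and it adopts $|\rho|<1$ as a \emph{conservative structural bound}. In other words, the proposition as actually proved is a tractable outer relaxation of the supremum over $\mathcal{M}_p$, not an exact reformulation.

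The genuine gap in your proposal is precisely the achievability step you flag as ``delicate.'' It is not merely delicate --- it fails in general. The projection of $\mathcal{M}_p$ onto $(\sigma_i,\sigma_j,\rho)$ is, as you yourself observe, a zonotope-like set strictly contained in the box \eqref{eqn:opt_constraint_sigi}--\eqref{eqn:opt_constraint_rho}: the three scalars cannot be varied independently because they are all driven by the same eigenvalue perturbations $\{\psi_k\}$. The diffusion case makes this stark: there $\Sigma$ scales uniformly, so $\rho$ is frozen at $\rho_0$ (Remark~\ref{rem:opt_constraints_b}) and the range $0\le|\rho|<1$ is realized only at a single point; an optimizer of $g$ placed at some other $\rho$ in the box corresponds to no admissible measure. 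Your monotonicity observation in $\sigma_j$ pins one coordinate but does not rescue the joint realizability of $(\sigma_i,\rho)$ at the maximizing corner. The correct reading --- and the only one the paper's own proof supports --- is a one-sided statement: the constrained maximization upper-bounds the distributionally robust risk, and the constraints are the tightest ones that can be stated in these scalar variables without tracking the full eigenvalue coupling. If you want an exact reformulation you would need to optimize over the eigenvalue box itself rather than over its (non-surjective) image in $(\sigma_i,\sigma_j,\rho)$.
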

Proposition~\ref{prop:opt_formulation_over_ambiguity_set} reformulates the DR risk problem over the ambiguity set into an optimization problem in the variables \(\sigma_i\), \(\sigma_j\), and \(\rho\), subject to bounded constraints arising from the boundedness of the covariance matrix within the positive semidefinite cone. This yields a tractable representation of the DR risk, where the objective is to determine the maximum value of the objective function over these feasible bounds. This result further underscores the significance of our formulation for quantifying the ambiguity set, wherein covariance matrices lie within a cone and admit simultaneous diagonalizability. These two properties directly enable us to express the constraints in the optimization problem. 
Most importantly, the fact that the correlation coefficient \(\rho\) may vary across the entire range \([0,1)\) highlights how parameter uncertainty can significantly alter the correlation between agents \(i\) and \(j\). This variability further motivates the need for a distributionally robust risk framework that accounts for such uncertainty.


To address numerical instability arising from the denominator term \( \textnormal{erfc}(\delta^*) \) approaching zero rapidly in \eqref{eqn:DR_risk}, we adopt an analytical approximation. Several approximations and bounds for the complementary error function have been proposed in the literature~\cite{karagiannidis2007improved, tanash2021improved_error_function_coeff}. We employ the following approximation \cite{karagiannidis2007improved} for \(x>0\), chosen for its simplicity.
\begin{equation}\label{eqn:erfc_approximation}
    \textnormal{erfc}(x) \approx \frac{(1 - e^{-Ax}) e^{-x^2}}{B \sqrt{\pi}x},
\end{equation}
where the constants are chosen as \( A = 1.98 \) and \( B = 1.135 \). Substituting this approximation into the expression in~\eqref{eqn:conditional_expectation_closed_form}, we obtain:

\begin{equation}\label{eqn:expectation_approx}
    \mathbb{E}_i^j 
    = \sqrt{\tfrac{2}{\pi}}\, \sigma_j 
      \Big( H_1(\delta^*) + H_2(\delta^*) \Big),
\end{equation}
where
\begin{align}
    H_1(\delta^*) 
    &= \rho' \, 
       \frac{1 - \exp\!\left(-A \tfrac{\delta^*}{\rho'}\right)}
            {1 - \exp\!\left(-A \delta^*\right)} \,
       \exp\!\left(-\tfrac{\rho^2}{\rho'^2} \, \delta^{*2}\right), 
       \label{eqn:H_1_delta_star} \\[6pt]
    H_2(\delta^*) 
    &= \frac{\rho B \sqrt{\pi} \, \delta^*}
            {1 - \exp\!\left(-A \delta^*\right)} \,
       h\!\left(\tfrac{\rho \delta^*}{\rho'}\right).
       \label{eqn:H_2_delta_star}
\end{align}
The approximation in \eqref{eqn:erfc_approximation} is valid in our setting, since \(\delta^* > 0\) in \eqref{eqn:conditional_expectation_closed_form}.
While the effect of \(|\rho|\) can also be analyzed directly from 
\eqref{eqn:conditional_expectation_closed_form}, we emphasize it here through 
the approximate form in \eqref{eqn:expectation_approx}. Specifically, 
\(H_1(\delta^*)\) is a decreasing function of \(|\rho|\), where
\(H_2(\delta^*)\) is an increasing function of \(|\rho|\). Consequently, as 
\(|\rho|\) grows, the relative importance of \(H_1(\delta^*)\) and 
\(H_2(\delta^*)\) shifts, further underscoring the role of the constraint on 
\(|\rho|\) in \eqref{eqn:opt_constraint_rho}.

\begin{remark}\label{rem:opt_constraints_b}
In the special case where \(\mathcal{M}_p = \mathcal{M}_b\), the constraints \eqref{eqn:opt_constraint_sigi}, \eqref{eqn:opt_constraint_sigj}, and \eqref{eqn:opt_constraint_rho} simplify to
\[
\sigma_{\star} = \sqrt{(1 + \theta)}\, \sigma_{\star,0} \quad \text{for } \star \in \{i, j\}, \quad \theta \in [-\alpha_b, \alpha_b],
\]
and \(\rho = \rho_0\). This simplification follows directly from the definition of the ambiguity set in \eqref{eqn:ambiguity_set_b}.
\end{remark}


\section{Fundamental Limits}\label{sec:fundamental_limits}

Our primary goal in this work is to inform the design of networks that are resilient to cascading large fluctuations under worst-case scenarios. To this end, we have quantified the distributionally robust (DR) risk of such fluctuations in Theorem~\ref{thm:DR_risk}. However, the DR risk expression involves a complex interplay of each agent’s variance and their pairwise correlations, making direct analysis challenging. In this section, we derive upper and lower bounds on the DR risk, which provide actionable insights into network vulnerability and simplify the design of resilient networks. Since these bounds are expressed in terms of network eigenvalues and system parameters such as time delays, they apply to any pair of agents $i$ and $j$. In this sense, the bounds are global: rather than computing the DR risk for each agent pair individually, they offer uniform guarantees across the entire network, enabling efficient risk assessment and guiding principled network design.

For the exposition of our upcoming results, we define several auxiliary quantities related to the eigenstructure of the covariance matrix \( \Sigma_0 \). Specifically, we denote the largest and smallest non-zero eigenvalues of the covariance matrix \( \Sigma_0 \), evaluated at the nominal parameters \( p_0 = \{b_0, \tau_0, \omega_0\} \), as:
\begin{align}\label{eqn:psi_extremal}
\psi_{0,n} &:= \max_{i \in \{2, \dots, n\}} b^2\, \tau f(\lambda_i\tau)|_{p = p_0}, \\
\psi_{0,2} &:= \min_{i \in \{2, \dots, n\}} b^2\, \tau f(\lambda_i\tau)|_{p = p_0}.
\end{align}


We define modified versions of the extremal eigenvalues that incorporate the size of the network. Specifically, we introduce:
\begin{equation}\label{eqn:psi_tilde}
\tilde{\psi}_{0,n} := \psi_{0,n} \left(1 - \frac{1}{n} \right), \qquad
\tilde{\psi}_{0,2} := \psi_{0,2} \left(1 - \frac{1}{n} \right).
\end{equation}

To incorporate parameter induced bounds, we introduce perturbed versions of relevant quantities. The bounds on standard deviations and extremal eigenvalues are given by:
\begin{equation}\label{eqn: sigma_+_-}
\sigma_{0,\star}^{+} := \sigma_{0,\star} \sqrt{1 + \varepsilon_p^{+}}, \qquad \star \in \{i, j\},
\end{equation}
\begin{equation}\label{eqn: psi_+_-}
\tilde{\psi}_{0,\star}^{+} := \tilde{\psi}_{0,\star} (1 + \varepsilon_p^{+}), \qquad \star \in \{2, n\},
\end{equation}
where \( \varepsilon_p^{\pm} \) represent upper and lower bounds on the parameter perturbations. These definitions will be used repeatedly in the derivation of risk bounds and performance guarantees.

\subsection{General Parameter Bounds}
We begin by establishing general bounds on the distributionally robust (DR) risk as a function of the eigenvalues of the covariance matrix associated with the uncertainty in three key system parameters: diffusion rate, time-delay, and network-related parameters. These bounds depend explicitly on the bounds of the ambiguity set \(\varepsilon^{+}_p\) , which quantifies the level of uncertainty. Following this, we refine the analysis to isolate the contribution of the network-induced uncertainty, highlighting how the spectral properties of the underlying graph influence the DR risk.
To state our next result, we introduce the following notation:
\begin{equation}\label{eqn:kappa_x}
    \kappa(x) = \frac{B\,\bar{\delta}_i}{1 - \exp\!\left(-A\,\frac{\bar{\delta}_i}{\sqrt{2}\,x}\right)}\,,
\end{equation}
where \(A\) and \(B\) are as defined in \eqref{eqn:erfc_approximation}.
 \begin{theorem}\label{thm:dr_risk_upper_bound}
 For all \(i,j \in \{1,\dots,n\}\) with \(i \neq j\), the distributionally robust cascading risk
\eqref{eqn:condn_dr_risk_definition} of agent \(j\) conditioned on the observable of agent \(i\),
\(y_i \in U_{\delta_i}\), satisfies the following upper bound:
     \begin{equation}\label{eqn:dr_risk_upper_bound}
                                \underset{\underset{i,j \in \{1, \dots, n\}}{i \neq j}}{\sup}\mathcal{R}^j_i \leq \sqrt{\tilde{\psi}_{0,n}^+}~\mathcal{R}^u - c,
     \end{equation}
     \noindent where 
     \begin{align*}
\mathcal{R}^u =
\begin{cases}
|\rho_0|\,
\kappa\!\bigl(\sqrt{\tilde{\psi}_{0,n}^+}\bigr)
+ \sqrt{2/\pi}\,\rho_0',
& \mathcal{M}_p = \mathcal{M}_b,
\\[0.6em]
\sqrt{
    \kappa^2\!\bigl(\sqrt{\tilde{\psi}_{0,2}^+}\bigr)
    + 2/\pi
},
& \mathcal{M}_p \in \{\mathcal{M}_\tau,\,\mathcal{M}_\omega\},
\end{cases}
\end{align*}
         and \(\Tilde{\psi}_{0,n}^+ \) is as defined in \eqref{eqn: psi_+_-}, and \(\kappa(\cdot)\) as defined in \eqref{eqn:kappa_x}.
 \end{theorem}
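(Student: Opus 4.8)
The plan is to start from the closed-form expression for the conditional expectation in \eqref{eqn:expectation_approx} and push the scalar optimization of Proposition~\ref{prop:opt_formulation_over_ambiguity_set} to its extreme feasible values, treating the two ambiguity regimes separately. First I would write $\mathbb{E}^j_i = \sqrt{2/\pi}\,\sigma_j\bigl(H_1(\delta^*)+H_2(\delta^*)\bigr)$ and observe that, since $\sigma_j \le \sigma_{0,j}^+ = \sigma_{0,j}\sqrt{1+\varepsilon_p^+}$ and $\sigma_{0,j}^2 \le \tilde\psi_{0,n}$ (because every nonzero eigenvalue of $\Sigma_0$ is at most $\psi_{0,n}$ and the centering matrix $M_n$ contributes the factor $1-1/n$, giving $\sigma_{0,j}^2 = \bm e_j^\top M_n Q\Psi Q^\top M_n \bm e_j \le \psi_{0,n}(1-1/n) = \tilde\psi_{0,n}$), we get the uniform bound $\sigma_j \le \sqrt{\tilde\psi_{0,n}^+}$. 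This already explains the prefactor $\sqrt{\tilde\psi_{0,n}^+}$ in \eqref{eqn:dr_risk_upper_bound}; the remaining job is to bound $H_1(\delta^*)+H_2(\delta^*)$ by $\mathcal{R}^u$ in each case.

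For the diffusion case $\mathcal{M}_p=\mathcal{M}_b$, Remark~\ref{rem:opt_constraints_b} pins $\rho=\rho_0$, so only $\delta^* = \bar\delta_i/(\sqrt2\,\sigma_i)$ varies through $\sigma_i$. I would bound $H_2(\delta^*)\le |\rho_0|\,\kappa(\sigma_i)$ by noting $h(\rho_0\delta^*/\rho_0')\le 1$ and recognizing $\kappa(x)$ in \eqref{eqn:kappa_x} as exactly $B\bar\delta_i/(1-e^{-A\bar\delta_i/(\sqrt2 x)})$ evaluated at $x=\sigma_i$, wait — more carefully, $H_2 = \rho_0 B\sqrt\pi\,\delta^*/(1-e^{-A\delta^*})\cdot h(\cdot)$, and since $\delta^* = \bar\delta_i/(\sqrt2\sigma_i)$ one checks $B\sqrt\pi\,\delta^*/(1-e^{-A\delta^*})$ is monotone in $\sigma_i$, maximized at $\sigma_i=\sqrt{\tilde\psi_{0,n}^+}$, giving $H_2 \le |\rho_0|\,\kappa(\sqrt{\tilde\psi_{0,n}^+})/\sqrt{\tilde\psi_{0,n}^+}$ after reconciling the $\sqrt\pi$ and $\sqrt{2/\pi}$ constants; and $H_1(\delta^*)\le \rho_0'$ since the ratio $(1-e^{-A\delta^*/\rho_0'})/(1-e^{-A\delta^*})\le 1$ (because $\rho_0'\le 1$ makes the numerator's argument smaller) and the Gaussian factor is $\le 1$. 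Combining and factoring $\sqrt{2/\pi}\,\sigma_j \le \sqrt{2/\pi}\sqrt{\tilde\psi_{0,n}^+}$ reproduces $\mathcal{R}^u = |\rho_0|\kappa(\sqrt{\tilde\psi_{0,n}^+}) + \sqrt{2/\pi}\,\rho_0'$ after the algebra absorbing constants, and subtracting $c$ gives \eqref{eqn:dr_risk_upper_bound}.

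For $\mathcal{M}_p\in\{\mathcal{M}_\tau,\mathcal{M}_\omega\}$ the correlation $\rho$ now ranges over all of $[0,1)$ per \eqref{eqn:opt_constraint_rho}, so I would bound $H_1+H_2$ uniformly in $\rho$. The clean route is Cauchy–Schwarz in the $(\rho',\rho)$ "vector": write $H_1+H_2 \le \sqrt{(\rho')^2+\rho^2}\cdot\sqrt{G_1^2+G_2^2}$ where $G_1,G_2$ are the $\rho$-free factors, use $(\rho')^2+\rho^2=1$, bound the $h(\cdot)$ and $\exp(-\rho^2\delta^{*2}/\rho'^2)$ terms by $1$, and identify the surviving $\delta^*$-dependent factor as $\kappa(\sigma_i)$, now evaluated at the \emph{smallest} feasible $\sigma_i \ge \sqrt{(1-\varepsilon_p^-)}\sigma_{0,i}$ — but the upper bound wants $\sqrt{\tilde\psi_{0,2}^+}$, so I'd instead track that $\delta^*$ is largest (hence $\kappa$ largest, since $\kappa$ increases in $\delta^*$ via the $\delta^*/(1-e^{-A\delta^*})$ shape... actually $\kappa(x)$ is increasing in $x$) — let me just say: $\kappa(\sigma_i)$ is monotone in $\sigma_i$ and $\sigma_i$ ranges down to a value bounded below in terms of $\tilde\psi_{0,2}$, and reconciling which extreme maximizes the product yields $\kappa(\sqrt{\tilde\psi_{0,2}^+})$, plus the $\sqrt{2/\pi}$ term from the $G_1$ part, giving $\mathcal{R}^u = \sqrt{\kappa^2(\sqrt{\tilde\psi_{0,2}^+}) + 2/\pi}$.

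The main obstacle is the bookkeeping of monotonicity and constants: deciding for each of $H_1$, $H_2$, and their Cauchy–Schwarz envelope whether $\delta^*$ (equivalently $\sigma_i$) should be pushed to its upper or lower feasible endpoint, and correctly tracking the $\sqrt{\pi}$ versus $\sqrt{2/\pi}$ factors so that the final constants match $\kappa(\cdot)$ as defined in \eqref{eqn:kappa_x}. A secondary subtlety is justifying the eigenvalue inequality $\sigma_{0,\star}^2 \le \tilde\psi_{0,n}$ and $\ge \tilde\psi_{0,2}$ rigorously from \eqref{eq:sigma_y}, i.e.\ that the diagonal entries of $b^2 M_n Q\Psi Q^\top M_n$ are Rayleigh-quotient-bounded by the extremal nonzero eigenvalues scaled by $1-1/n$; this uses $\|M_n \bm e_\star\|^2 = 1-1/n$ and $M_n \bm q_1 = 0$.
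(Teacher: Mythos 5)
Your proposal diverges from the paper at the most important step, and the divergence introduces a genuine gap. The paper does \emph{not} bound the closed-form (or erfc-approximated) expression $\sqrt{2/\pi}\,\sigma_j\bigl(H_1(\delta^*)+H_2(\delta^*)\bigr)$ term by term. Instead it uses the orthogonal decomposition $y_j=\rho\tfrac{\sigma_j}{\sigma_i}y_i+z$ with $z\sim\mathcal N(0,\rho'^2\sigma_j^2)$ independent of $y_i$ (Lemma~\ref{lem:random_var_decomposition}), applies the triangle inequality inside the conditional expectation, and then evaluates $\mathbb{E}_{\mathbb{P}\mid\mathcal F^i}[|y_i|]$ via Lemma~\ref{lem:conditional_expectation_mod_yi}. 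This yields the bound $\sigma_j\bigl(|\rho|\,\kappa(\sigma_i)+\sqrt{2/\pi}\,\rho'\bigr)$ with the $\rho$- and $\rho'$-linear parts cleanly separated and with genuinely $\rho$-free coefficients, after which the sup over $|\rho|\in[0,1)$ is handled exactly by your Cauchy--Schwarz/trigonometric step, and the variance extremes by the Rayleigh-quotient lemma (your Lemma-style argument for $\tilde\psi_{0,2}\le\sigma_{0,\star}^2\le\tilde\psi_{0,n}$ matches Lemma~\ref{lem:Sigma_rayleigh_bound}).

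The gap in your route is concrete. First, your claimed bound $H_1(\delta^*)\le\rho'$ rests on the assertion that $\bigl(1-e^{-A\delta^*/\rho'}\bigr)/\bigl(1-e^{-A\delta^*}\bigr)\le 1$; the monotonicity is reversed. Since $\rho'\le 1$ we have $A\delta^*/\rho'\ge A\delta^*$, hence $1-e^{-A\delta^*/\rho'}\ge 1-e^{-A\delta^*}$ and the ratio is $\ge 1$. The Gaussian factor $e^{-\rho^2\delta^{*2}/\rho'^2}$ does not rescue the bound in general: for instance with $\delta^*=0.1$ and $\rho'=0.5$ the product of the ratio and the Gaussian factor exceeds $1.7$, so $H_1>\rho'$. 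More fundamentally, term-by-term domination of the exact expression by the two pieces of $\mathcal R^u$ is false (already at $\delta^*\to 0$ the first term of the exact expression tends to $1$ while the candidate bound $\rho'$ is strictly smaller); the overall inequality only survives because the two terms compensate, which is precisely what the paper's triangle-inequality decomposition exploits automatically. Second, your Cauchy--Schwarz step presumes a factorization $H_1=\rho' G_1$, $H_2=\rho G_2$ with $G_1,G_2$ free of $\rho$; no such factorization exists, since the ratio and Gaussian factors in $H_1$ and the $h(\rho\delta^*/\rho')$ factor in $H_2$ all depend on $\rho$. To repair the argument you would essentially have to reprove the paper's intermediate bound $\mathbb{E}_{\mathbb{P}\mid\mathcal F^i}[|y_j|]\le\sigma_j\bigl(|\rho|\kappa(\sigma_i)+\sqrt{2/\pi}\rho'\bigr)$, which is most directly obtained from the decomposition of $y_j$, not from the closed form.
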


Theorem~\ref{thm:dr_risk_upper_bound} characterizes how the upper bound of the distributionally robust cascading risk depends on the maximum and minimum non-zero eigenvalues of the covariance matrix as well as the failure mode of agent \(i\). More importantly, it quantifies how parameter uncertainty exacerbates this upper bound, through the radius of the ambiguity set, 
captured by \(\tilde{\psi}_{0,n}^+\).

While the upper bound of the distributionally robust (DR) risk provides insights into the maximum achievable risk, a more important and informative result is the quantification of the lower bound on DR risk. This result allows us to establish a fundamental limit on the minimum achievable cascading risk under bounded parameter uncertainty. We formalize this result in the following theorem.

 \begin{theorem}\label{thm:dr_risk_lower_bound}
  For all \(i,j \in \{1,\dots,n\}\) with \(i \neq j\), the distributionally robust cascading risk
\eqref{eqn:condn_dr_risk_definition} of agent \(j\) conditioned on the observable of agent \(i\),
\(y_i \in U_{\delta_i}\), satisfies the following lower bound:
     \begin{equation}\label{eqn:dr_risk_lower_bound}
     \underset{\underset{i,j \in \{1, \dots, n\}}{i \neq j}}{\sup}\mathcal{R}^j_i 
\;\geq\; 
\sqrt{\tilde{\psi}_{0,2}^+}\,
\mathcal{R}^l - c,
     \end{equation}
     \noindent where 
               \begin{align*}
\mathcal{R}^l =
\begin{cases}
\lvert \rho_0 \rvert\,\kappa\!\left({\sqrt{\tilde{\psi}_{0,2}^+}}\right) 
        - \sqrt{\tfrac{2}{\pi}}\, \rho'_0,
& \mathcal{M}_p = \mathcal{M}_b,
\\[0.6em]
\max\!\left\{ \sqrt{\frac{2}{\pi}}, \;\kappa\!\left({\sqrt{\tilde{\psi}_{0,2}^+}}\right) - \eta \right\},
& \mathcal{M}_p \in \{\mathcal{M}_\tau,\,\mathcal{M}_\omega\},
\end{cases}
\end{align*}
     \noindent  \(\Tilde{\psi}_{0,2}^+\) is as defined in \eqref{eqn: psi_+_-}, \(\kappa(\cdot)\) as defined in \eqref{eqn:kappa_x} and \(\eta = 2\sqrt{\frac{\theta}{\pi}} \rightarrow 0\) as \(\theta \rightarrow 0\) as \(|\rho| \rightarrow 1\).
 \end{theorem}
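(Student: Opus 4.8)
\textbf{Proof proposal for Theorem~\ref{thm:dr_risk_lower_bound}.}

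The plan is to derive the lower bound as a mirror image of the upper-bound argument in Theorem~\ref{thm:dr_risk_upper_bound}, replacing each majorization by a minorization and swapping the roles of the extremal eigenvalues $\tilde\psi_{0,n}^+$ and $\tilde\psi_{0,2}^+$. First I would use Proposition~\ref{prop:opt_formulation_over_ambiguity_set} to rewrite the supremum over the ambiguity set as a maximization over $(\sigma_i,\sigma_j,\rho)$ subject to \eqref{eqn:opt_constraint_sigi}--\eqref{eqn:opt_constraint_rho}, and then observe that any feasible choice of $(\sigma_i,\sigma_j,\rho)$ yields a lower bound on that supremum. The natural feasible point is $\sigma_j = \sigma_{0,j}\sqrt{1+\varepsilon_p^+}$ together with a convenient value of $\rho$; since by Lemma~\ref{lem:sigma_y_steady} each $\sigma_{0,j}^2 = b^2\sum_k \ldots$ is a convex combination (through $M_nQ$) of the diagonal entries $\tau f(\lambda_k\tau)$, the smallest such variance is bounded below by $\tilde\psi_{0,2}$, after accounting for the $(1-1/n)$ factor coming from $M_n$; this is exactly why $\tilde\psi_{0,2}^+$ appears as the prefactor in \eqref{eqn:dr_risk_lower_bound}.

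Next I would substitute this feasible point into the closed form \eqref{eqn:conditional_expectation_closed_form}, or equivalently the erfc-approximated form \eqref{eqn:expectation_approx} with $H_1,H_2$ as in \eqref{eqn:H_1_delta_star}--\eqref{eqn:H_2_delta_star}, and evaluate at $\delta^* = \bar\delta_i/(\sqrt2\,\sigma_i)$. Using the monotonicity of $\kappa(\cdot)$ in \eqref{eqn:kappa_x} (it is increasing in $x$, so replacing $\sigma_i$ by its largest admissible value $\sigma_{0,i}^+$ lower-bounds $1-\exp(-A\bar\delta_i/(\sqrt2\,\sigma_i))$ appropriately), I would bound the bracket $H_1+H_2$ from below. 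For $\mathcal{M}_p = \mathcal{M}_b$ the correlation is frozen at $\rho=\rho_0$ (Remark~\ref{rem:opt_constraints_b}), so one keeps $|\rho_0|\kappa(\sqrt{\tilde\psi_{0,2}^+})$ from the $H_2$-type term and subtracts the positive contribution $\sqrt{2/\pi}\,\rho_0'$ coming from bounding $H_1$ crudely by its maximum; this gives the first branch. For $\mathcal{M}_p\in\{\mathcal{M}_\tau,\mathcal{M}_\omega\}$, $\rho$ ranges over $[0,1)$, so one may either take $\rho\to 0$ (killing $H_2$ and leaving the $\sqrt{2/\pi}$ term from $H_1$, which gives the first argument of the $\max$) or take $|\rho|\to 1$ so that $\rho'\to 0$, $\eta = 2\sqrt{\theta/\pi}\to 0$, and $H_2\to \kappa(\sqrt{\tilde\psi_{0,2}^+}) - \eta$; taking the better of these two feasible choices yields the $\max\{\cdot,\cdot\}$. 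Finally, subtracting $c$ and invoking the ``otherwise'' branch of \eqref{eqn:DR_risk} produces \eqref{eqn:dr_risk_lower_bound}.

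The main obstacle I anticipate is the careful bookkeeping of the $H_1$ term in the second regime: as $|\rho|\to1$ one has $\rho'\to0$, and the factors $\exp(-(\rho^2/\rho'^2)\delta^{*2})$ and $(1-\exp(-A\delta^*/\rho'))$ in \eqref{eqn:H_1_delta_star} both degenerate, so one must show the product tends to $0$ (or is controlled by $\eta$) at a rate that does not spoil the bound — this is where the quantity $\eta = 2\sqrt{\theta/\pi}$ and the limit $\theta\to0$ must be tied precisely to $\rho'^2 = 1-\rho^2 = \theta$. A secondary subtlety is justifying that the extremal feasible variance is genuinely $\tilde\psi_{0,2}^+$ rather than something smaller: this requires the spectral bound $\sigma_{0,j}^2 \ge (1-1/n)\min_k b_0^2\tau_0 f(\lambda_k\tau_0)$, which follows from expanding $\Sigma_0 = b_0^2 M_n Q\Psi Q^\top M_n$ in the eigenbasis and using $M_n\bm q_1 = 0$, $\|M_n\bm q_k\|^2 = 1$ for $k\ge2$, but the off-diagonal/correlation interplay needs to be handled so that the chosen $(\sigma_i,\sigma_j,\rho)$ triple is simultaneously admissible.
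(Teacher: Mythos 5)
Your route is genuinely different from the paper's in its core analytic step, and one of its two branches has a real gap. The paper never works with the closed form \eqref{eqn:conditional_expectation_closed_form} or the $H_1,H_2$ decomposition at all: it writes $y_j=\rho\tfrac{\sigma_j}{\sigma_i}y_i+z$ with $z\sim\mathcal N(0,\rho'^2\sigma_j^2)$ independent of $y_i$ (Lemma~\ref{lem:random_var_decomposition}), applies the \emph{reverse} triangle inequality and Jensen's inequality, and uses Lemma~\ref{lem:conditional_expectation_mod_yi} to obtain the pointwise bound $\mathbb{E}_{\mathbb{P}\mid\mathcal{F}^i}\!\left[|y_j|\right]\geq \sigma_j\left|\,|\rho|\,\kappa(\sigma_i)-\sqrt{\tfrac{2}{\pi}}\,\rho'\right|$ in four lines; only afterwards does it optimize over the feasible box and invoke Lemma~\ref{lem:Sigma_rayleigh_bound}. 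Your final steps do coincide with the paper's: the spectral bound $\sigma_{0,j}^2\geq(1-1/n)\psi_{0,2}$ you re-derive from $M_n Q\Psi Q^\top M_n$ is exactly Lemma~\ref{lem:Sigma_rayleigh_bound}, the monotonicity of $\kappa$ is used the same way, and your two endpoint choices $\rho\to0$ and $|\rho|\to1$ for $\mathcal{M}_p\in\{\mathcal{M}_\tau,\mathcal{M}_\omega\}$ reproduce the two arguments of the $\max$ (at $\rho=0$ one gets $\sqrt{2/\pi}\,\sigma_j$ exactly, and as $|\rho|\to1$ the $H_2$ term tends to the $\kappa$ term with $H_1\to0$, the open constraint $|\rho|<1$ accounting for $\eta$). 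That branch of your argument is workable modulo the degenerate-limit bookkeeping you yourself flag.

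The gap is in the $\mathcal{M}_b$ branch. You propose to obtain $|\rho_0|\,\kappa(\cdot)-\sqrt{2/\pi}\,\rho_0'$ by keeping the $H_2$ term and ``subtracting the positive contribution $\sqrt{2/\pi}\,\rho_0'$ coming from bounding $H_1$ crudely by its maximum.'' This is not a valid operation: $H_1\geq0$ enters $H_1+H_2$ with a positive sign, so bounding it by its maximum produces an \emph{upper} bound on $\mathbb{E}_i^j$, while dropping it gives the lower bound $\mathbb{E}_i^j\geq\sqrt{2/\pi}\,\sigma_j H_2$ with nothing subtracted. Moreover $\sqrt{2/\pi}\,\sigma_j H_2$ carries the factor $\mathrm{erf}\!\left(\rho\delta^*/\rho'\right)$, which for moderate $|\rho_0|$ is well below $1$, so $H_2$ alone does not dominate $|\rho_0|\kappa-\sqrt{2/\pi}\rho_0'$; to close your argument you would have to prove the inequality $\sqrt{2/\pi}\,(H_1+H_2)\geq |\rho_0|\kappa(\sigma_i)-\sqrt{2/\pi}\,\rho_0'$ directly from \eqref{eqn:H_1_delta_star}--\eqref{eqn:H_2_delta_star}, which is precisely the nontrivial analytic work your sketch omits. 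In the paper the subtracted term is simply $\mathbb{E}[|z|]/\sigma_j=\sqrt{2/\pi}\,\rho'$ and falls out of $|a+b|\geq|a|-|b|$ in one step, which is why that route is the natural one here. A secondary caution: for a \emph{lower} bound, a point of the box \eqref{eqn:opt_constraint_sigi}--\eqref{eqn:opt_constraint_rho} must actually be realized by a measure in $\mathcal{M}_p$, since the box (in particular $0\leq|\rho|<1$) is only a relaxation; the paper's own supremum over $\rho$ shares this looseness, but your ``evaluate at a feasible point'' framing makes it more load-bearing and worth an explicit justification.
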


An immediate corollary of Theorems~\ref{thm:dr_risk_upper_bound} and \ref{thm:dr_risk_lower_bound} is the following result, which characterizes the distributionally robust risk of failure for a single agent:
\begin{corollary}\label{cor:single_agent_upper_lower_bound}
    For the case of independent agents, the distributionally robust risk satisfies the following bounds
    \begin{equation}\label{eqn:single_agen_upper_bound}
                   \sqrt{\tfrac{2}{\pi} ~ 
\tilde{\psi}_{0,2}^+}  - c  \leq      \underset{j \in \{1, \dots, n\}}{\sup}\mathcal{R}^j \leq \sqrt{\tfrac{2}{\pi} ~ 
\tilde{\psi}_{0,n}^+}  - c .
    \end{equation}
\end{corollary}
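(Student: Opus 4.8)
The plan is to reduce the claim to the single-agent risk formula of Corollary~\ref{cor:single_risk_rho_0} and then sandwich each agent's worst-case variance between the extremal non-zero eigenvalues of the nominal covariance \(\Sigma_0\). \emph{Step 1 (reduction to a scalar).} For uncorrelated agents, Corollary~\ref{cor:single_risk_rho_0} collapses \(\mathcal{R}^j_i\) to the single-agent risk \(\mathcal{R}^j = \max\!\bigl\{0,\ \sqrt{2/\pi}\,\bar{\sigma}_j - c\bigr\}\), where \(\bar{\sigma}_j = \sqrt{1+\varepsilon_p^+}\,\sigma_{0,j}\) is the largest steady-state standard deviation of \(y_j\) admissible in the ambiguity set; this is the active endpoint of the constraint in Proposition~\ref{prop:opt_formulation_over_ambiguity_set}, since \(\mathbb{E}[|y_j|]=\sqrt{2/\pi}\,\sigma_j\) is increasing in \(\sigma_j\) and \(\rho\) plays no role. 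Hence \(\sup_j \mathcal{R}^j = \max\!\bigl\{0,\ \sqrt{2/\pi}\,\max_j\bar{\sigma}_j - c\bigr\}\), and it suffices to bound \(\sigma_{0,j}^2\) uniformly in \(j\).

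\emph{Step 2 (uniform diagonal bound --- the key step).} From Lemma~\ref{lem:sigma_y_steady}, \(\Sigma_0 = b_0^2\,M_n Q\Psi Q^\top M_n\). Since \(M_n\bm{q}_1=\bm{0}\) and \(M_n\bm{q}_k=\bm{q}_k\) for \(k\ge 2\), while \(\Psi\) vanishes in its \(\bm{q}_1\) slot, this simplifies to \(\Sigma_0 = \sum_{k=2}^n \mu_k\,\bm{q}_k\bm{q}_k^\top\) with \(\mu_k := b_0^2\,\tau_0\,f(\lambda_k\tau_0)\). Reading the \(j\)-th diagonal entry gives \(\sigma_{0,j}^2 = \sum_{k=2}^n \mu_k\,[\bm{q}_k]_j^2\). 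The coefficients satisfy \(\sum_{k=1}^n [\bm{q}_k]_j^2 = (QQ^\top)_{jj} = 1\) and \([\bm{q}_1]_j^2 = 1/n\), so \(\sum_{k=2}^n [\bm{q}_k]_j^2 = 1-\tfrac1n\); thus \(\sigma_{0,j}^2\) is a scaled convex combination of the \(\mu_k\), and \(\psi_{0,2}\bigl(1-\tfrac1n\bigr)\le\sigma_{0,j}^2\le\psi_{0,n}\bigl(1-\tfrac1n\bigr)\), i.e.\ \(\tilde{\psi}_{0,2}\le\sigma_{0,j}^2\le\tilde{\psi}_{0,n}\) for every \(j\), with \(\psi_{0,2},\psi_{0,n},\tilde{\psi}_{0,2},\tilde{\psi}_{0,n}\) as in \eqref{eqn:psi_extremal}--\eqref{eqn:psi_tilde}.

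\emph{Step 3 (assemble).} Multiplying by \(1+\varepsilon_p^+\) and using \eqref{eqn: psi_+_-} gives \(\tilde{\psi}_{0,2}^+\le\bar{\sigma}_j^2\le\tilde{\psi}_{0,n}^+\) uniformly in \(j\). Therefore \(\max_j\bar{\sigma}_j\le\sqrt{\tilde{\psi}_{0,n}^+}\), so \(\sup_j\mathcal{R}^j\le\sqrt{\tfrac2\pi\,\tilde{\psi}_{0,n}^+}-c\) whenever the right-hand side is nonnegative (otherwise every \(\mathcal{R}^j=0\) and the inequality is vacuous); and for any fixed \(j\), \(\max\{0,x\}\ge x\) yields \(\mathcal{R}^j\ge\sqrt{2/\pi}\,\bar{\sigma}_j-c\ge\sqrt{\tfrac2\pi\,\tilde{\psi}_{0,2}^+}-c\), hence \(\sup_j\mathcal{R}^j\ge\sqrt{\tfrac2\pi\,\tilde{\psi}_{0,2}^+}-c\). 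This is consistent with Theorems~\ref{thm:dr_risk_upper_bound}--\ref{thm:dr_risk_lower_bound}: setting \(\rho_0=0\), \(\rho_0'=1\) in the \(\mathcal{M}_b\) branch of \(\mathcal{R}^u\) reduces it to \(\sqrt{2/\pi}\), recovering the upper bound at once, whereas the lower bound follows from the direct single-agent computation above (the cascading lower bound \(\mathcal{R}^l\) of Theorem~\ref{thm:dr_risk_lower_bound} is not tight at \(\rho_0=0\)).

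\emph{Main obstacle.} The only non-bookkeeping step is the uniform diagonal bound of Step 2: it hinges on \(M_n\) acting as the identity on the observable subspace \(\mathrm{span}\{\bm{q}_2,\dots,\bm{q}_n\}\) and on the partition-of-unity identity \(\sum_{k\ge 2}[\bm{q}_k]_j^2 = 1-1/n\), after which the min/max sandwich is automatic. I expect no genuine difficulty here; the only care required is in distinguishing standard-deviation-level from covariance-level uncertainty (\(1+\varepsilon_p\) versus \(1+\varepsilon_p^+\)) and in noting the degenerate regime where all single-agent risks vanish.
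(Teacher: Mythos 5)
Your proof is correct, but it takes a more direct route than the paper. The paper's own proof is a one-line substitution of \(\rho = 0\) into Theorems~\ref{thm:dr_risk_upper_bound} and~\ref{thm:dr_risk_lower_bound}: the \(\mathcal{M}_b\) branch of \(\mathcal{R}^u\) collapses to \(\sqrt{2/\pi}\) exactly as you note, and the \(\mathcal{M}_\tau,\mathcal{M}_\omega\) branch of \(\mathcal{R}^l\) is already at least \(\sqrt{2/\pi}\). You instead bypass the cascading bounds entirely: you start from the single-agent formula of Corollary~\ref{cor:single_risk_rho_0} and re-derive, in your Step~2, the diagonal sandwich \(\tilde{\psi}_{0,2}\le\sigma_{0,j}^2\le\tilde{\psi}_{0,n}\) --- which is precisely Lemma~\ref{lem:Sigma_rayleigh_bound} in the appendix, proved there by the same partition-of-unity argument \(\sum_{k\ge 2}[\bm{q}_k]_j^2 = 1-1/n\); you could simply have cited it. The two routes are mathematically equivalent because the theorems' proofs at \(\rho=0\) reduce to exactly this computation, but your version has two advantages: it is self-contained, and it sidesteps a genuine wrinkle in the paper's shortcut, namely that literally substituting \(\rho_0=0\) into the \(\mathcal{M}_b\) branch of \(\mathcal{R}^l\) yields \(-\sqrt{2/\pi}\) (a vacuous negative lower bound), so the claimed lower bound really has to come from the \(\max\{\sqrt{2/\pi},\cdot\}\) branch or, as you do, from the direct single-agent computation. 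Your parenthetical remark on this point, and your care with the \(\sqrt{1+\varepsilon_p^+}\) standard-deviation scaling versus the \((1+\varepsilon_p)\) appearing in Corollary~\ref{cor:single_risk_rho_0}, are both well placed.
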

Corollary~\ref{cor:single_agent_upper_lower_bound} quantifies the bounds on the DR risk of failure for a single agent in terms of the maximum and minimum nontrivial eigenvalues of the covariance matrix. This relationship is crucial for establishing bounds in terms of network parameters, as illustrated next.

    

\subsection{Network Induced Fundamental Limit}
Having established bounds on the DR risk in terms of the eigenvalues of the covariance matrix—reflecting general uncertainty in system observables—we now derive explicit bounds that isolate the role of network parameters and network-induced uncertainty. In this part, we consider uncertainty solely in the edge weights of the network. While the time-delay and diffusion coefficient are intrinsic properties of the system dynamics, the network structure serves as the primary design lever, playing a central role in the synthesis of control laws for risk-resilient networked systems.

While it is possible to derive both upper and lower bounds induced by network uncertainty, our primary interest lies in understanding the fundamental performance limits imposed by the network structure. To this end, we focus exclusively on the minimum achievable DR risk, which we formalize in the following result. For better readability and brevity, we introduce the following notation for \(x > 0\):
\begin{equation}\label{eqn:R_l_f}
  \mathcal{R}^l_f(x) =   \bar{f}^+\!\left(x\right)\,
\max\!\left\{ \sqrt{\tfrac{2}{\pi}},\;
\bar{\kappa}\!\left(\bar{f}^+\!\left(x\right)\right)\right\} - c,
\end{equation}
where \(\bar{f}^+\!\left(x\right) = b_0 \sqrt{ f_{\tau_0}\left(x\right) (1+\varepsilon_{\omega}^+)},\) and \(f_{\tau_0}(\cdot)\) as defined in \eqref{eqn:f_tau_lambda}, \(\bar{\kappa}(\cdot) =\kappa(\cdot) - \eta,\) where \(\kappa(\cdot)\) is as defined in \eqref{eqn:kappa_x}.

\begin{theorem}\label{thm:fundamental_limit}
    The DR cascading risk as function of network parameter and network uncertainty satisfies
            \begin{equation}
\underset{\underset{i,j \in \{1, \dots, n\}}{i \neq j}}{\sup}\, \mathcal{R}^j_i 
\;\geq\;
\begin{cases}
\begin{aligned}
&  \mathcal{R}^l_f(\lambda_{0,n}) 
&\quad\text{if } \lambda_{0,n}^+ \leq \bar{\lambda},
\end{aligned} \\[1.2em]
\begin{aligned}
& \mathcal{R}^l_f(\lambda_{0,2}) 
&\quad\text{if } \lambda_{0,2}^- > \bar{\lambda}.
\end{aligned}
\end{cases}
\end{equation}
    \noindent where \(\lambda_{0,n}^+,\) \(\lambda_{0,2}^-\) and \(\bar{\lambda}\) as defined in Proposition \ref{prop:ambiguity_edge_weight_tau_non_zero}.
\end{theorem}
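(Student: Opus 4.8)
The plan is to specialize Theorem~\ref{thm:dr_risk_lower_bound} to the case $\mathcal{M}_p = \mathcal{M}_\omega$ (network uncertainty only, with nonzero time delay) and then make the eigenvalue-level bounds explicit in terms of the edge-weight parameters. The starting point is the lower bound
\[
\underset{\underset{i,j}{i\neq j}}{\sup}\,\mathcal{R}^j_i \;\geq\; \sqrt{\tilde{\psi}_{0,2}^+}\,\mathcal{R}^l - c,
\]
with $\mathcal{R}^l = \max\{\sqrt{2/\pi},\,\kappa(\sqrt{\tilde{\psi}_{0,2}^+}) - \eta\}$. Recalling $\tilde{\psi}_{0,2}^+ = \tilde{\psi}_{0,2}(1+\varepsilon_\omega^+)$ and $\tilde{\psi}_{0,2} = \psi_{0,2}(1-\tfrac1n)$ with $\psi_{0,2} = \min_{i} b_0^2\,\tau_0 f(\lambda_{0,i}\tau_0)$, the first step is to rewrite $\psi_{0,2}$ using the identity $\tau_0 f(\lambda\tau_0) = \tfrac12 f_{\tau_0}(\lambda)$ (comparing Lemma~\ref{lem:sigma_y_steady} with \eqref{eqn:f_tau_lambda}), so that $\sqrt{\tilde{\psi}_{0,2}^+}$ is a constant multiple of $b_0\sqrt{f_{\tau_0}(\lambda_{0,2})(1+\varepsilon_\omega^+)} = \bar f^+(\lambda_{0,2})$. (The $(1-\tfrac1n)$ factor and the $\tfrac12$ are absorbed; the statement as written suppresses these constants, so I would either carry a normalization constant or note that $\bar f^+$ is defined up to this scaling.)

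The second step is to identify which eigenvalue attains the minimum in $\psi_{0,2}$. Since $f_{\tau_0}(\lambda)$ has a unique minimum at $\bar\lambda$ (Fig.~\ref{fig:f_tau_lambda_graph}) and the perturbed spectrum lies in $[\lambda_{0,2}^-,\lambda_{0,n}^+]$ with $\lambda_{0,i}^\pm = \lambda_{0,i}(1\pm\alpha_{\bm\omega})$, I split into the two monotonic regimes exactly as in Proposition~\ref{prop:ambiguity_edge_weight_tau_non_zero}: if $\lambda_{0,n}^+ \leq \bar\lambda$ then $f_{\tau_0}$ is decreasing on the relevant interval, so the minimum over the nominal eigenvalues is at $\lambda_{0,n}$, giving $\sqrt{\tilde\psi_{0,2}^+}\propto\bar f^+(\lambda_{0,n})$; if $\lambda_{0,2}^- > \bar\lambda$ then $f_{\tau_0}$ is increasing, so the minimum is at $\lambda_{0,2}$, giving $\bar f^+(\lambda_{0,2})$. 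This is the reason the theorem statement has exactly these two cases with these two arguments.

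The third step is to handle the $\kappa$ term and the constant $\eta$. By definition $\kappa(x) = B\bar\delta_i/(1-\exp(-A\bar\delta_i/(\sqrt2 x)))$, and $\bar\kappa(\cdot) = \kappa(\cdot)-\eta$; substituting $x = \bar f^+(\lambda_{0,n})$ or $\bar f^+(\lambda_{0,2})$ directly yields $\mathcal{R}^l_f$ as defined in \eqref{eqn:R_l_f}. I would then simply note that $\eta = 2\sqrt{\theta/\pi}\to 0$ as $|\rho|\to1$ (inherited verbatim from Theorem~\ref{thm:dr_risk_lower_bound}), so no new estimate is needed here. Assembling the pieces gives the claimed lower bound in each regime.

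The main obstacle I anticipate is \emph{bookkeeping of the normalization constants} — reconciling the $\tau f(\lambda\tau)$ form in $\psi_{0,2}$ with the $f_{\tau}(\lambda)$ form in $\bar f^+$, tracking the $(1-\tfrac1n)$ network-size factor, and making sure the $\max\{\sqrt{2/\pi},\cdot\}$ structure survives multiplication by $\sqrt{\tilde\psi_{0,2}^+}$ in the right way (i.e., that the factor $\bar f^+(x)$ can be pulled inside the $\max$, which it can since it is positive). A secondary subtlety is confirming that the worst case over all pairs $i\neq j$ is genuinely controlled by $\sigma_{j}$ at its largest admissible value and $|\rho|\to1$ — but this is already encapsulated in Theorem~\ref{thm:dr_risk_lower_bound}, so for the present proof it suffices to invoke that theorem and perform the spectral substitution. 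No genuinely new analytic inequality is required; the content is the case analysis on $\bar\lambda$ together with the substitution of the explicit eigenvalue bounds from Proposition~\ref{prop:ambiguity_edge_weight_tau_non_zero}.
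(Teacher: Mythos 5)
Your proposal is correct and follows essentially the same route as the paper, which simply substitutes the extremal nontrivial eigenvalue of the covariance matrix (expressed through $f_{\tau_0}(\lambda)$ via Lemma~\ref{lem:sigma_y_steady}) into the lower bound of Theorem~\ref{thm:dr_risk_lower_bound} and splits into the two monotonicity regimes around $\bar{\lambda}$. You are in fact more careful than the paper's two-sentence proof in flagging the normalization mismatch between $\tau_0 f(\lambda\tau_0)$ and $f_{\tau_0}(\lambda)$ and the suppressed $\bigl(1-\tfrac{1}{n}\bigr)$ factor, which is a legitimate bookkeeping discrepancy in the stated definition of $\bar{f}^+$.
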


Theorem~\ref{thm:fundamental_limit} effectively implies that the minimum achievable DR cascading risk is constrained by the maximum and minimum nontrivial eigenvalues (the Fiedler eigenvalue \cite{Fiedler1973}) of the graph Laplacian. Specifically, when \(\lambda_{0,n}^+ \leq \bar{\lambda}\), the minimum achievable risk decreases as the largest eigenvalue of the Laplacian increases, whereas when \(\lambda_{0,2}^- \geq \bar{\lambda}\), the minimum achievable DR cascading risk decreases as the Fiedler eigenvalue becomes smaller. This is a direct consequence of the monotonicity of the function \(f_{\tau}(\lambda)\), as defined in~\eqref{eqn:f_tau_lambda} and illustrated in Fig.~\ref{fig:f_tau_lambda_graph}. 
For the special case of zero time delay, \(\tau = 0\), we obtain
\begin{equation}\label{eqn:R_l_f_tau_0}
  \underset{\substack{i,j \in \{1,\dots,n\}\\ i \neq j}}{\sup}\, \mathcal{R}^j_i
  \;\geq\;
  \sqrt{\frac{1+\varepsilon_{\omega}}{\lambda_{0,n}}}
  \max\!\left\{
      \sqrt{\tfrac{2}{\pi}},\;
      \bar{\kappa}\!\left(  \sqrt{\frac{1+\varepsilon_{\omega}}{\lambda_{0,n}}}\right)
  \right\}
  - c,
\end{equation}
which follows from Remark~\ref{rem: Sigma_y_time_delay_zero}, and shows a clear dependence on the largest eigenvalue and the uncertainty radius of the edge weights.


An immediate corollary for the single-agent DR risk of failure follows along the same lines as Corollary~\ref{cor:single_agent_upper_lower_bound}, expressed in terms of the network parameters and the network uncertainty.

\begin{corollary} \label{cor:fundamental_limit_single_agent}
The distributionally robust (DR) risk of failure of a single agent, viewed as a function of the network parameters and the network uncertainty, satisfies
\begin{equation}
\underset{j \in \{1,\dots,n\}}{\sup}\, \mathcal{R}^j 
\;\geq\;
\sqrt{\frac{2}{\pi}}
\begin{cases}
\begin{aligned}
& \bar{f}^+(\lambda_{0,n}) - c,
& \text{if } \lambda_{0,n}^+ \leq \bar{\lambda},
\end{aligned}
\\[1.2em]
\begin{aligned}
& \bar{f}^+(\lambda_{0,2}) - c,
& \text{if } \lambda_{0,2}^- > \bar{\lambda},
\end{aligned}
\end{cases}
\end{equation}
where $\lambda_{0,n}^+$, $\lambda_{0,2}^-$, and $\bar{\lambda}$ are as defined in Proposition~\ref{prop:ambiguity_edge_weight_tau_non_zero}.
\end{corollary}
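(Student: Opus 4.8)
The plan is to derive Corollary~\ref{cor:fundamental_limit_single_agent} as the ``independent agents'' specialization of the fundamental limit in Theorem~\ref{thm:fundamental_limit}, exactly as Corollary~\ref{cor:single_agent_upper_lower_bound} follows from Theorems~\ref{thm:dr_risk_upper_bound} and \ref{thm:dr_risk_lower_bound}. First I would invoke Corollary~\ref{cor:single_risk_rho_0}: when $\rho = 0$ the cascading risk $\mathcal{R}^j_i$ collapses to the single-agent risk $\mathcal{R}^j$, whose nonzero branch is $\sqrt{2/\pi}\,\sigma_{0,j}(1+\varepsilon_p) - c$. Equivalently, taking $\rho_0 = 0$ (so $\rho_0' = 1$) in the lower-bound machinery of Theorem~\ref{thm:fundamental_limit}, the term $\max\{\sqrt{2/\pi},\bar\kappa(\cdot)\}$ that appears in $\mathcal{R}^l_f$ is replaced by the single constant $\sqrt{2/\pi}$, since the $\kappa$-term in $\mathcal{R}^l$ carried the factor $|\rho_0|$ in the $\mathcal{M}_b$ case and, more to the point, the single-agent expected deviation is simply $\sqrt{2/\pi}\,\sigma_j$ with no conditioning correction. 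This immediately yields the claimed form $\sqrt{2/\pi}\,\bar f^+(\lambda_{0,\cdot}) - c$.

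Next I would substitute the spectral expression for the standard deviation. From Lemma~\ref{lem:sigma_y_steady}, $\sigma_{0,j}^2$ is a convex combination of the diagonal covariance eigenvalues $b_0^2\,\tau_0 f(\lambda_{0,i}\tau_0)$, hence is bounded above by the extremal value $\psi_{0,n}$ and below by $\psi_{0,2}$ (after the $(1-1/n)$ network-size correction giving $\tilde\psi_{0,n}$, $\tilde\psi_{0,2}$, as in \eqref{eqn:psi_tilde}). Incorporating the ambiguity-set radius via $\tilde\psi_{0,\star}^+ = \tilde\psi_{0,\star}(1+\varepsilon_\omega^+)$ from \eqref{eqn: psi_+_-}, and using the monotonicity of $f_{\tau_0}(\lambda)$ established in the discussion around Fig.~\ref{fig:f_tau_lambda_graph} and encoded in the case split of Proposition~\ref{prop:ambiguity_edge_weight_tau_non_zero}, the extremal eigenvalue of $\Sigma_0$ is attained at $\lambda_{0,n}$ when $\lambda_{0,n}^+ \le \bar\lambda$ (both on the decreasing branch of $f_{\tau}$) and at $\lambda_{0,2}$ when $\lambda_{0,2}^- > \bar\lambda$ (both on the increasing branch). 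In either case $\sigma_{0,\star}\sqrt{1+\varepsilon_\omega^+}$ equals $b_0\sqrt{f_{\tau_0}(\lambda_{0,\star})(1+\varepsilon_\omega^+)} = \bar f^+(\lambda_{0,\star})$ up to the $\tau_0$ factor absorbed into the definition of $f$ versus $f_\tau$, so the single-agent lower bound reads $\sqrt{2/\pi}\,\bar f^+(\lambda_{0,n}) - c$ or $\sqrt{2/\pi}\,\bar f^+(\lambda_{0,2}) - c$ accordingly, which is the stated claim.

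The main obstacle I anticipate is bookkeeping consistency between the two closely related ``$f$'' definitions in the paper: $f(\lambda_i\tau)$ from Lemma~\ref{lem:sigma_y_steady} (which carries a $\tau f(\lambda_i\tau)$ prefactor in $\Psi$), $f_{\lambda_i}(\tau)$ in \eqref{eqn:f_lambda_tau}, and $f_{\tau}(\lambda_i)$ in \eqref{eqn:f_tau_lambda}; I would need to verify that the factors of $b_0^2$, $\tau_0$, and the constant $\tfrac12$ land correctly so that $\bar f^+(x) = b_0\sqrt{f_{\tau_0}(x)(1+\varepsilon_\omega^+)}$ really does coincide with $\sigma_{0,\star}\sqrt{1+\varepsilon_\omega^+}$ when $x = \lambda_{0,\star}$ is the extremizing eigenvalue. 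A secondary, smaller point is confirming that in the single-agent regime no $\eta$-correction survives (it does not, since $\eta$ arose from the $|\rho|\to 1$ conditioning term in Theorem~\ref{thm:dr_risk_lower_bound}, which is vacuous when $\rho_0 = 0$), so that the $\max\{\sqrt{2/\pi},\bar\kappa(\cdot)\}$ genuinely reduces to the bare $\sqrt{2/\pi}$. Everything else is a direct transcription of the $\rho_0 = 0$ case of Theorem~\ref{thm:fundamental_limit} together with the extremal-eigenvalue argument already used for Corollary~\ref{cor:single_agent_upper_lower_bound}.
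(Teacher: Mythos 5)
Your proposal is correct and follows essentially the same route as the paper, whose proof is simply a two-line pointer to Theorem~\ref{thm:fundamental_limit} combined with Corollary~\ref{cor:single_agent_upper_lower_bound} (i.e., the $\rho=0$ specialization followed by the spectral substitution with the monotonicity case split); your write-up reconstructs exactly that chain, and the constant-factor bookkeeping you flag is a looseness already present in the paper's own statements rather than a gap in your argument.
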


The dependence of the single-agent DR risk of failure on the network parameters is qualitatively similar to the cascading case discussed earlier. An instructive special case arises when the time-delay parameter is set to zero. In this case,
\begin{equation}\label{eqn:R_j_tau_0}
  \underset{j \in \{1,\dots,n\}}{\sup}\, \mathcal{R}^j
  \;\geq\;
  \sqrt{\frac{2}{\pi}\,\frac{1+\varepsilon_{\omega}}{\lambda_{0,n}}}\; - c.
\end{equation}

In particular, setting $c = 0$ yields the relation
\begin{equation}\label{eqn:R_j_tau_0_hyperbolic_form}
  \underset{j \in \{1,\dots,n\}}{\sup}\, \mathcal{R}^j\, \sqrt{\lambda_{0,n}}
  \;\geq\;
  \sqrt{\frac{2}{\pi}(1+\varepsilon_{\omega})}.
\end{equation}

Expression~\eqref{eqn:R_j_tau_0_hyperbolic_form} clearly shows that the product of the DR risk and the square root of the largest eigenvalue of the graph Laplacian is bounded below by a constant determined by the ambiguity radius induced by the edge-weight uncertainty of the underlying network.

\begin{figure*}[t]
\centering

\makebox[0.23\textwidth]{\textbf{Complete}}%
\makebox[0.23\textwidth]{\textbf{14-cycle}}%
\makebox[0.23\textwidth]{\textbf{6-cycle}}%
\makebox[0.23\textwidth]{\textbf{Path}}\\[0.5em]

\raisebox{0.75\height}{\rotatebox{90}{Diff. Coeff.}}%
\begin{subfigure}[b]{0.23\textwidth}
    \includegraphics[width=\linewidth]{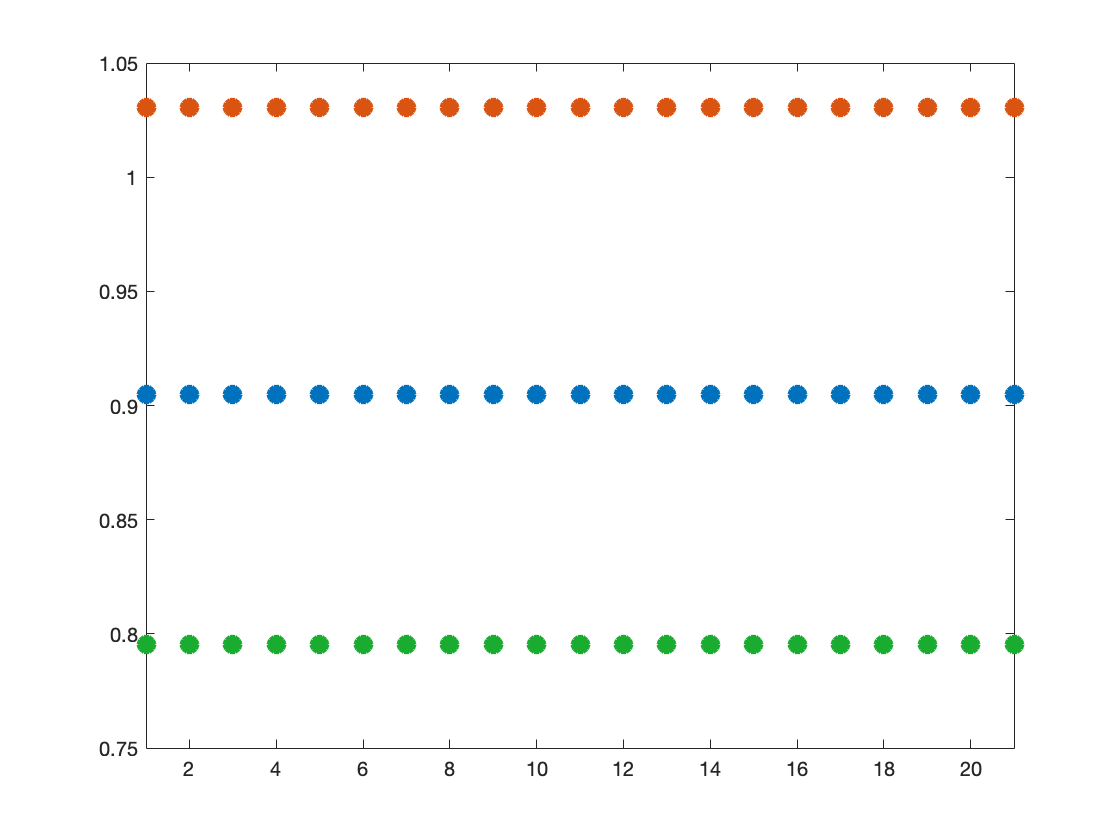}
    \subcaption{}
\end{subfigure}%
\begin{subfigure}[b]{0.23\textwidth}
    \includegraphics[width=\linewidth]{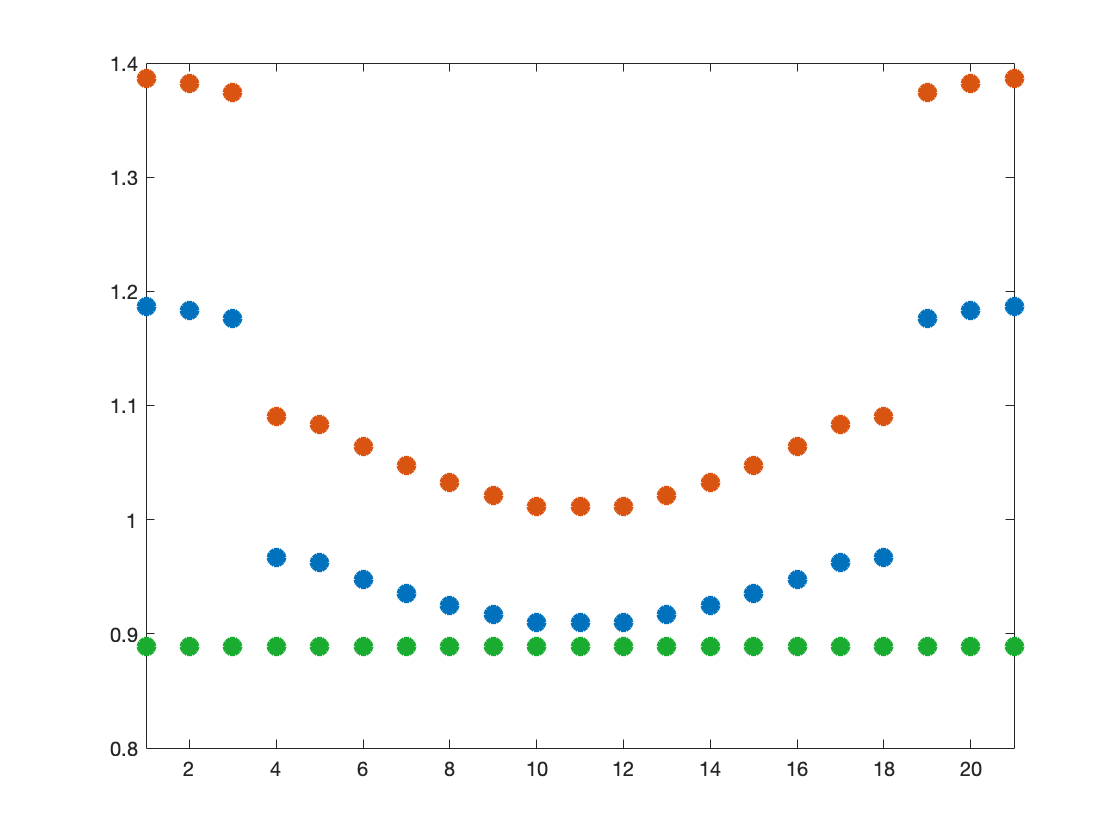}
    \subcaption{}
\end{subfigure}%
\begin{subfigure}[b]{0.23\textwidth}
    \includegraphics[width=\linewidth]{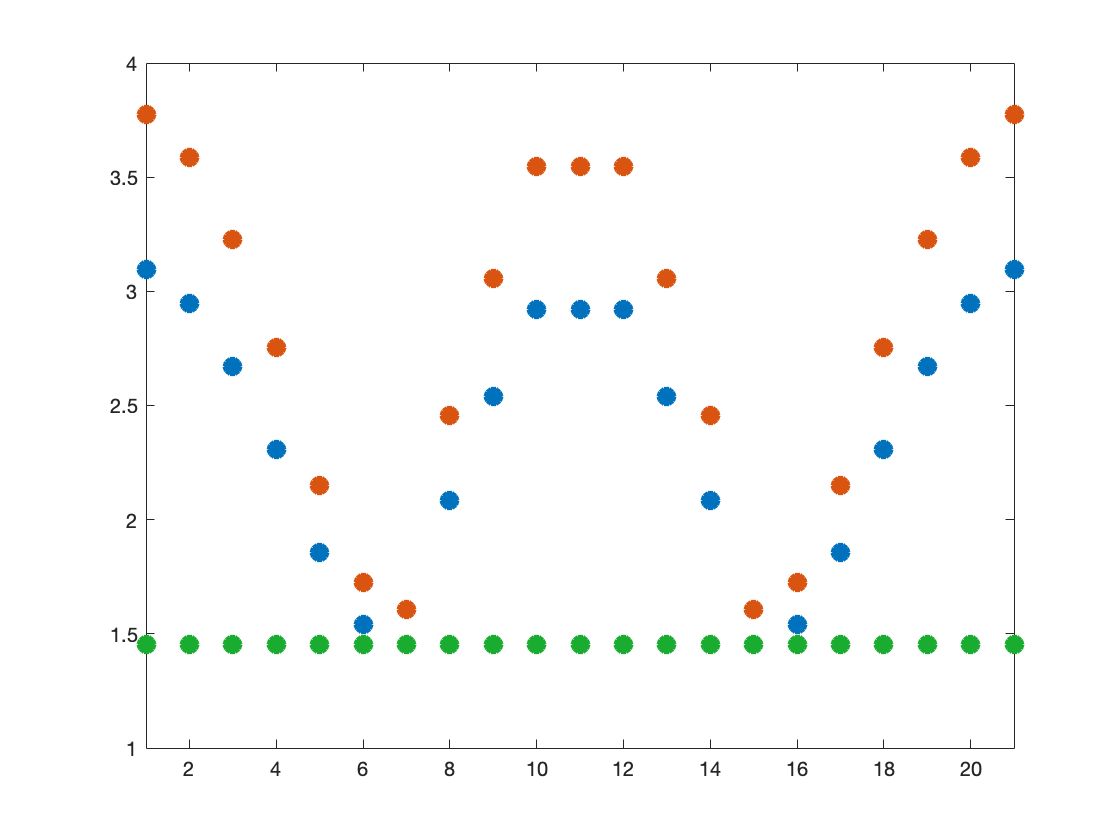}
    \subcaption{}
\end{subfigure}%
\begin{subfigure}[b]{0.23\textwidth}
    \includegraphics[width=\linewidth]{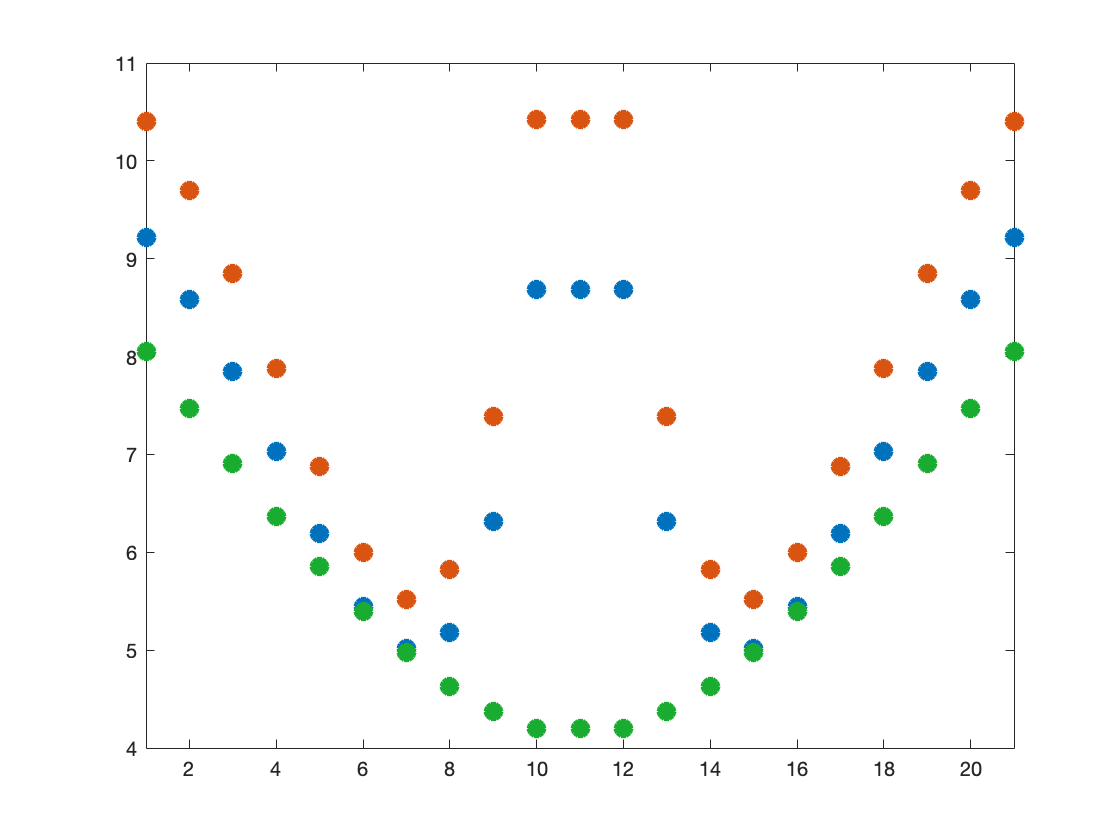}
    \subcaption{}
\end{subfigure}\\[0.5em]

\raisebox{0.7\height}{\rotatebox{90}{Time-Delay}}%
\begin{subfigure}[b]{0.23\textwidth}
    \includegraphics[width=\linewidth]{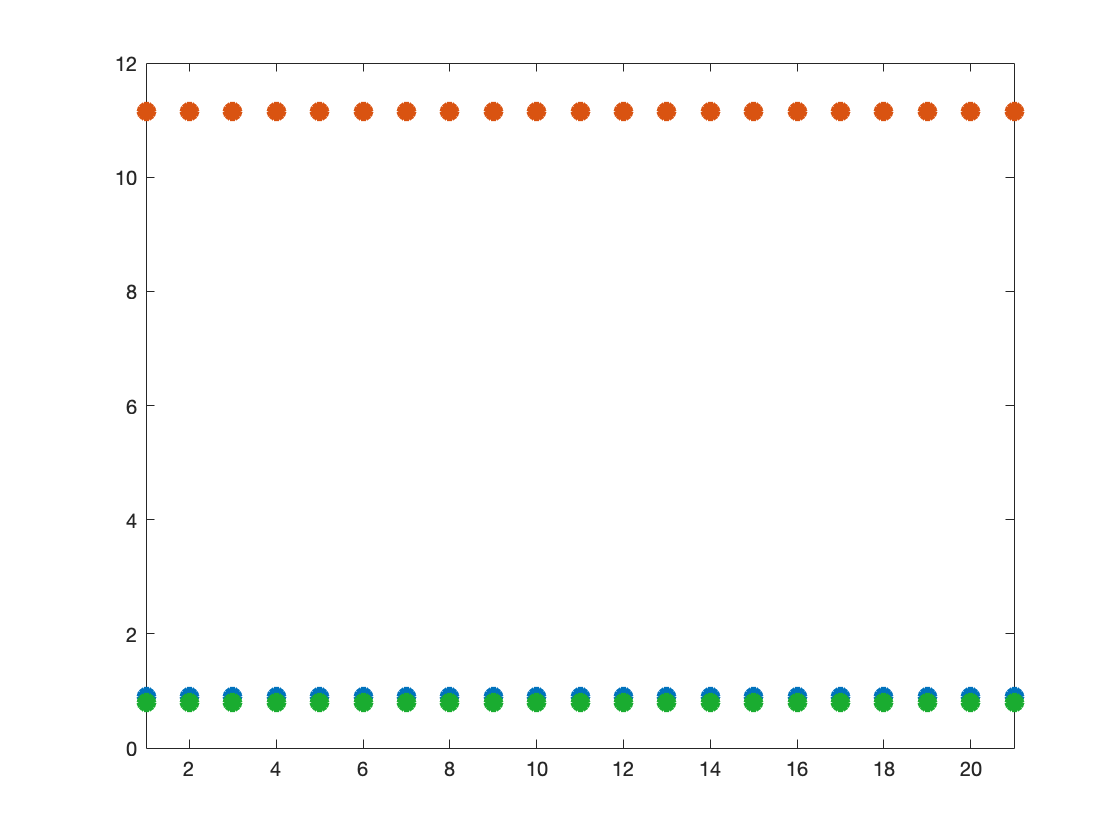}
    \subcaption{}
\end{subfigure}%
\begin{subfigure}[b]{0.23\textwidth}
    \includegraphics[width=\linewidth]{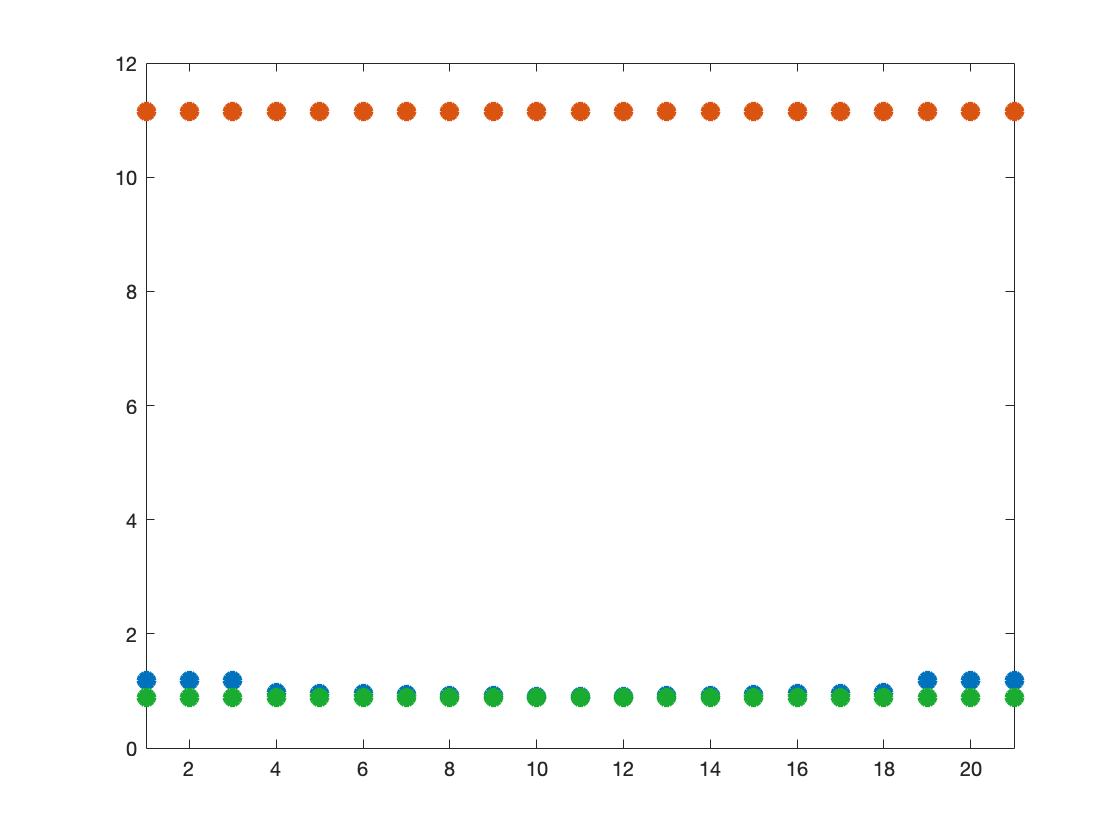}
    \subcaption{}
\end{subfigure}%
\begin{subfigure}[b]{0.23\textwidth}
    \includegraphics[width=\linewidth]{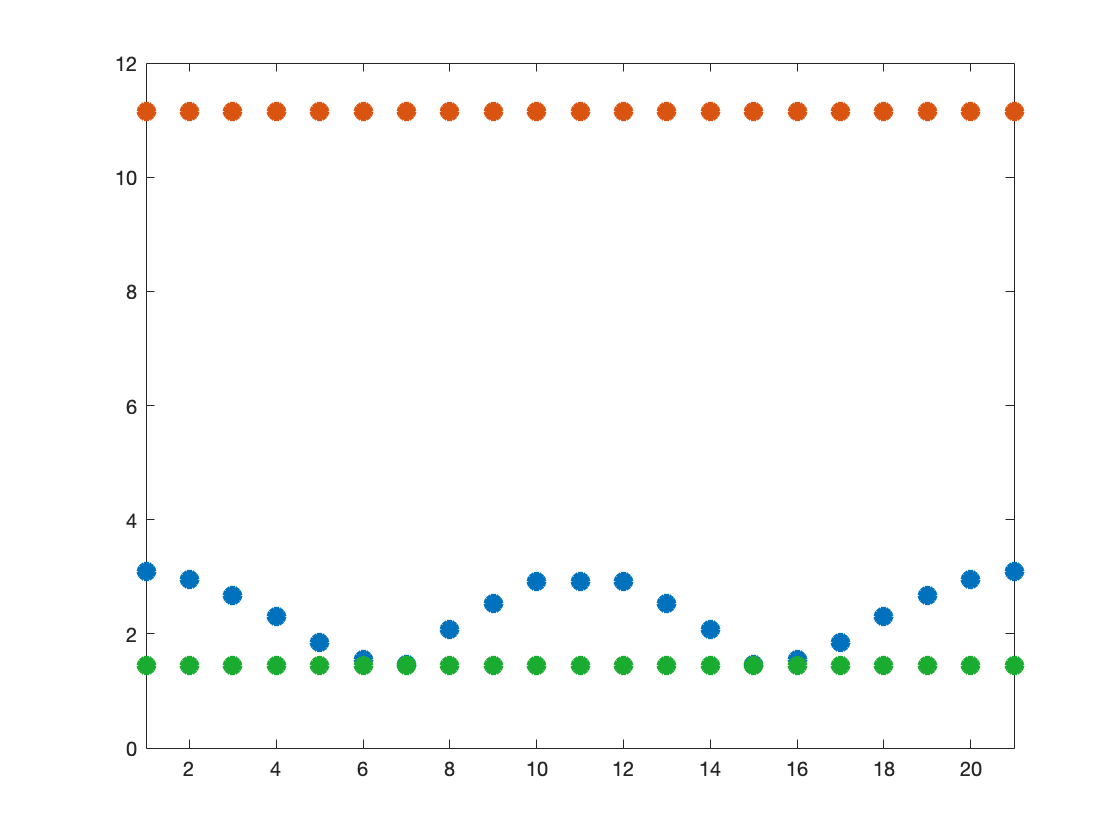}
    \subcaption{}
\end{subfigure}%
\begin{subfigure}[b]{0.23\textwidth}
    \includegraphics[width=\linewidth]{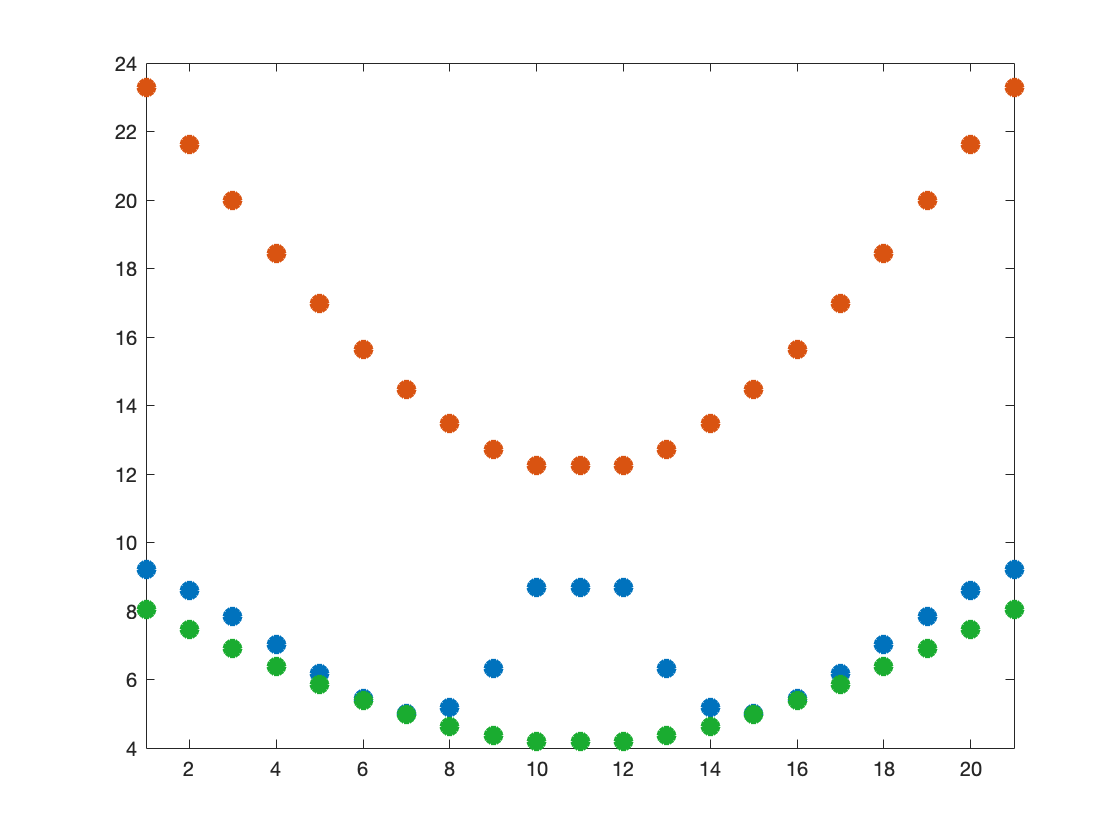}
    \subcaption{}
\end{subfigure}\\[0.5em]

\raisebox{0.6\height}{\rotatebox{90}{Edge Weight}}%
\begin{subfigure}[b]{0.23\textwidth}
    \includegraphics[width=\linewidth]{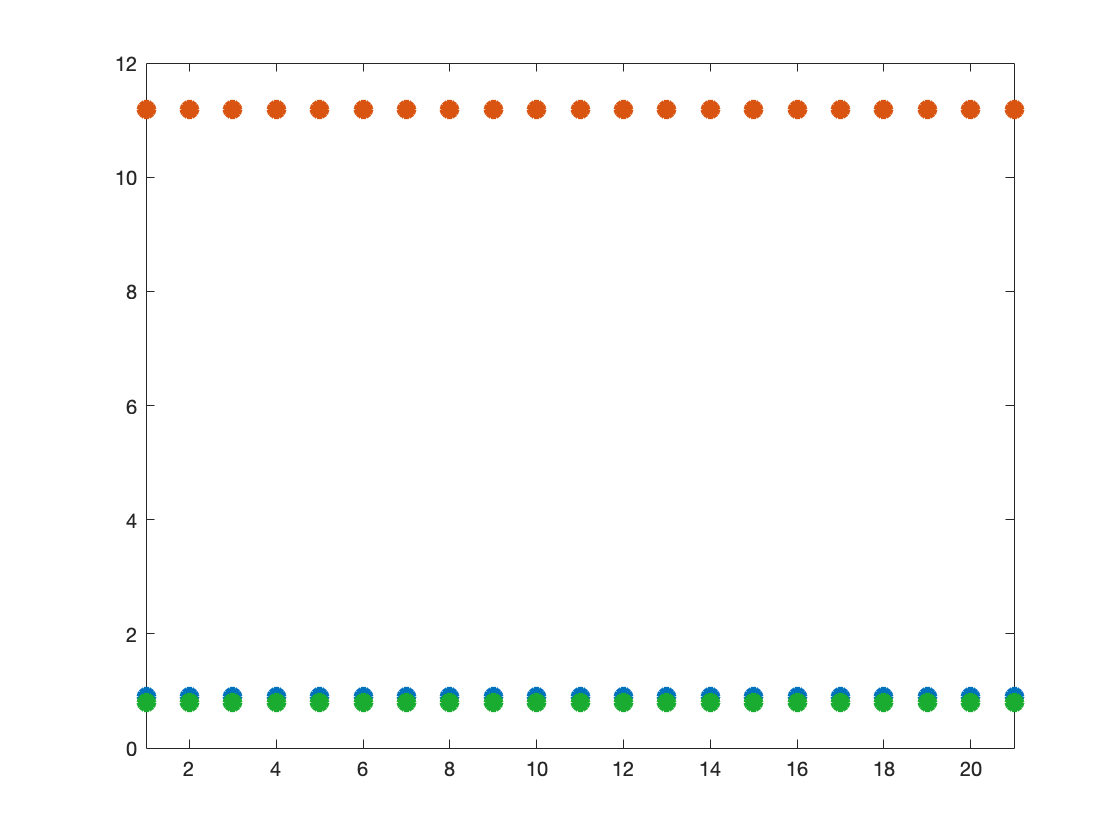}
    \subcaption{}
\end{subfigure}%
\begin{subfigure}[b]{0.23\textwidth}
    \includegraphics[width=\linewidth]{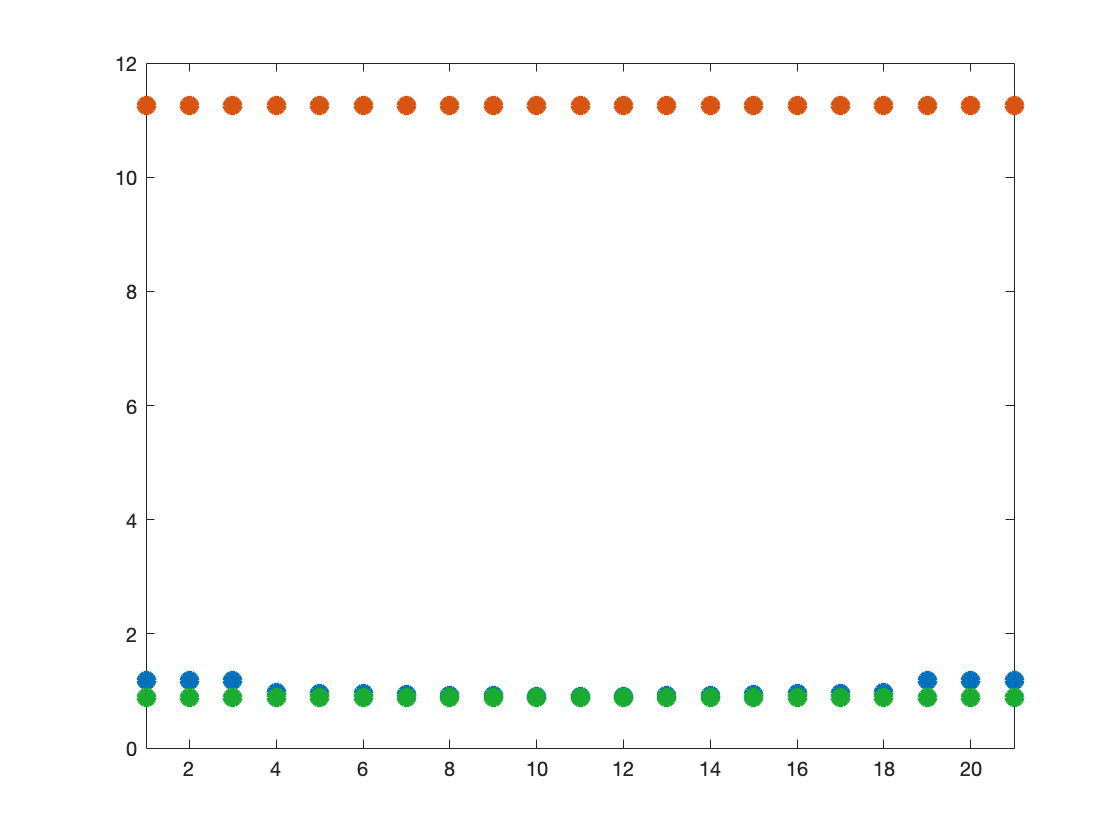}
    \subcaption{}
\end{subfigure}%
\begin{subfigure}[b]{0.23\textwidth}
    \includegraphics[width=\linewidth]{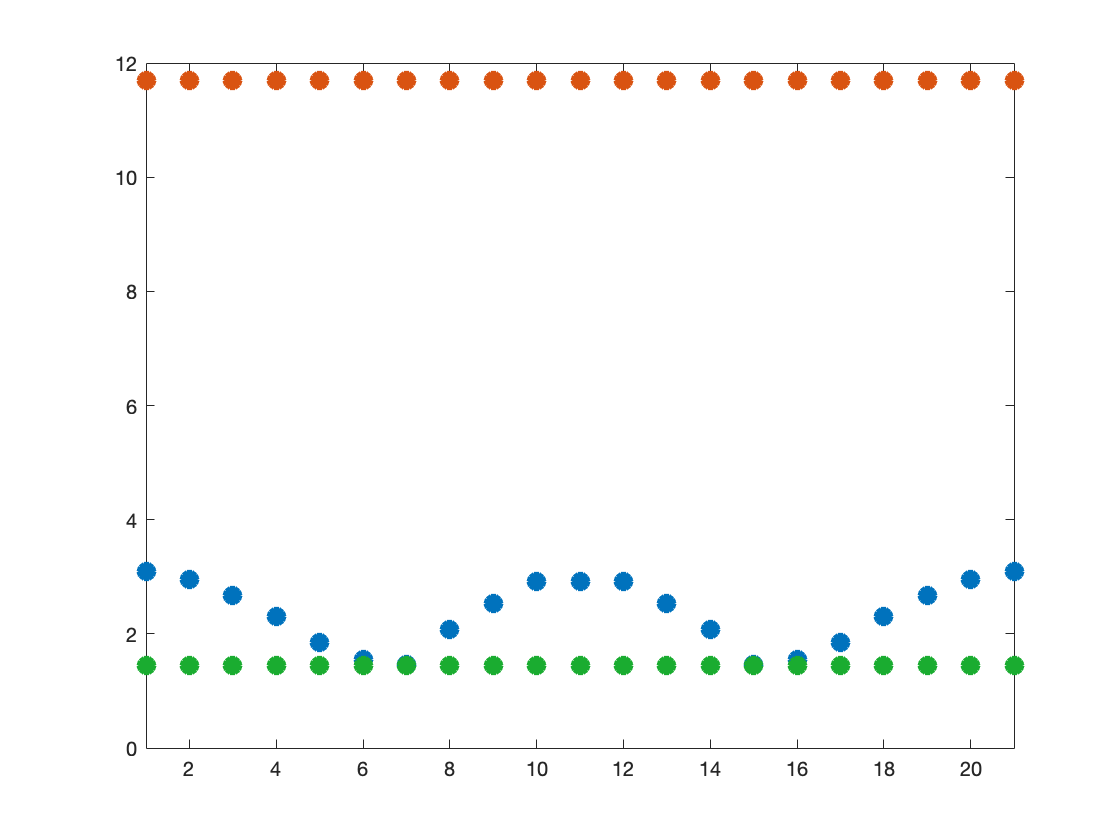}
    \subcaption{}
\end{subfigure}%
\begin{subfigure}[b]{0.23\textwidth}
    \includegraphics[width=\linewidth]{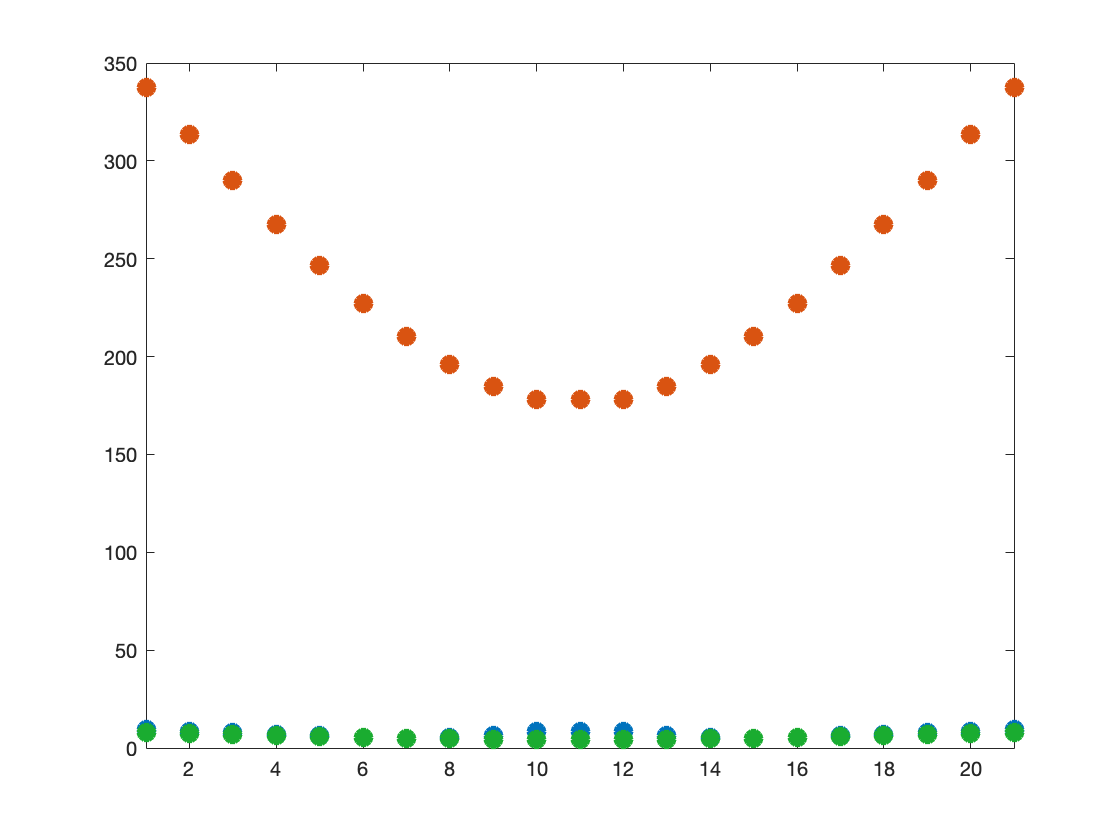}
    \subcaption{}
\end{subfigure}
\caption{
Distributionally Robust Cascading Risk under parameter uncertainty (rows) for various (equal edge weights) graph topologies (columns). 
The markers indicate the type of risk: 
\textcolor[rgb]{0.8500,0.3250,0.0680}{\textbullet} DR cascading risk, 
\textcolor[rgb]{0,0.4470,0.7410}{\textbullet} cascading risk without distributional robustness, 
\textcolor[rgb]{0.1000,0.6740,0.1880}{\textbullet} single-agent risk. 
Each row corresponds to a different level of parameter uncertainty, and each column corresponds to a graph topology (Complete, 14-cycle, 6-cycle, Path). 
}

\label{fig:dr_cascading_risk_profile}
\end{figure*}

\section{Special Graph Topologies}\label{sec:special_graphs}

In this section, we present specific results related to special unweighted graph topologies, which serve as a foundation for interpreting the outcomes in Section~\ref{sec:case_study}. In particular, we examine cases of symmetric graph structures where certain properties of the covariance matrix—induced by their graph Laplacians—exhibit special forms. These symmetric properties significantly influence the distributionally robust cascading risk profiles, as will be discussed later.

We begin with the case of the unweighted complete graph and formalize the key result in the form of the following lemma:

\begin{lemma}\label{lem:complete_graph_sigma_rho}
    For an unweighted complete graph, the diagonal entries and the correlation coefficients of the covariance matrix, as defined in Lemma~\ref{lem:sigma_y_steady}, are identical across all agents and are independent of their position in the graph.
\end{lemma}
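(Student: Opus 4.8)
The plan is to exploit the high degeneracy of the complete-graph Laplacian spectrum, which collapses the diagonal matrix \(\Psi\) of Lemma~\ref{lem:sigma_y_steady} into a scalar multiple of a projection. First I would recall that the Laplacian of the unweighted complete graph \(K_n\) is \(L = nI_n - \bm{1}_n\bm{1}_n^\top\), whose eigenvalues are \(\lambda_1 = 0\) with eigenvector \(\bm{q}_1 = \tfrac{1}{\sqrt{n}}\bm{1}_n\) and \(\lambda_2 = \dots = \lambda_n = n\), the eigenvalue \(n\) having multiplicity \(n-1\) with eigenspace equal to \(\{\bm{1}_n\}^{\perp}\).

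Since all nontrivial eigenvalues coincide, the diagonal matrix in Lemma~\ref{lem:sigma_y_steady} reduces to \(\Psi = \tau f(n\tau)\,\operatorname{diag}(0,1,\dots,1)\). Writing \(\operatorname{diag}(0,1,\dots,1) = I_n - \bm{e}_1\bm{e}_1^\top\) and using the orthogonality \(Q^\top Q = I_n\) together with \(Q\bm{e}_1 = \bm{q}_1 = \tfrac{1}{\sqrt{n}}\bm{1}_n\), I obtain \(Q\Psi Q^\top = \tau f(n\tau)\bigl(I_n - \tfrac{1}{n}\bm{1}_n\bm{1}_n^\top\bigr) = \tau f(n\tau)\,M_n\). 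Substituting into \eqref{eq:sigma_y} and using that \(M_n\) is an orthogonal projection (so \(M_n^2 = M_n\)), this yields
\[
\Sigma = b^2 M_n\bigl(\tau f(n\tau)\,M_n\bigr)M_n = b^2\,\tau f(n\tau)\,M_n.
\]

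From this closed form the claims follow by inspection: the diagonal entries are \(\sigma_i^2 = b^2\tau f(n\tau)\,(1 - \tfrac{1}{n})\), identical for every \(i\); the off-diagonal entries are \(\sigma_{ij} = -\tfrac{1}{n}\,b^2\tau f(n\tau)\), identical for every pair \(i \neq j\); and hence the correlation coefficient is \(\rho_{ij} = \sigma_{ij}/(\sigma_i\sigma_j) = -\tfrac{1}{n-1}\) for all \(i \neq j\), which is manifestly independent of the agents' labels. I would also remark that, although the eigenvectors \(\bm{q}_2,\dots,\bm{q}_n\) are not uniquely determined because of the repeated eigenvalue, the matrix \(Q\Psi Q^\top\) depends only on the eigenspace decomposition and not on the particular orthonormal basis chosen within the \((n-1)\)-dimensional eigenspace, so the expression for \(\Sigma\) is well defined. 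There is no substantial obstacle here; the only step requiring care is recognizing the projection identity \(Q\operatorname{diag}(0,1,\dots,1)Q^\top = M_n\) and justifying that it is invariant under the choice of eigenbasis.
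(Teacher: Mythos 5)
Your proposal is correct and follows essentially the same route as the paper's proof: both exploit the $(n-1)$-fold degeneracy of the nontrivial eigenvalue $\lambda_2=\dots=\lambda_n=n$ to collapse $Q\Psi Q^\top$ into $\tau f(n\tau)\bigl(I_n-\bm{q}_1\bm{q}_1^\top\bigr)=\tau f(n\tau)M_n$, yielding identical diagonal entries $b^2\tau f(n\tau)(1-\tfrac{1}{n})$ and the uniform correlation $\rho=-\tfrac{1}{n-1}$. Your added remark on the basis-independence of $Q\Psi Q^\top$ within the degenerate eigenspace is a nice (if minor) point of extra care not made explicit in the paper.
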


Next, we consider the unweighted \( p \)-cycle graph and utilize the symmetry of its Laplacian matrix to derive the following result:

\begin{lemma}\label{lem:p_cycle_graph_sigma}
    For an unweighted \( p \)-cycle graph, all diagonal entries of the covariance matrix, as defined in Lemma~\ref{lem:sigma_y_steady}, are equal, i.e., \( \sigma_k = \sigma_l \) for all \( k, l \in \{1, \dots, n\} \).
\end{lemma}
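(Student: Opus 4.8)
The plan is to exploit the vertex-transitivity of the cycle. The cyclic shift \(\pi : k \mapsto k+1 \pmod n\) is an automorphism of the \(p\)-cycle graph, so its permutation matrix \(P\) satisfies \(P L P^\top = L\), where \(L\) is the (circulant) Laplacian. First I would recast the covariance matrix of Lemma~\ref{lem:sigma_y_steady} as a matrix function of \(L\) alone. Writing \(h(\lambda) := \tau f(\lambda \tau)\) for \(\lambda > 0\) and \(h(0) := 0\), the diagonal matrix \(\Psi\) equals \(h(\Lambda)\), so \(Q \Psi Q^\top = h(L) = \sum_{\mu} h(\mu)\,\Pi_\mu\), where the sum runs over the distinct eigenvalues \(\mu\) of \(L\) and \(\Pi_\mu\) is the orthogonal projector onto the corresponding eigenspace. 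This expression does not depend on the particular choice of eigenbasis \(Q\), so \(h(L)\) is genuinely a function of \(L\). Since \(h(0) = 0\), the operator \(h(L)\) already annihilates \(\mathrm{span}\{\mathbf{1}_n\}\); hence \(M_n\, h(L) = h(L)\, M_n = h(L)\), and therefore \(\Sigma = b^2\, h(L)\).

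Next I would observe that \(P\) commutes with \(\Sigma\). Because \(P L P^\top = L\), the permutation \(P\) commutes with every spectral projector \(\Pi_\mu\) of \(L\), hence with \(h(L)\), hence with \(\Sigma\). Equivalently \(P \Sigma P^\top = \Sigma\), which reads entrywise as \(\sigma_{\pi(k)\pi(l)} = \sigma_{kl}\) for all \(k,l\). Restricting to \(k=l\) gives \(\sigma_{\pi(k)}^2 = \sigma_k^2\) for every \(k\), and since the cyclic shift \(\pi\) generates a transitive action on \(\{1,\dots,n\}\), iterating yields \(\sigma_1^2 = \sigma_2^2 = \dots = \sigma_n^2\), which is the claim. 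The same argument applied to off-diagonal entries shows that \(\Sigma\) is itself circulant, which is convenient for the later analysis of correlation coefficients.

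The main technical point to handle with care is the matrix-function representation in the first step: the \(p\)-cycle Laplacian has repeated eigenvalues (the value \(2 - 2\cos(2\pi k/n)\) occurs for both \(k\) and \(n-k\)), so one cannot speak of ``the'' eigenvector associated with a given eigenvalue, and the argument must be phrased in terms of the well-defined spectral projectors \(\Pi_\mu\) rather than the individual columns of \(Q\). Once this is settled, the remainder follows purely from symmetry, with no explicit computation of the eigenvalues or of \(f(\cdot)\) required. An entirely equivalent route is to note directly that \(L\), \(M_n\), and hence \(\Sigma\) are all circulant matrices, and that every circulant matrix has a constant diagonal.
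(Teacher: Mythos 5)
Your proof is correct, but it takes a genuinely different route from the paper's. The paper argues by direct computation: it writes the eigenvector matrix of the circulant Laplacian explicitly in the Fourier basis, \(Q_{ij} = \omega^{ij}/\sqrt{n}\) with \(\omega = e^{2\pi\iota/n}\), notes that every entry has squared modulus \(1/n\), and concludes \(\sigma_i^2 = \tfrac{\tau b^2}{n}\sum_{k=2}^n f(\lambda_k\tau)\), which is manifestly independent of \(i\). You instead use the cyclic-shift automorphism: \(PLP^\top = L\) implies \(P\) commutes with the spectral projectors of \(L\), hence with \(\Sigma = b^2 h(L)\), so \(P\Sigma P^\top = \Sigma\) and transitivity of the shift forces a constant diagonal. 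Each approach buys something the other does not. The paper's computation yields an explicit closed-form value for the common variance (which it uses downstream), whereas your symmetry argument gives no formula. On the other hand, your argument is coordinate-free and handles the repeated-eigenvalue subtlety cleanly by working with projectors \(\Pi_\mu\) rather than individual columns of \(Q\) — a point the paper glosses over, since it silently swaps the real orthogonal \(Q\) of Lemma~\ref{lem:sigma_y_steady} for a complex unitary Fourier basis and writes \(q_k q_k^H\) for possibly non-unique eigenvectors. Your argument also delivers the stronger conclusion that \(\Sigma\) is itself circulant, and it transfers verbatim to any vertex-transitive graph (recovering Lemma~\ref{lem:complete_graph_sigma_rho} as a special case), which the paper's explicit computation does not. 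Your handling of the \(M_n\) factors via \(h(0)=0\) is also a correct and necessary detail that the paper's displayed computation omits.
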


These two lemmas highlight the structural uniformity introduced by the symmetry of certain graph Laplacians. As we demonstrate in Section~\ref{sec:case_study}, this uniformity leads to consistent patterns in the distributionally robust cascading risk across agents.

\section{Case Studies} \label{sec:case_study}
We consider a team of \( n = 21 \) agents, where the nominal diffusion coefficient in the consensus model~\eqref{eqn: network_dynamics} is set to \( b_0 = 4 \), and the parameter \(c\) is chosen as \( c = 0.1 \), in accordance with Definition~\ref{def:c-Consensus}. Throughout all simulation studies, we assume that the observable of agent \( i = 11 \) lies outside the interval \( |c| \), specifically within the set \( U_{\delta^i}, \) where \(\delta_i = 5\).

We explore several case studies of the rendezvous problem, examining consensus dynamics governed by~\eqref{eqn: network_dynamics} under three distinct communication topologies: complete, path, and \( p \)-cycle graphs, as defined in~\cite{van2010graph}.




\subsection{DR Risk of Cascading Failure}
For the quantification of the distributionally robust (DR) cascading risk profile, as shown in Fig.~\ref{fig:dr_cascading_risk_profile}, we set the nominal parameters to \( \tau_0 = 0.05 \) and \( \omega_0 = 0.5 \). We assume a \(5\%\) uncertainty in the nominal values of these parameters.

The DR risk profile of cascading failures in achieving \( c \)-consensus is evaluated using the closed-form expression from Theorem~\ref{thm:DR_risk}, applying the approximation in~\eqref{eqn:expectation_approx} across different unweighted communication graph topologies, where all edge weights are set to 0.5. Parameter uncertainties are considered individually, in accordance with the ambiguity set definition described in Remark~\ref{rem:ambiguity_set_quant_framework} and further detailed in Propositions \ref{prop:ambiguity_b}, \ref{prop:ambiguity_time_delay}, \ref{prop:ambiguity_edge_weight_tau_zero} and \ref{prop:ambiguity_edge_weight_tau_non_zero}.

Figure~\ref{fig:dr_cascading_risk_profile} presents the DR cascading risk profiles under uncertainty for each parameter separately, and compares them with the baseline scenarios: no parameter uncertainty and agent-level single-risk, which is calculated using Corollary \ref{cor:single_risk_rho_0}. In the following, we analyze the contribution of each uncertain parameter to the overall DR cascading risk in detail.

\subsubsection{Uncertainty in Diffusion Coefficient}
Figures~\ref{fig:dr_cascading_risk_profile}(a)--(d) illustrate the impact of uncertainty in the diffusion coefficient on the cascading risk profile. As noted in Remark~\ref{rem:opt_constraints_b}, uncertainty in the diffusion coefficient does not alter the correlation between agents \( j \) and \( i \).
For the complete unweighted graph, the risk profile remains uniform across all agents, as both the correlation coefficients and individual variances are identical for all agents, in accordance with Lemma~\ref{lem:complete_graph_sigma_rho}. In contrast, the results for the path and \( p \)-cycle graphs highlight the significant influence of network topology on the relative risk distribution among agents. Specifically, in the path graph and the 6-cycle graph, the risk is notably higher in the immediate and distant neighborhoods of the failed agent \( i \).



\subsubsection{Uncertainty in Time Delay}
Figures~\ref{fig:dr_cascading_risk_profile}(e)--(h) show the impact of uncertainty in the time-delay parameter on the cascading risk profile. As with the diffusion coefficient case, the uniformity of risk in the complete graph is explained by Lemma~\ref{lem:complete_graph_sigma_rho}. Interestingly, we also observe a uniform risk profile for the \( p \)-cycle graph. This is consistent with Lemma~\ref{lem:p_cycle_graph_sigma}, which states that the variances of all agents in a \( p \)-cycle graph are identical, combined with the constraint~\eqref{eqn:opt_constraint_rho}, which bounds the absolute value of correlation within the interval \([0, 1)\).

In contrast, the path graph exhibits a non-uniform risk distribution, where the distributionally robust cascading risk increases with distance from the failed agent \( i \) along the communication graph.


\subsubsection{Uncertainty in Network Weights}
Figures~\ref{fig:dr_cascading_risk_profile}(i)--(l) illustrate the impact of uncertainty in the network parameters on the cascading risk profile. The relative pattern of distributionally robust cascading risk is qualitatively similar to the case with uncertainty in the time-delay parameter. However, in the case of the path graph, we observe a significant amplification in the magnitude of the DR risk.

This behavior can be explained by examining the graph of the function \( f_{\tau}(\lambda_i) \) in Fig. \ref{fig:f_tau_lambda_graph}, where the product of the Laplacian eigenvalue and the time-delay lies close to the left asymptote of the function. In such a regime, even small variations in the edge weights of the graph can lead to large fluctuations in the system response, resulting in elevated cascading risk. Figure~\ref{fig:dr_risk_network_uncertainty_tau_0} illustrates the distributionally robust (DR) risk associated with fluctuations in network weights for the case of zero time delay. The empirical results demonstrate that the DR risk decreases as network connectivity increases. Furthermore, the figure underscores the role of parameter uncertainty in amplifying the risk of cascading large fluctuations.





\subsection{Impact of Network Connectivity on DR Risk}
While greater network connectivity is typically associated with enhanced robustness, it does {not} necessarily lead to a reduction in the distributionally robust cascading risk. We illustrate this by again considering the example of unweighted graphs with a uniform change in edge weights for simplicity. This effect is evident in Fig.~\ref{fig:dr_risk_vs_network_weights}, particularly for the complete graph and the $14$-cycle graph, where the risk displays a distinctly {non-monotonic} dependence on the edge weights. This behavior stems from the way the covariance matrix depends on the graph Laplacian, as characterized in Lemma~\ref{lem:sigma_y_steady}, which formalizes the non-monotonic relationship between connectivity and the covariance matrix within the cone of positive semidefinite matrices. Consequently, our results show that, in certain regions of the connectivity space (Fig.~\ref{fig:f_tau_lambda_graph}), increasing connectivity can, perhaps counterintuitively, {increase} the distributionally robust risk of cascading failures.

\section{Conclusion} \label{sec:conclusion}

This work develops a distributionally robust risk framework for analyzing cascading failures in time-delayed, first-order networked dynamical systems. We derive explicit formulas for the ambiguity sets of the probability distributions of observables under uncertainties in various model parameters, and provide an optimization framework for evaluating the distributionally robust cascading risk, offering both theoretical insights and empirical validation. Our results demonstrate how steady-state statistics influence risk in multi-agent rendezvous, highlighting the critical role of ambiguity sets and the nontrivial effect of network connectivity. Furthermore, we derive fundamental limits on the distributionally robust cascading risk as a function of network parameters and the radius of the ambiguity set, underscoring how connectivity and parameter uncertainty can exacerbate risk.  

Future work will focus on extending the framework to scenarios involving multiple agent failures, enabling resilience guarantees under more general fault patterns. These insights are expected to support the development of distributed control strategies that remain robust despite an arbitrary number of failed agents. In addition, we will pursue distributionally robust risk analysis under arbitrary variations in the weights of the underlying network, further enhancing the applicability of the proposed approach.

\begin{figure}[h]
    \centering
    \begin{subfigure}[t]{0.48\linewidth}
        \centering
        \includegraphics[width=\linewidth]{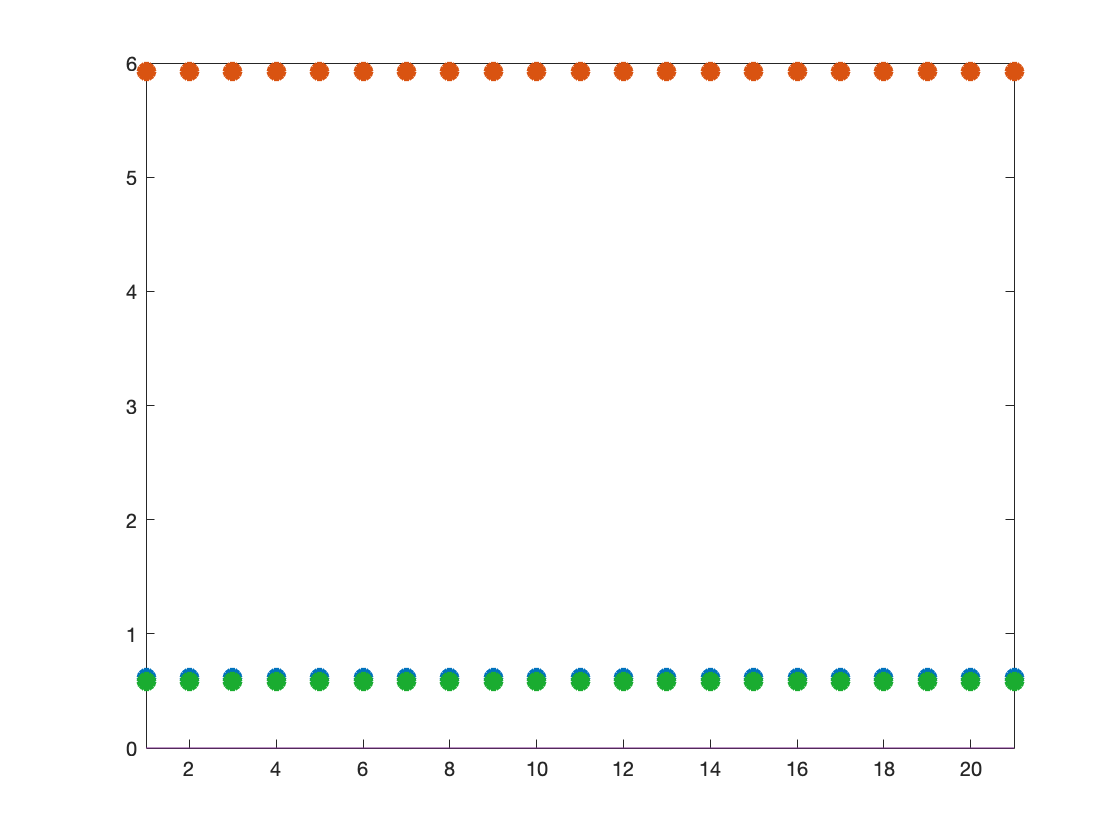}
        \caption{The complete graph.}
    \end{subfigure}
    \hfill
    \begin{subfigure}[t]{0.48\linewidth}
        \centering
        \includegraphics[width=\linewidth]{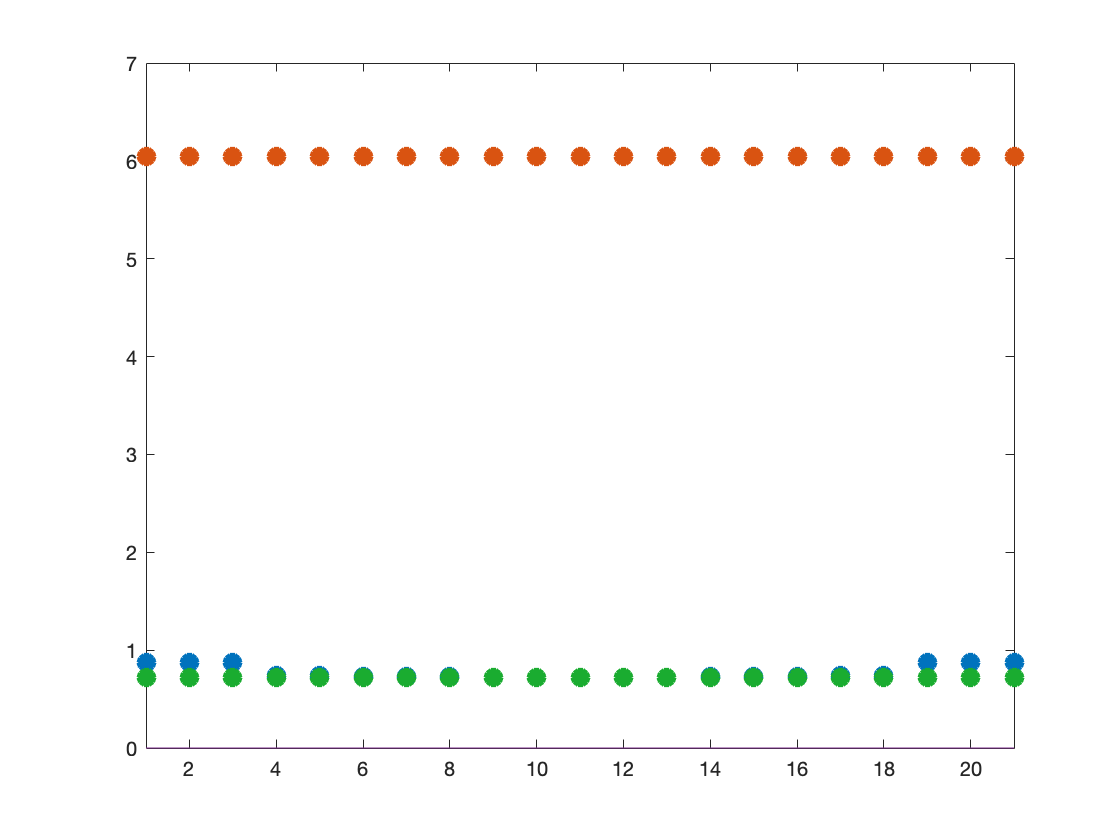}
        \caption{The $14$-cycle graph.}
    \end{subfigure}
    \vfill
    \begin{subfigure}[t]{0.48\linewidth}
        \centering
        \includegraphics[width=\linewidth]{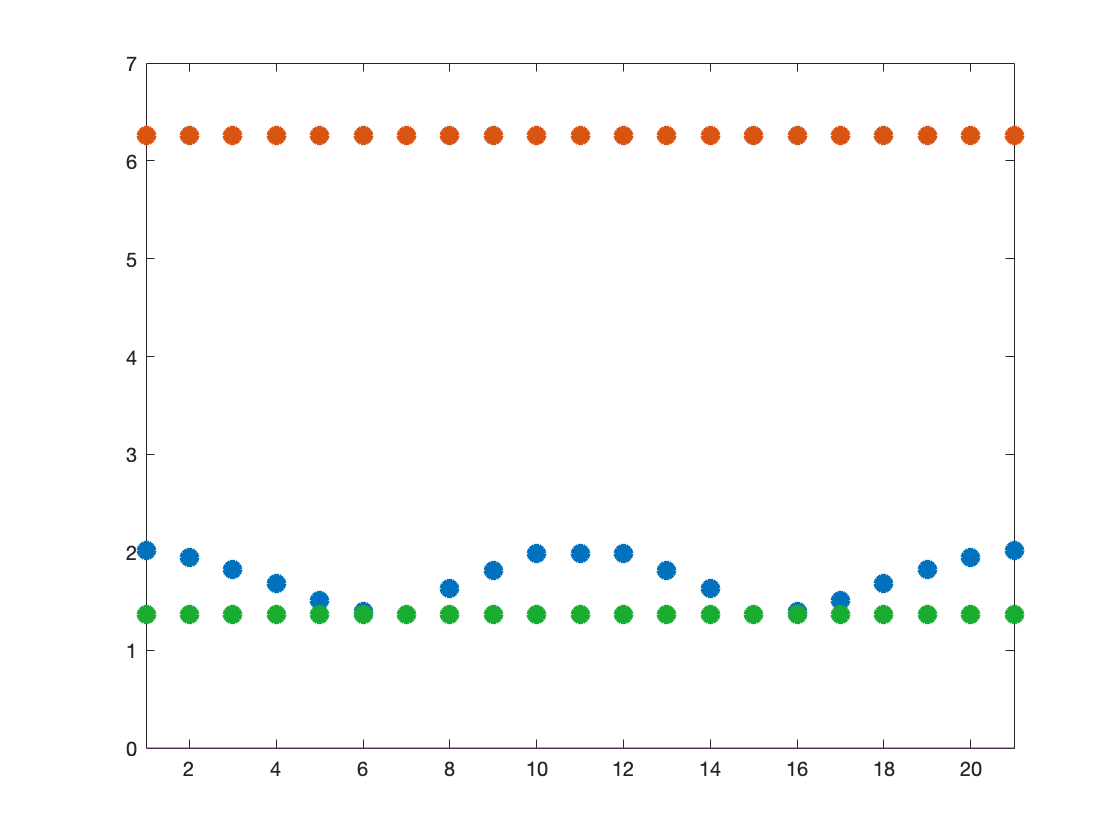}
        \caption{The $6$-cycle graph.}
    \end{subfigure}
    \hfill
    \begin{subfigure}[t]{0.48\linewidth}
        \centering
        \includegraphics[width=\linewidth]{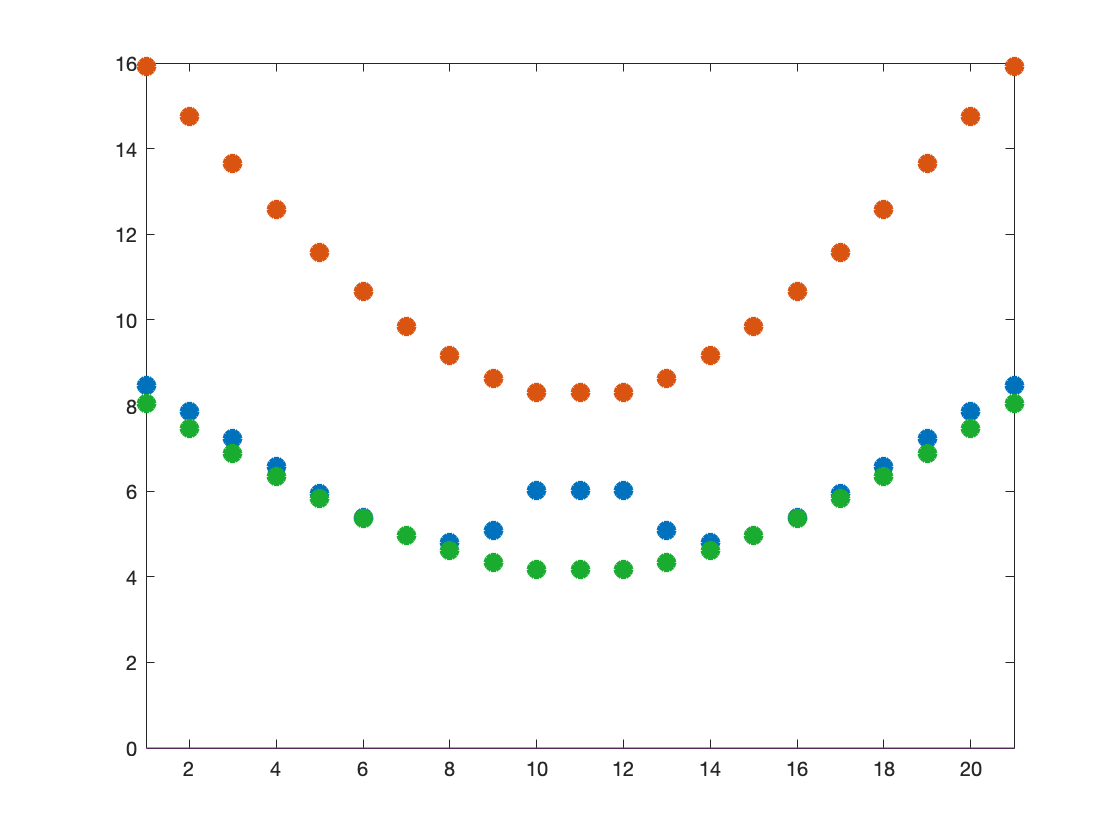}
        \caption{The path graph.}
    \end{subfigure}
    \caption{%
Distributionally robust cascading risk for uncertainty in network parameters with zero time delay. 
The markers indicate the type of risk: 
\textcolor[rgb]{0.8500,0.3250,0.0680}{\textbullet} DR cascading risk, 
\textcolor[rgb]{0,0.4470,0.7410}{\textbullet} cascading risk without distributional robustness, and 
\textcolor[rgb]{0.1000,0.6740,0.1880}{\textbullet} single-agent risk.
}
    \label{fig:dr_risk_network_uncertainty_tau_0}
\end{figure}

\begin{figure}[t]
    \centering
    \begin{subfigure}[t]{0.48\linewidth}
        \centering
        \includegraphics[width=\linewidth]{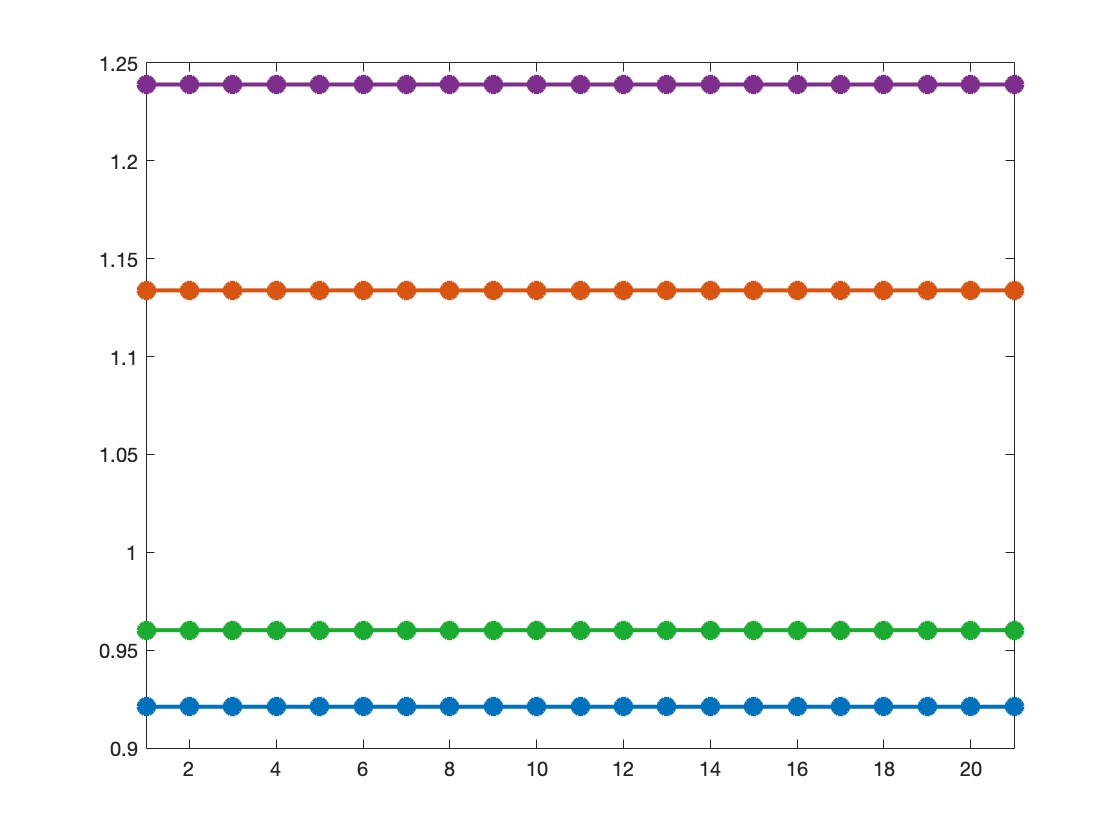}
        \caption{The complete graph.}
    \end{subfigure}
    \hfill
    \begin{subfigure}[t]{0.48\linewidth}
        \centering
        \includegraphics[width=\linewidth]{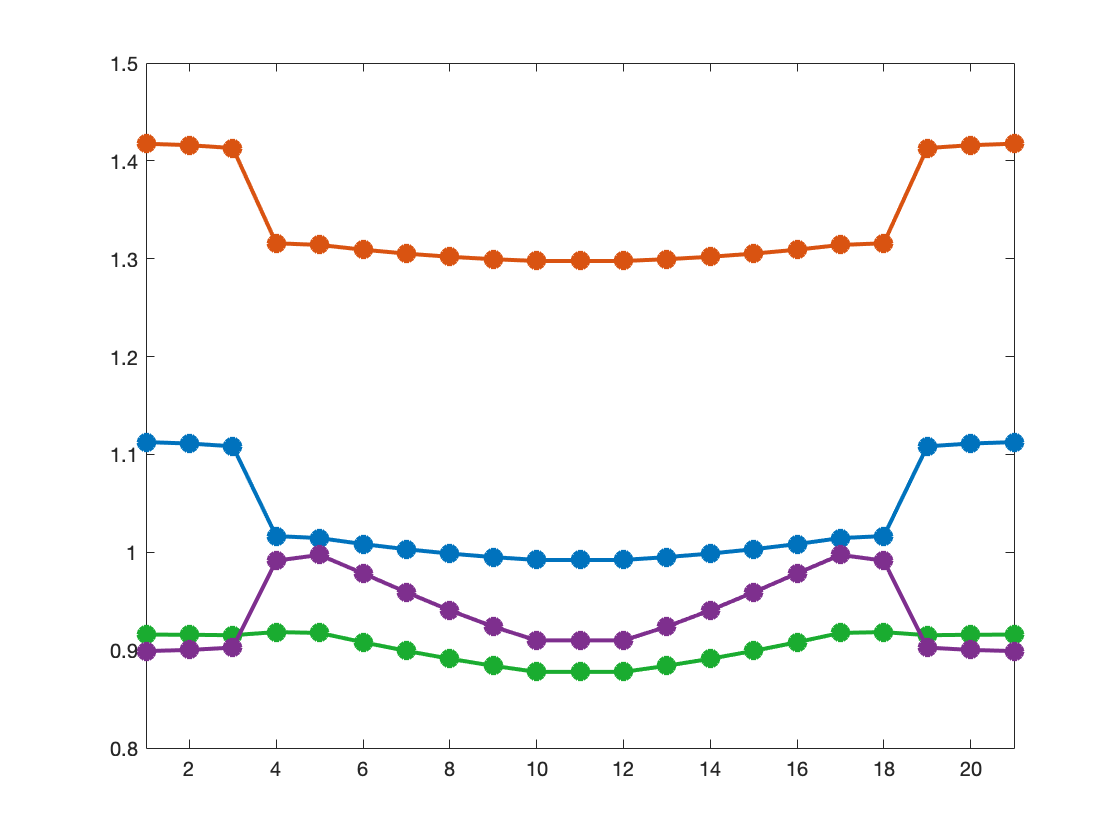}
        \caption{The $14$-cycle graph.}
    \end{subfigure}
    \vfill
    \begin{subfigure}[t]{0.48\linewidth}
        \centering
        \includegraphics[width=\linewidth]{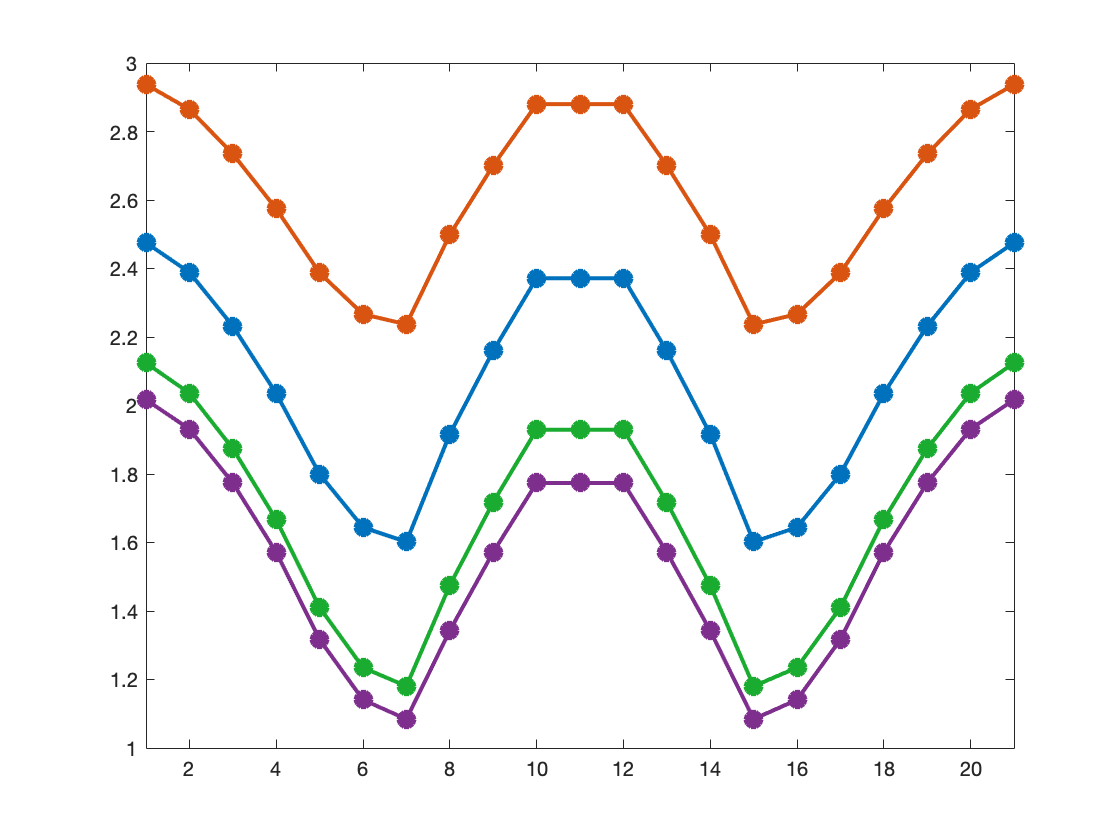}
        \caption{The $6$-cycle graph.}
    \end{subfigure}
    \hfill
    \begin{subfigure}[t]{0.48\linewidth}
        \centering
        \includegraphics[width=\linewidth]{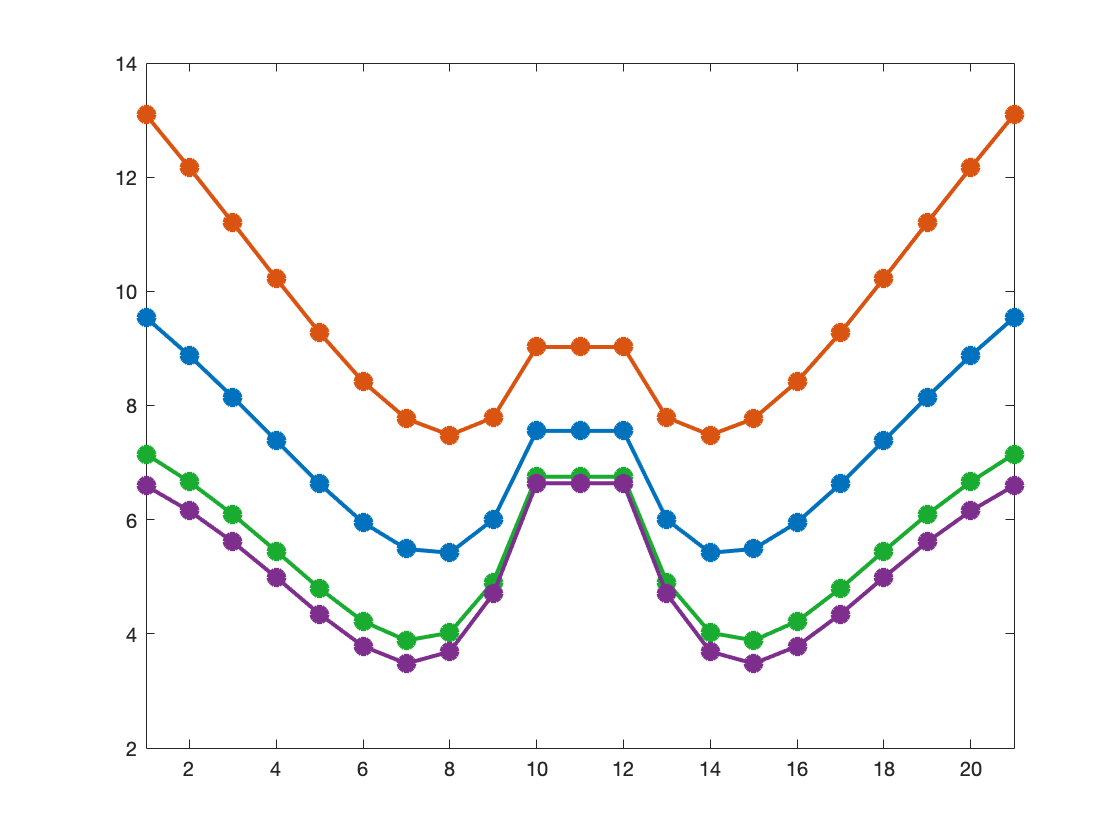}
        \caption{The path graph.}
    \end{subfigure}
    \caption{%
Distributionally robust cascading risk for uncertainty in diffusion coefficient with different weights of the communication graph. 
The markers indicate the different weights: 
\textcolor[rgb]{0.8500,0.3250,0.0680}{\textbullet} \(\omega = 0.25\), 
\textcolor[rgb]{0,0.4470,0.7410}{\textbullet} \(\omega = 0.5\), 
\textcolor[rgb]{0.1000,0.6740,0.1880}{\textbullet} \(\omega = 1\),
\textcolor[rgb]{0.4940,0.1840,0.5560}{\textbullet} \(\omega = 1.25\).
}
    \label{fig:dr_risk_vs_network_weights}
\end{figure}


\printbibliography
\begin{IEEEbiography}
[{\includegraphics[width=1in,height=1.25in,clip,keepaspectratio]{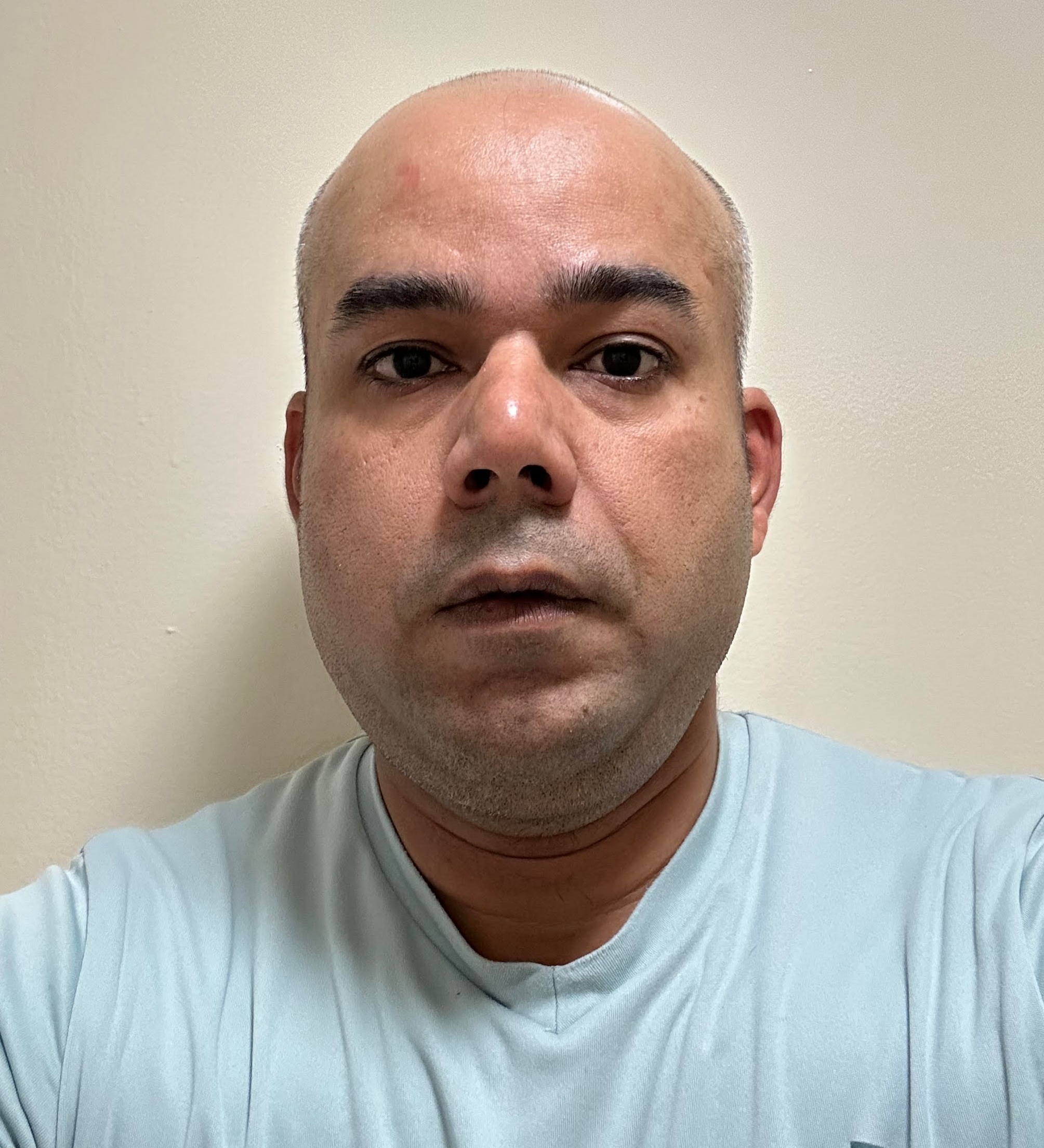}}]
{Vivek Pandey}
Vivek Pandey received his B.Tech and M.Tech degree in Chemical Engineering from Indian Institute of Technology, Mumbai, India in 2014. He is currently pursuing a Ph.D. degree in the Department of Mechanical Engineering and Mechanics at Lehigh University. His research interests include networked control systems.
\end{IEEEbiography}

\begin{IEEEbiography}[{\includegraphics[width=1in,height=1.25in,clip,keepaspectratio]{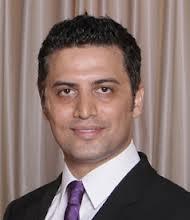}}]{Nader Motee}
Nader Motee (Senior Member, IEEE) received 
the B.Sc. degree in electrical engineering from 
the Sharif University of Technology, Tehran, 
Iran, in 2000, and the M.Sc. and Ph.D. degrees 
in electrical and systems engineering from the 
University of Pennsylvania, Philadelphia, PA, 
USA, in 2006 and 2007, respectively. 
From 2008 to 2011, he was a Postdoctoral 
Scholar with the Control and Dynamical Systems Department, California Institute of Technology, Pasadena, CA, USA. He is currently a 
Professor with the Department of Mechanical Engineering and Mechanics, Lehigh University, Bethlehem, PA, USA. His research interests
include distributed control systems and real-time robot perception. 
Dr. Motee was the recipient of several awards including the 2019 Best 
SIAM Journal of Control and Optimization Paper Prize, the 2008 AACC 
Hugo Schuck Best Paper Award, the 2007 ACC Best Student Paper 
Award, the 2008 Joseph and Rosaline Wolf Best Thesis Award, the 
2013 Air Force Office of Scientific Research Young Investigator Program 
Award, 2015 NSF Faculty Early Career Development Award, and a 2016 
Office of Naval Research Young Investigator Program Award.
 \end{IEEEbiography}

\appendix
\begin{proof}[Proof of Lemma \ref{lem:sigma_y_steady}]
   A detailed proof of this result is provided in \cite{Somarakis2019g}. 
\end{proof}


\begin{proof}[Proof of Lemma \ref{lem:principle_covariance_invertibility}]
    

We begin by proving that any \((n{-}1) \times (n{-}1)\) principal submatrix of \(\Sigma\) is invertible, and then extend the result to submatrices of smaller sizes.

From \eqref{eq:sigma_y}, it is evident that \(\Sigma\) is not invertible, with a nullspace spanned by the all-ones vector \(\bm{1}_n\).

Now, consider the principal submatrix \(\Sigma_{\vert i}\) obtained by removing the \(i\)th row and column of \(\Sigma\). We prove invertibility of \(\Sigma_{\vert i}\) by contradiction. Suppose \(\Sigma_{\vert i}\) is not invertible. Then there exists a nonzero vector \(\tilde{v} \in \mathbb{R}^{n-1}\) such that \(\tilde{v}^\top \Sigma_{\vert i} \tilde{v} = 0\).

Let \(P \in \mathbb{R}^{(n-1) \times n}\) be the matrix that selects all rows except the \(i\)th; specifically, \(P\) is formed by deleting the \(i\)th standard basis vector \(\bm{e}_i\). Then, it follows that
\begin{equation} \label{eqn:principle_submatrix_Sigma}
    \Sigma_{\vert i} = P \Sigma P^\top.
\end{equation}
This implies
\[
\tilde{v}^\top P \Sigma P^\top \tilde{v} = 0 \quad \Rightarrow \quad (P^\top \tilde{v})^\top \Sigma (P^\top \tilde{v}) = 0,
\]
so the vector \(P^\top \tilde{v} \in \mathbb{R}^n\) lies in the nullspace of \(\Sigma\). However, since \(P^\top \tilde{v}\) has zero in the \(i\)th coordinate by construction, it cannot be proportional to \(\bm{1}_n\), contradicting the assumption that the nullspace of \(\Sigma\) is spanned only by \(\bm{1}_n\).

Therefore, \(\Sigma_{\vert i}\) must be invertible for all \(i \in \{1, \dots, n\}\).

Finally, the invertibility of all smaller principal submatrices follows from the standard result that any principal submatrix of a positive definite matrix is itself positive definite (and thus invertible) \cite{horn2012matrix}.
\end{proof}

\begin{proof}[Proof of Lemma \ref{lem:bivariate_normal}]
Let $(y_i,y_j)^\top \sim \mathcal N(0,\Sigma_{\mid ij})$ with
\[
\Sigma_{\mid ij}=\begin{bmatrix}
\sigma_i^2 & \rho\,\sigma_i\sigma_j\\
\rho\,\sigma_i\sigma_j & \sigma_j^2
\end{bmatrix},\qquad \rho'=\sqrt{1-\rho^2}.
\]

By Lemma~\ref{lem:principle_covariance_invertibility}, $\Sigma_{\mid ij}$ is invertible, so the standard formula for the bivariate normal density applies \cite{tong2012multivariate}:
\[
f(y_j,y_i)
=\frac{1}{2\pi\rho'\sigma_i\sigma_j}
\exp\!\left(
-\frac{1}{2\rho'^2}\Big(\tfrac{y_i^2}{\sigma_i^2}-\tfrac{2\rho\,y_i y_j}{\sigma_i\sigma_j}+\tfrac{y_j^2}{\sigma_j^2}\Big)
\right).
\]
Completing the square in $y_i$ yields
\[
-\frac{1}{2\rho'^2}\Big(\tfrac{y_i^2}{\sigma_i^2}-\tfrac{2\rho\,y_i y_j}{\sigma_i\sigma_j}+\tfrac{y_j^2}{\sigma_j^2}\Big)
=-\frac{y_j^2}{2\sigma_j^2}
-\frac{\big(y_i-\rho\frac{\sigma_i}{\sigma_j}y_j\big)^2}{2\rho'^2\sigma_i^2},
\]
which gives the claimed expression. Equivalently, the joint density factors into the marginal 
$y_j \sim \mathcal N(0,\sigma_j^2)$ and the conditional 
$y_i \mid y_j \sim \mathcal N\!\left(\rho\frac{\sigma_i}{\sigma_j}y_j,\;\rho'^2\sigma_i^2\right)$.
\end{proof}




\begin{proof}[Proof of Proposition \ref{prop:ambiguity_b}]
We begin by constructing the ambiguity set for a general diffusion coefficient matrix \(B\) in \eqref{eqn: network_dynamics}.  
In the context of stochastic differential equations, the input noise covariance is given by \(\Pi = BB^\top.\)  
Suppose that \(\Pi\) admits the following bound:
\begin{equation}\label{eqn:B_uncertainty}
    (1-\alpha_B)\, \Pi_0 \;\preceq\; \Pi \;\preceq\; (1+\alpha_B)\, \Pi_0,
\end{equation}
where \(\Pi_0 = B_0 B_0^\top\) is the nominal input covariance matrix corresponding to the nominal diffusion parameter \(B_0.\)

From \cite{Somarakis2017a}, the steady-state covariance of \(\bm{y}_t\) is given by
\begin{align*}
    \Sigma 
    = \lim_{t \to \infty}\int_{0}^{t} 
      M_n \Phi_L(t-s)\, B B^\top \Phi_L^\top(t-s) M_n \, ds,
\end{align*}
where \(\Phi_L(\cdot)\) denotes the principal solution of the deterministic part of \eqref{eqn: network_dynamics}.  

Now consider the inequality 
\(\Pi = BB^\top \succeq (1-\varepsilon) \Pi_0.\)  
On the cone of positive semidefinite matrices, it follows that
\begin{equation} \label{eqn:ambiguity_set_cone_y}
    \Tilde{\Phi}_L(t-s)\, \Pi \,\Tilde{\Phi}_L^\top(t-s)  
    \;\succeq\; (1-\varepsilon)\, 
    \Tilde{\Phi}_L(t-s)\, \Pi_0 \,\Tilde{\Phi}_L^\top(t-s),
\end{equation}
where \(\Tilde{\Phi}_L(t-s) = M_n \Phi_L(t-s).\)  
Taking the integral and limit on both sides of \eqref{eqn:ambiguity_set_cone_y} establishes the desired lower bound.  

While the inequality \eqref{eqn:ambiguity_set_cone_y} applies to the general case, we restrict attention to the simplified setting \(B = b I_n\), and consider uncertainty only in the scalar parameter \(b\). Combined with Lemma~\ref{lem:sigma_y_steady}, this yields the stated result. The upper bound follows analogously.  
\end{proof}

\begin{proof}[Proof of Proposition~\ref{prop:ambiguity_time_delay}]
We begin by establishing the monotonicity of the function \( f_{\lambda_i}(\tau) \), as defined in \eqref{eqn:f_lambda_tau}, with respect to the delay parameter \( \tau \). Its derivative is given by:
\[
\frac{d}{d\tau} f_{\lambda_i}(\tau) = \frac{1}{\lambda_i \left(1 - \sin(\lambda_i \tau)\right)} > 0,
\]
for all \( i \in \{2, \dots, n\} \), where the positivity follows from Assumptions~\ref{asp:connected} and~\ref{asp:stable}, ensuring \( \lambda_i > 0 \) and \( \sin(\lambda_i \tau) < 1 \). Hence, \( f_{\lambda_i}(\tau) \) is strictly increasing in \( \tau \).

Now define the relative perturbation bound:
\[
\varepsilon_\tau^+ := \max_{i \in \{2, \dots, n\}} \frac{f_{\lambda_i}\left((1 + \alpha_\tau)\tau_0\right) - f_{\lambda_i}(\tau_0)}{f_{\lambda_i}(\tau_0)}.
\]
This implies the pointwise bound:
\[
f_{\lambda_i}\left((1 + \alpha_\tau)\tau_0\right) \leq (1 + \varepsilon_\tau^+) f_{\lambda_i}(\tau_0), \quad \forall i \in \{2, \dots, n\}.
\]

By spectral mapping, we then obtain the matrix inequality:
\[
\Psi\big|_{\tau = (1 + \alpha_\tau)\tau_0} \preceq (1 + \varepsilon_\tau^+) \Psi_0,
\]
where \( \Psi_0 := \Psi\big|_{\tau = \tau_0} \). Since congruence with \( Q \) and \( M_n \) preserves the Loewner order, it follows that:
\[
\Sigma = b_0 M_n Q \Psi Q^\top M_n^\top \preceq (1 + \varepsilon_\tau^+) \Sigma_0,
\]
where \( \Sigma_0 := b_0 M_n Q \Psi_0 Q^\top M_n^\top \).

An analogous argument for \( \tau = (1 - \alpha_\tau)\tau_0 \) yields a lower bound using:
\[
\varepsilon_\tau^- := \max_{i \in \{2, \dots, n\}} \frac{f_{\lambda_i}(\tau_0) - f_{\lambda_i}\left((1 - \alpha_\tau)\tau_0\right)}{f_{\lambda_i}(\tau_0)},
\]
which gives:
\[
\Sigma \succeq (1 - \varepsilon_\tau^-) \Sigma_0.
\]

Combining both bounds, we obtain:
\[
(1 - \varepsilon_\tau^-) \Sigma_0 \preceq \Sigma \preceq (1 + \varepsilon_\tau^+) \Sigma_0. \qedhere
\]
\end{proof}

\begin{proof}[Proof of Proposition~\ref{prop:ambiguity_edge_weight_tau_zero}]
Let the perturbed Laplacian be written as 
\[
L = B(W+W_{\Delta})B^\top = L_0 + \Delta.
\]
We first bound $L$ in terms of $L_0$. Observe that
\[
L \preceq (I + \Delta L_0^{\dagger})\,L_0,
\]
and since $\|I + \Delta L_0^{\dagger}\| = 1 + \|\Delta L_0^{\dagger}\|$, it follows that
\[
L \preceq \bigl(1 + \|\Delta L_0^{\dagger}\|\bigr)L_0.
\]
By a symmetric argument, the lower bound is obtained, yielding
\[
\bigl(1 - \|\Delta L_0^{\dagger}\|\bigr)L_0 
\;\preceq\; L 
\;\preceq\; \bigl(1 + \|\Delta L_0^{\dagger}\|\bigr)L_0.
\]

Since the steady-state covariance is given by $\Sigma = \tfrac{1}{2}L^{\dagger}$, we equivalently obtain
\begin{equation}\label{eqn:covariance_bound_inverse}
    \frac{1}{1 + \|\Delta L_0^{\dagger}\|}\,\Sigma_0 
    \;\preceq\; \Sigma 
    \;\preceq\; \frac{1}{1 - \|\Delta L_0^{\dagger}\|}\,\Sigma_0.
\end{equation}
Introducing the parameters
\[
\varepsilon_{\omega}^- := \frac{\|\Delta L_0^{\dagger}\|}{1+\|\Delta L_0^{\dagger}\|},
\qquad 
\varepsilon_{\omega}^+ := \frac{\|\Delta L_0^{\dagger}\|}{1-\|\Delta L_0^{\dagger}\|},
\]
the bounds in \eqref{eqn:covariance_bound_inverse} may be expressed in the form
\begin{equation}\label{eqn:covariance_bound_inverse_reversed}
    \bigl(1 - \varepsilon_{\omega}^-\bigr)\,\Sigma_0 
    \;\preceq\; \Sigma 
    \;\preceq\; \bigl(1 + \varepsilon_{\omega}^+\bigr)\,\Sigma_0.
\end{equation}

For sufficiently small perturbations, i.e., $\|\Delta L_0^{\dagger}\| \ll 1$, both $\varepsilon_\omega^\pm$ reduce to $\|\Delta L_0^{\dagger}\|$ to first order, so that the bound simplifies to
\[
\bigl(1 - \|\Delta L_0^{\dagger}\|\bigr)\Sigma_0 
\;\preceq\; \Sigma 
\;\preceq\; \bigl(1 + \|\Delta L_0^{\dagger}\|\bigr)\Sigma_0.
\]
The condition $\|\Delta L_0^{\dagger}\|\ll 1$ holds whenever $\|\Delta\|\ll \|L_0\|$.  
\end{proof}

\begin{proof}[Proof of Proposition ~\ref{prop:ambiguity_edge_weight_tau_non_zero}]
The proof proceeds along similar lines as the proof of Proposition~\ref{prop:ambiguity_time_delay}. Although this behavior is evident from Fig.~\ref{fig:f_tau_lambda_graph}, we first establish the existence of a unique minimum of \( f_{\tau}(\lambda) \) by solving the equation
\[
\frac{d}{d\lambda}\left(f_{\tau}(\lambda)\right) = 0,
\]
which leads to the condition
\[
\left\{\bar{\lambda} \;\middle|\; \lambda \tau - \cos(\lambda \tau) = 0 \right\}.
\]

Next, we consider two cases corresponding to the regions where the function \( f_{\tau}(\lambda_i) \), as shown in Fig.~\ref{fig:f_tau_lambda_graph}, is monotonic. We then follow the same steps as in the earlier proof: establishing a pointwise bound, deriving the corresponding matrix inequality, and finally quantifying the terms \( \varepsilon_{\omega}^{\pm} \).
\end{proof}


\begin{proof}[Proof of Lemma \ref{lem:conditional_expectation}]
    Using the notation in \cite{durrett2019probability}
\begin{align*}
    \mathbb{E}_{\mathbb{P}|\mathcal{G}_i}[\lvert y_j \rvert] &= \frac{\mathbb{E}_{\mathbb{P}|\mathcal{G}_i}[ \lvert y_j \rvert; U_{\delta^i}]}{\mathbb{P}\left(y_i \in U_{\delta^i} \right))}\\
    &= \dfrac{\mathbb{E}_{\mathbb{P}}[\lvert y_j \rvert\bm{1}_{y_i \in U_{\delta^i}}]}{\mathbb{P}[ {y_i \in U_{\delta^i}}]}.
\end{align*}
\end{proof}
\begin{proof}[Proof of Theorem \ref{thm:DR_risk}]

To prove this theorem, we state the following integration result as Lemma \ref{lem:korotkov_integral}.
\begin{lemma}\label{lem:korotkov_integral}
The following integral identities, adapted from \cite{korotkov2020_error}, will be used in the subsequent analysis:
\begin{multline} \label{eqn:korotkov_integral_two}
\int_{0}^{\infty} z \, \textnormal{erf}\left(a_1 z + b_1\right) \exp\left(-a_2 z^2\right) \, dz 
= \frac{\textnormal{erf}\left(b_1\right)}{2 a_2} \\
+ \frac{a_1}{2 a_2 \sqrt{A}} \exp\left(- \frac{a_2 b_1^2}{A}\right) \left(1 - \textnormal{erf}\left(\frac{a_1 b_1}{\sqrt{A}}\right)\right),
\end{multline}

\begin{equation} \label{eqn:korotkov_integral_one}
\int_{0}^{\infty} z \, \exp\left(-a_2^2 z^2\right) \, dz = \frac{1}{2 a_2^2}.
\end{equation}
where \(a_1 \geq 0\), \(a_2 > 0\), and \(A = a_1^2 + a_2\).
\end{lemma}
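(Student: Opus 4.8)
The plan is to verify the two identities separately: the second is immediate, and the first follows from one integration by parts together with a completion of the square.

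For \eqref{eqn:korotkov_integral_one}, the substitution $u = a_2^2 z^2$ (so that $\mathrm{d}u = 2a_2^2 z\,\mathrm{d}z$) turns the integral into $\tfrac{1}{2a_2^2}\int_0^\infty e^{-u}\,\mathrm{d}u = \tfrac{1}{2a_2^2}$, with $a_2 > 0$ ensuring convergence. Nothing further is required.

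For \eqref{eqn:korotkov_integral_two}, I would integrate by parts with $u = \textnormal{erf}(a_1 z + b_1)$ and $\mathrm{d}v = z\,e^{-a_2 z^2}\,\mathrm{d}z$, the point being that $v = -\tfrac{1}{2a_2}e^{-a_2 z^2}$ is clean while $\mathrm{d}u = \tfrac{2a_1}{\sqrt{\pi}}e^{-(a_1 z + b_1)^2}\,\mathrm{d}z$ collapses the error function to a Gaussian. The boundary term vanishes as $z \to \infty$ because $e^{-a_2 z^2}\to 0$ (using $a_2 > 0$), and at $z = 0$ it contributes $\tfrac{1}{2a_2}\textnormal{erf}(b_1)$, which is exactly the first term on the right-hand side. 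What remains is $\tfrac{a_1}{a_2\sqrt{\pi}}\int_0^\infty e^{-a_2 z^2 - (a_1 z + b_1)^2}\,\mathrm{d}z$. Here I would merge the two exponents into one quadratic $-Az^2 - 2a_1 b_1 z - b_1^2$ with $A = a_1^2 + a_2 > 0$, complete the square using $1 - a_1^2/A = a_2/A$ to get the exponent $-A\bigl(z + \tfrac{a_1 b_1}{A}\bigr)^2 - \tfrac{a_2 b_1^2}{A}$, factor $e^{-a_2 b_1^2/A}$ out of the integral, and substitute $t = \sqrt{A}\bigl(z + \tfrac{a_1 b_1}{A}\bigr)$. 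This leaves the Gaussian tail $\tfrac{1}{\sqrt{A}}\int_{a_1 b_1/\sqrt{A}}^{\infty} e^{-t^2}\,\mathrm{d}t = \tfrac{\sqrt{\pi}}{2\sqrt{A}}\bigl(1 - \textnormal{erf}(a_1 b_1/\sqrt{A})\bigr)$, and collecting constants reproduces the second term on the right-hand side.

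The computation is routine; the closest thing to a delicate point is bookkeeping — tracking the sign of the boundary term, and keeping the shift $a_1 b_1/A$ and the lower limit $a_1 b_1/\sqrt{A}$ straight when completing the square, since this is where the complementary-error-function term and its argument are produced. One uses $a_2 > 0$ both for convergence and for $\sqrt{A}$ to be real, whereas $a_1 \ge 0$ is merely a mild regularity condition (by oddness of $\textnormal{erf}$ the identity in fact holds for all real $a_1$). Since the statement is adapted from \cite{korotkov2020_error}, citing the reference would also suffice, but the self-contained derivation above is short. These two identities are precisely what allow the bivariate-Gaussian integral defining $\mathbb{E}^j_i$ — obtained by factoring the density of Lemma~\ref{lem:bivariate_normal} into its marginal and conditional Gaussian parts and integrating out the conditioning variable — to be evaluated in closed form, which is the content of Theorem~\ref{thm:DR_risk}.
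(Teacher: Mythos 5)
Your proposal is correct. The paper itself offers no derivation of this lemma --- it is stated inside the proof of Theorem~\ref{thm:DR_risk} and justified solely by the citation to the reference --- so your self-contained argument (substitution for the second identity; integration by parts transferring the derivative onto the error function, followed by completing the square with $1-a_1^2/A=a_2/A$ and recognizing the Gaussian tail as a complementary error function for the first) is a strictly more informative route that I have checked term by term: the boundary term yields $\tfrac{\operatorname{erf}(b_1)}{2a_2}$, and the remaining Gaussian integral collapses to $\tfrac{a_1}{2a_2\sqrt{A}}e^{-a_2b_1^2/A}\bigl(1-\operatorname{erf}(a_1b_1/\sqrt{A})\bigr)$ exactly as claimed. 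Your observation that $a_1\ge 0$ is not actually needed (the derivation goes through verbatim for any real $a_1$, since $\int_c^\infty e^{-t^2}\,\mathrm{d}t=\tfrac{\sqrt{\pi}}{2}\operatorname{erfc}(c)$ holds for all real $c$) is also accurate and slightly sharpens the stated hypotheses; what the citation-only approach buys the paper is brevity, while your derivation makes the lemma verifiable without consulting the external table of integrals.
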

Using Lemma \ref{lem:conditional_expectation}
\[
    \mathbb{E}\left[ \lvert y_j \rvert| y_i \in U_{\delta^i}\right] =  \frac{\mathbb{E}\left[ \vert y_j \rvert \bm{1}_{\left[y_i \in U_{\delta^i}\right]}\right]}{\mathbb{P}\left[\bm{1}_{\left[y_i \in U_{\delta^i}\right]}\right]}
\]


We analyze the numerator and denominator separately.
\begin{align} \label{eqn: conditional_expectation_denominator}
\mathbb{P}\left[y_i \in U_{\delta^i}\right] \notag
    &= \int_{U_{\delta^i}}^{} ~ \frac{1}{\sigi \sqrt{2 \pi}} \text{exp}\left(- \frac{y_i^2}{\sigi^2}\right) dy_i \notag\\
    & = 1 - \text{erf} \left({\delta^{*}}\right),
\end{align}
where \(\delta^{*}\) is as defined in Theorem \ref{thm:DR_risk}.

Using joint distributions of \(y_j\) and \(y_i\) in Lemma \ref{lem:bivariate_normal}, 
\begin{align*}
    \mathbb{E}\left[ \lvert y_j \rvert \bm{1}_{\left[y_i \in U_{\delta^i}\right]}\right] &= \int_{-\infty}^{\infty} \int_{y_i \in U_{\delta^i}}^{} ~ \lvert y_j \rvert p(y_j,y_i)  dy_j dy_i\\
         &= \frac{2}{\sqrt{2 \pi} \sigj} \int_{0}^{\infty} y_j ~\text{exp}\left(-\frac{y_j^2}{2\sigj^2} \right) I_i dy_j,
\end{align*}
where 
\begin{align*}
    I_i &= \frac{1}{\sqrt{2 \pi} \rho' \sigi} \int_{y_i \in U_{\delta^i}}^{}  \text{exp}\left(- \frac{1}{2 \rho'^2 \sigi^2}  \left({y_i} - \rho \frac{y_j \sigi}{\sigj}\right)^2\right) dy_i,\\
    &= 1 - \frac{1}{2}\left(\text{erf}\left(\frac{\bar{\delta} + \rho \frac{\sigi}{\sigj}y_j}{\sqrt{2}\rho'\sigi}\right) + \text{erf}\left(\frac{\bar{\delta} - \rho \frac{\sigi}{\sigj}y_j}{\sqrt{2}\rho'\sigi}\right)\right),
\end{align*}

For simplicity of notation, consider  \(\kappa_{\bar{\delta}}^{\pm} = \frac{\bar{\delta} \pm \rho \frac{\sigi}{\sigj}y_j}{\sqrt{2}\rho'\sigi}\).

\begin{equation}\label{eqn:conditional_expectation_numerator}
        \mathbb{E}\left[\lvert y_j \rvert\bm{1}_{\left[y_i \in U_{\delta^i}\right]}\right] = \frac{1}{\sqrt{2 \pi} \sigj}\left(I_1 -  I_2\right), 
\end{equation}
where 
\begin{align*}
    I_1&= 2\int_{0}^{\infty} y_j  ~\text{exp}\left(-\frac{y_j^2}{2\sigj^2} \right) dy_j =   \mathbb{E}_{\mathbb{P}}[\lvert y_j \rvert],\\
    I_2 &=\int_{0}^{\infty} y_j \left(\text{erf}\left(\kappa_{\bar{\delta}}^{+}\right) + \text{erf}\left(\kappa_{\bar{\delta}}^{-}\right)\right)~\text{exp}\left(-\frac{y_j^2}{2\sigj^2} \right) dy_j
 \end{align*}

Using the result from Lemma \ref{lem:korotkov_integral} and substituting \(a_2 = \frac{1}{2 \sigj^2}\) in \eqref{eqn:korotkov_integral_one}, \(I_1\) is given by 
\begin{equation}\label{eqn:I1_evaluation}
    I_1 =   2\sigj^2 \hspace{0.5cm} \mathbb{E}_{\mathbb{P}}\left[\lvert y_j\rvert\right] = \sqrt{\frac{2}{\pi}} \sigma_j.
\end{equation}

Consider the following notation
\[I_2^{\pm} =\int_{0}^{\infty}\lvert y_j \rvert \left(\text{erf}\left(\kappa_{{\bar{\delta}}}^{\pm}\right) \right)~\text{exp}\left(-\frac{y_j^2}{2\sigj^2} \right) dy_j.\]

 Substituting the following in \eqref{eqn:korotkov_integral_one} to solve for \(I_2^{\pm}\):
  \[z = y_j \hspace{0.4cm} a_1 = \frac{\rho}{\sqrt{2} \rho' \sigj}, \hspace{0.4cm} b_1^{\pm} = \pm\frac{\bar{\delta}}{\sqrt{2}\rho' \sigi}, \hspace{0.4cm} a_2 = \frac{1}{2\sigj^2},\]

\begin{equation}\label{eqn:I2_evaluation}
    I_2 = \sqrt{\frac{2}{\pi}} \sigj \left[\textnormal{erf}\left(\frac{\delta^{*}}{\rho'}\right) -\rho \textnormal{erf}\left(\frac{\rho \delta^*}{\rho'}\right)\textnormal{exp}\left(-{\delta^{*^2}}\right)\right]
\end{equation}
The conditions in Lemma \ref{lem:korotkov_integral} is trivially satisfied for \(a_2\). For \(a_1\), without loss of generality \(\rho\) can be assumed positive for solving the integral in Lemma \ref{lem:korotkov_integral}, although in our problem \(\rho \in \left[-1,1\right].\) This is due to the fact that 
\[\text{erf}\left(\kappa_{\bar{\delta}}^{+}\left(-\rho\right)\right) = \text{erf}\left(\kappa_{\bar{\delta}}^{-}\left(\rho\right)\right) \]
The result follows from \eqref{eqn: conditional_expectation_denominator}, \eqref{eqn:conditional_expectation_numerator},\eqref{eqn:I1_evaluation}, \eqref{eqn:I2_evaluation} and \eqref{eqn:condn_dr_risk_definition}.
\end{proof}

\begin{proof}[Proof of Corollary \ref{cor:single_risk_rho_0}]
    From \eqref{eqn:I1_evaluation} in the proof of Theorem \ref{thm:DR_risk}, we obtain  
\[
\mathbb{E}_{\mathbb{P}}\left[\lvert y_j \rvert\right] = \sqrt{\frac{2}{\pi}} \sigma_j.
\]  
Substituting \(\rho = 0\) in Theorem \ref{thm:DR_risk} yields \(\rho' = 1\). The result follows directly by canceling the corresponding terms in the numerator and denominator and then taking the supremum over the ambiguity set defined in \eqref{eqn:ambiguity_set_b}.  
\end{proof}

\begin{proof}[Proof of Proposition~\ref{prop:opt_formulation_over_ambiguity_set}]
The objective function \eqref{eqn:opt_objective} follows directly from Theorem~\ref{thm:DR_risk}. The constraints on the marginal variances \(\sigma_i\) and \(\sigma_j\), given in \eqref{eqn:opt_constraint_sigi}–\eqref{eqn:opt_constraint_sigj}, are derived from the cone inequality defining the ambiguity set \(\mathcal{M}_p\). Specifically, for any covariance matrix \(\Sigma \in \mathcal{M}_p\) satisfying
\[
(1 - \varepsilon_p^-)\Sigma_0 \preceq \Sigma \preceq (1 + \varepsilon_p^+)\Sigma_0,
\]
it follows from the Loewner ordering that each marginal variance satisfies
\[
\sqrt{(1 - \varepsilon_p^-)}\,\sigma_{0,\star} \leq \sigma_{\star} \leq \sqrt{(1 + \varepsilon_p^+)}\,\sigma_{0,\star}, \quad \star \in \{i, j\}.
\]

To justify the constraint \eqref{eqn:opt_constraint_rho} on the correlation coefficient \(\rho\), we consider the structure of the off-diagonal entry \(\sigma_{ij}\), which, from Lemma~\ref{lem:sigma_y_steady}, can be written as
\[
\sigma_{ij} = \tilde{q}_i^\top\, \Psi\, \tilde{q}_j,
\]
where \(\tilde{q}_i\) is the \(i\)th column of the matrix \(M_n Q\), 

Using this structure, the correlation coefficient becomes
\[
\rho = \frac{\tilde{q}_i^\top \Psi \tilde{q}_j}{\sqrt{(\tilde{q}_i^\top \Psi \tilde{q}_i)(\tilde{q}_j^\top \Psi \tilde{q}_j)}}
= \frac{\sum_{k=2}^n \psi_k \tilde{q}_{ik} \tilde{q}_{jk}}{
\sqrt{
\left( \sum_{k=2}^n \psi_k \tilde{q}_{ik}^2 \right)
\left( \sum_{k=2}^n \psi_k \tilde{q}_{jk}^2 \right)
}}.
\]
Consider the special case of \(\mathcal{M}_p\), where \(\Sigma\) is simultaneously diagonalizable. Although the projection vectors \(\tilde{q}_i\) and \(\tilde{q}_j\) are fixed, the eigenvalues \(\{\psi_k\}\) vary within element-wise bounds due to the cone inequality. The resulting expression for \(\rho\) is thus a non-convex, nonlinear rational function of \(\{\psi_k\}\). Because the numerator and denominator respond differently to these uncertainties, no closed-form non-trivial bound on \(|\rho|\) can be established in general. This motivates the use of a conservative structural bound:
\[
|\rho| < 1,
\]
which ensures both feasibility and continuity of the objective function. 

Finally, since the objective \(\mathbb{E}_i^j\left[|y_j|\right]\) is an even function of \(\rho\), the constraint on \(\rho\) can be equivalently written as \(0 \leq |\rho| < 1\). The strict inequality \(|\rho| < 1\) is guaranteed by Lemma~\ref{lem:principle_covariance_invertibility}, which ensures positive definiteness of the marginal covariance matrices and thus non-degenerate normalization in the definition of \(\rho\).
\end{proof}

\begin{lemma}\label{lem:random_var_decomposition}
Let \(y_i\) and \(y_j\) be jointly Gaussian random variables with variances 
\(\sigma_i^2\) and \(\sigma_j^2\), respectively, and correlation coefficient \(\rho\).  
Then \(y_j\) can be decomposed as
\begin{equation}\label{eqn:random_variable_decomposition}
    y_j = \rho \frac{\sigma_j}{\sigma_i} \, y_i + z,
\end{equation}
where \(z \sim \mathcal{N}\!\left(0, \rho'^2 \sigma_j^2\right)\), \(\rho' = \sqrt{1-\rho^2}\), and  
\(z\) is independent of \(y_i\).
\end{lemma}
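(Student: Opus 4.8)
The plan is to verify the decomposition directly by defining the residual \(z := y_j - \rho\frac{\sigma_j}{\sigma_i}y_i\) and checking the three claimed properties in turn: that \(z\) is Gaussian, that it is uncorrelated with (hence independent of) \(y_i\), and that its variance equals \(\rho'^2\sigma_j^2\). Since \(y_i\) and \(y_j\) are zero-mean (being the steady-state observables of \eqref{eqn:steady_state_distribution_observables}), \(z\) is a linear combination of a jointly Gaussian pair and is therefore itself Gaussian with zero mean. Next I would compute
\[
\operatorname{Cov}(z,y_i) = \operatorname{Cov}(y_j,y_i) - \rho\frac{\sigma_j}{\sigma_i}\operatorname{Var}(y_i) = \rho\,\sigma_i\sigma_j - \rho\frac{\sigma_j}{\sigma_i}\,\sigma_i^2 = 0,
\]
and, because \((z,y_i)\) is a jointly Gaussian pair (a linear image of \((y_i,y_j)\)), vanishing covariance implies independence. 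Finally, expanding the variance,
\[
\operatorname{Var}(z) = \sigma_j^2 - 2\rho\frac{\sigma_j}{\sigma_i}\operatorname{Cov}(y_j,y_i) + \rho^2\frac{\sigma_j^2}{\sigma_i^2}\,\sigma_i^2 = \sigma_j^2\bigl(1 - 2\rho^2 + \rho^2\bigr) = \rho'^2\sigma_j^2,
\]
and rearranging \(z = y_j - \rho\frac{\sigma_j}{\sigma_i}y_i\) yields \eqref{eqn:random_variable_decomposition}.

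Alternatively, and more directly, I could invoke Lemma~\ref{lem:bivariate_normal} with the roles of \(i\) and \(j\) interchanged: its factorization exhibits the conditional law \(y_j \mid y_i \sim \mathcal{N}\!\left(\rho\frac{\sigma_j}{\sigma_i}y_i,\ \rho'^2\sigma_j^2\right)\), so setting \(z := y_j - \mathbb{E}[y_j \mid y_i]\) produces a variable with the stated distribution that is independent of \(y_i\) by construction of the conditional density. Lemma~\ref{lem:principle_covariance_invertibility} guarantees the relevant \(2\times2\) covariance is non-singular, so \(\sigma_i > 0\) and the ratio \(\sigma_j/\sigma_i\) is well defined.

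There is no genuine obstacle here; the proof is routine. The only subtlety worth stating explicitly is the implication ``uncorrelated \(\Rightarrow\) independent,'' which holds precisely because \((z,y_i)\) is jointly Gaussian and would fail for general distributions. I would therefore make sure to record that joint Gaussianity of \((z,y_i)\) is inherited from that of \((y_i,y_j)\) before concluding independence.
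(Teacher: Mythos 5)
Your proof is correct, but it runs in the opposite direction from the paper's. The paper \emph{synthesizes}: it takes a fresh \(z \sim \mathcal{N}(0,\rho'^2\sigma_j^2)\) independent of \(y_i\), forms \(\widetilde{y}_j := \rho\tfrac{\sigma_j}{\sigma_i}y_i + z\), checks \(\mathbb{E}[\widetilde{y}_j]=0\) and \(\mathbb{E}[\widetilde{y}_j^2]=\sigma_j^2\), and concludes \(\widetilde{y}_j \overset{d}{=} y_j\) by uniqueness of the characteristic function. You instead \emph{analyze}: you define the residual \(z := y_j - \rho\tfrac{\sigma_j}{\sigma_i}y_i\), note it is Gaussian as a linear image of the jointly Gaussian pair, compute \(\operatorname{Cov}(z,y_i)=0\) and \(\operatorname{Var}(z)=\rho'^2\sigma_j^2\), and invoke joint Gaussianity of \((z,y_i)\) to upgrade uncorrelatedness to independence. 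Your route buys something concrete: it yields the decomposition for the actual \(y_j\) on the original probability space and automatically fixes the joint law of \((z,y_i)\), whereas the paper's construction verifies only the marginal distribution of \(\widetilde{y}_j\) and never checks that \(\operatorname{Cov}(\widetilde{y}_j,y_i)=\rho\,\sigma_i\sigma_j\) --- a detail that matters because the lemma is used downstream inside conditional expectations with respect to \(\mathcal{F}^i\), where equality of the \emph{joint} law with \(y_i\), not just the marginal, is what is actually required. Your explicit caveat that ``uncorrelated \(\Rightarrow\) independent'' rests on joint Gaussianity of \((z,y_i)\) is precisely the point the paper leaves implicit, and your alternative derivation via the conditional density factorization in Lemma~\ref{lem:bivariate_normal} is an equivalent, equally valid shortcut.
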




\begin{proof}[Proof of Lemma \ref{lem:random_var_decomposition}]
This result follows from the well-known fact that the sum of independent Gaussian random variables is Gaussian \cite{durrett2019probability}.  

Let \(z \sim \mathcal{N}(0, \rho'^2 \sigj^2)\) be independent of \(y_i \sim \mathcal{N}(0,\sigi^2)\). Consider the random variable
\[
\widetilde{y}_j := \rho \frac{\sigj}{\sigi} y_i + z.
\]

First, we check the expectation:
\[
\mathbb{E}[\widetilde{y}_j] 
= \rho \frac{\sigj}{\sigi} \, \mathbb{E}[y_i] + \mathbb{E}[z] 
= 0.
\]
Thus, \(\mathbb{E}[\widetilde{y}_j] = \mathbb{E}[y_j] = 0\).  

Next, we check the second moment:
\begin{align*}
\mathbb{E}[\widetilde{y}_j^2] 
&= \mathbb{E}\!\left[\left(\rho \frac{\sigj}{\sigi} y_i + z \right)^2\right] \\
&= \rho^2 \frac{\sigj^2}{\sigi^2} \, \mathbb{E}[y_i^2] + \mathbb{E}[z^2] \tag{independence, zero mean}\\
&= \rho^2 \frac{\sigj^2}{\sigi^2} \cdot \sigi^2 + \rho'^2 \sigj^2 \\
&= \sigj^2.
\end{align*}
Thus, \(\widetilde{y}_j \sim \mathcal{N}(0,\sigj^2)\).  

By the fact that the characteristic function uniquely determines the distribution \cite{durrett2019probability}, we conclude that 
\(
\widetilde{y}_j \overset{d}{=} y_j.
\)
\end{proof}





\begin{lemma}\label{lem:Sigma_rayleigh_bound}
Let $\Sigma \in \mathbb{R}^{n \times n}$ be the covariance matrix defined in Lemma~\ref{lem:sigma_y_steady}, and let $0 = \psi_1 < \psi_2 \le \dots \le \psi_n$ denote its eigenvalues, with $\psi_1$ corresponding to the trivial eigenvector $\mathbf{1}_n$. Define
\[
\tilde{\psi}_2 := \psi_2 \left(1 - \frac{1}{n}\right), \qquad \tilde{\psi}_n := \psi_n \left(1 - \frac{1}{n}\right).
\]

Then, every diagonal entry of $\Sigma$ satisfies
\[
\tilde{\psi}_2 \le \sigma_i^2 \le \tilde{\psi}_n, \qquad \text{for all } i = 1, \dots, n.
\]

Equivalently, the variance of each component is lower-bounded by the smallest nontrivial eigenvalue and upper-bounded by the largest eigenvalue of $\Sigma$, both scaled by the projection onto the subspace orthogonal to $\mathbf{1}_n$.
\end{lemma}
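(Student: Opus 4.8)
The plan is to read off the bound from the spectral decomposition of $\Sigma$ together with a Parseval identity for the coordinate vector $\bm{e}_i$. First I would recall from Lemma~\ref{lem:sigma_y_steady} that $\Sigma = b^2 M_n Q \Psi Q^\top M_n$ is symmetric positive semidefinite and that $\Sigma \bm{1}_n = 0$, since $M_n \bm{1}_n = 0$. Hence in an orthonormal eigenbasis $\{\bm{u}_1,\dots,\bm{u}_n\}$ of $\Sigma$ one may take $\bm{u}_1 = \tfrac{1}{\sqrt{n}}\bm{1}_n$ with eigenvalue $\psi_1 = 0$, while $\psi_2 \le \dots \le \psi_n$ are the nontrivial eigenvalues. (Equivalently, since $M_n Q = Q\operatorname{diag}(0,1,\dots,1)$ and $\Psi$ vanishes in its first entry, one has $\Sigma = b^2 Q \Psi Q^\top$, so the eigenvectors are exactly the Laplacian eigenvectors $\bm{q}_k$; the ordering $\psi_2 \le \dots \le \psi_n$ is obtained after sorting the numbers $\{b^2 \tau f(\lambda_k\tau)\}_{k\ge 2}$, which need not be monotone in the Laplacian index $k$.)

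Next I would write $\sigma_i^2 = \Sigma_{ii} = \bm{e}_i^\top \Sigma \bm{e}_i$ and expand $\bm{e}_i$ in this eigenbasis. Setting $c_k := \langle \bm{e}_i, \bm{u}_k \rangle$, we obtain $\sigma_i^2 = \sum_{k=1}^n \psi_k c_k^2 = \sum_{k=2}^n \psi_k c_k^2$, using $\psi_1 = 0$. By Parseval, $\sum_{k=1}^n c_k^2 = \|\bm{e}_i\|^2 = 1$, and $c_1^2 = \langle \bm{e}_i, \tfrac{1}{\sqrt{n}}\bm{1}_n\rangle^2 = \tfrac{1}{n}$, so $\sum_{k=2}^n c_k^2 = 1 - \tfrac{1}{n}$ — which is precisely $\|M_n \bm{e}_i\|^2 = \bm{e}_i^\top M_n \bm{e}_i$, the squared norm of the projection of $\bm{e}_i$ onto the subspace orthogonal to $\bm{1}_n$, and is the same for every $i$.

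Finally, since $\psi_2 \le \psi_k \le \psi_n$ for every $k \in \{2,\dots,n\}$ and the weights $c_k^2$ are nonnegative, the sandwich $\psi_2 \sum_{k\ge 2} c_k^2 \le \sigma_i^2 \le \psi_n \sum_{k\ge 2} c_k^2$ holds, i.e. $\psi_2\bigl(1 - \tfrac{1}{n}\bigr) \le \sigma_i^2 \le \psi_n\bigl(1 - \tfrac{1}{n}\bigr)$, which is exactly the claimed $\tilde{\psi}_2 \le \sigma_i^2 \le \tilde{\psi}_n$ for all $i$. I do not expect a genuine obstacle: the only point requiring care is identifying $\bm{1}_n$ as the trivial eigenvector of $\Sigma$ so that its fixed contribution $1/n$ is subtracted, producing the $(1-1/n)$ factor; everything else is a Rayleigh-quotient estimate with nonnegative weights summing to $1-1/n$. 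If strict positivity $\tilde{\psi}_2 > 0$ is desired, it follows from Assumption~\ref{asp:stable}, which guarantees $f(\lambda_k\tau) > 0$ for all $k \ge 2$, but this is not needed for the stated inequalities.
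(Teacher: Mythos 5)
Your proof is correct and follows essentially the same route as the paper's: expand $\sigma_i^2 = \bm{e}_i^\top \Sigma \bm{e}_i$ in the orthonormal eigenbasis of $\Sigma$ with $\bm{1}_n/\sqrt{n}$ as the trivial eigenvector, use $\sum_{k\ge 2}\langle \bm{e}_i,\bm{u}_k\rangle^2 = 1-\tfrac{1}{n}$, and sandwich the weighted sum between $\psi_2$ and $\psi_n$. The extra remarks on sorting the values $b^2\tau f(\lambda_k\tau)$ and on strict positivity are fine but not needed.
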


\begin{proof}
Decompose $\Sigma$ in its eigenbasis:
\[
\Sigma = \sum_{i=1}^n \psi_i \bar{q}_i \bar{q}_i^T = \sum_{i=2}^n \psi_i \bar{q}_i \bar{q}_i^T + 0 \cdot \bar{q}_1 \bar{q}_1^T,
\]
where $\bar{q}_1 = \mathbf{1}_n / \sqrt{n}$.

For any $i$,
\[
\sigma_i^2 = e_i^T \Sigma e_i = \sum_{k=2}^n \psi_k (e_i^T \bar{q}_k)^2.
\]

Since $\sum_{k=2}^n (e_i^T \bar{q}_k)^2 = 1 - \frac{1}{n}$, we can bound $\sigma_i^2$ by replacing all $\psi_k$ with either the smallest or largest eigenvalue in the sum:
\[
\sigma_i^2 \ge \psi_2 \sum_{k=2}^n (e_i^T \bar{q}_k)^2 = \psi_2 \left(1 - \frac{1}{n}\right) = \tilde{\psi}_2,
\]
\[
\sigma_i^2 \le \psi_n \sum_{k=2}^n (e_i^T \bar{q}_k)^2 = \psi_n \left(1 - \frac{1}{n}\right) = \tilde{\psi}_n.
\]

Hence, each diagonal entry of $\Sigma$ lies in the interval $[\tilde{\psi}_2, \tilde{\psi}_n]$.
\end{proof}


For the proofs of the next two theorems, we first establish the following auxiliary result:

\begin{lemma}\label{lem:conditional_expectation_mod_yi}
The conditional expectation of \( |y_i| \) given that \( y_i \in U_{\delta_i} \) admits the approximation
\begin{equation}\label{eqn:conditional_expecation_mod_yi}
    \mathbb{E}_{\mathbb{P}\mid \mathcal{F}^i}\!\left[\,|y_i|\,\right] 
    \;\approx\; 
    \frac{B\,\bar{\delta}_i}{1 - \exp\!\left(-A\,\frac{\bar{\delta}_i}{\sqrt{2}\,\sigma_i}\right)}\,.
\end{equation}
\end{lemma}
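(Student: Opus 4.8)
The plan is to specialize Lemma~\ref{lem:conditional_expectation} to the case \(j=i\), so that
\(
\mathbb{E}_{\mathbb{P}\mid\mathcal{F}^i}[|y_i|]
= \mathbb{E}_{\mathbb{P}}\!\big[|y_i|\,\mathbf{1}_{\{y_i\in U_{\delta^i}\}}\big]\big/\mathbb{P}[y_i\in U_{\delta^i}]
\),
and then evaluate the numerator and denominator in closed form using the Gaussian marginal \(y_i\sim\mathcal{N}(0,\sigma_i^2)\) from \eqref{eqn:steady_state_distribution_observables} and Lemma~\ref{lem:sigma_y_steady}. The denominator is exactly the probability already computed in \eqref{eqn: conditional_expectation_denominator} inside the proof of Theorem~\ref{thm:DR_risk}: with \(\delta^{*}=\bar{\delta}_i/(\sqrt{2}\,\sigma_i)\) one has \(\mathbb{P}[y_i\in U_{\delta^i}]=\mathbb{P}[|y_i|>\bar{\delta}_i]=\operatorname{erfc}(\delta^{*})\).

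For the numerator, I would use the symmetry of the centered Gaussian density to write
\(\mathbb{E}_{\mathbb{P}}[|y_i|\,\mathbf{1}_{\{y_i\in U_{\delta^i}\}}] = \tfrac{2}{\sigma_i\sqrt{2\pi}}\int_{\bar{\delta}_i}^{\infty} y\, e^{-y^2/(2\sigma_i^2)}\,dy\), which the substitution \(u = y^2/(2\sigma_i^2)\) reduces to the elementary value \(\sqrt{2/\pi}\,\sigma_i\, e^{-\delta^{*2}}\). Dividing by \(\operatorname{erfc}(\delta^{*})\) gives the exact expression \(\mathbb{E}_{\mathbb{P}\mid\mathcal{F}^i}[|y_i|] = \sqrt{2/\pi}\,\sigma_i\, e^{-\delta^{*2}}/\operatorname{erfc}(\delta^{*})\). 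At this stage I would invoke the Karagiannidis–Lioumpas approximation \eqref{eqn:erfc_approximation} for \(\operatorname{erfc}(\delta^{*})\); this is legitimate because \(\delta^{*} = (\delta^i+c)/(\sqrt{2}\,\sigma_i) > 0\), using \(\sigma_i > 0\) from Lemma~\ref{lem:principle_covariance_invertibility}. Substituting it, the common factor \(e^{-\delta^{*2}}\) cancels between numerator and denominator, leaving \(\sqrt{2}\,B\,\sigma_i\,\delta^{*}/(1-e^{-A\delta^{*}})\); finally \(\sqrt{2}\,\sigma_i\,\delta^{*} = \bar{\delta}_i\) collapses the prefactor and yields precisely \eqref{eqn:conditional_expecation_mod_yi}.

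This argument is essentially a bookkeeping exercise built on pieces already in the excerpt, so I do not anticipate a genuine obstacle. The only point deserving a careful sentence is the justification for applying \eqref{eqn:erfc_approximation} — namely the strict positivity of \(\delta^{*}\) — which is also the sole origin of the "\(\approx\)" in the statement; everything preceding that substitution is an exact identity.
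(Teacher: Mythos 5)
Your proposal is correct and follows essentially the same route as the paper's proof: reduce to the ratio of the truncated first absolute moment over the tail probability via symmetry of the centered Gaussian, evaluate both exactly (obtaining \(\sqrt{2/\pi}\,\sigma_i e^{-\delta^{*2}}/\operatorname{erfc}(\delta^{*})\)), and then substitute the approximation \eqref{eqn:erfc_approximation} so that the factor \(e^{-\delta^{*2}}\) cancels. The justification you flag for the "\(\approx\)" step (strict positivity of \(\delta^{*}\)) matches the paper's own caveat.
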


\begin{proof}[Proof of Lemma~\ref{lem:conditional_expectation_mod_yi}]
The result follows directly from the definition of conditional expectation. We have
\begin{align*}
    \mathbb{E}_{\mathbb{P}\mid \mathcal{F}^i}\!\left[\,|y_i|\,\right]
    &= \frac{\mathbb{E}_{\mathbb{P}}\!\left[\,|y_i|\,\mathbf{1}_{\{y_i \in U_{\bar{\delta}_i}\}}\right]}{\mathbb{P}\!\left(|y_i| \in U_{\bar{\delta}_i}\right)} \\[0.6em]
    &\overset{(a)}{=} 
    \frac{\mathbb{E}_{\mathbb{P}}\!\left[\,y_i\,\mathbf{1}_{\{y_i \geq \bar{\delta}_i\}}\right]}{\mathbb{P}\!\left(y_i \geq \bar{\delta}_i\right)} \\[0.6em]
    &\overset{(b)}{=} 
    \sqrt{\frac{2}{\pi}}\, \sigi
    \frac{\exp\!\bigl(-\tfrac{\bar{\delta}_i^2}{2\sigma_i^2}\bigr)}
    {\mathrm{erfc}\!\bigl(\tfrac{\bar{\delta}_i}{\sqrt{2}\,\sigma_i}\bigr)} \\[0.6em]
    &\overset{(c)}{\approx} 
    \frac{B\,\bar{\delta}_i}{1 - \exp\!\left(-A\,\frac{\bar{\delta}_i}{\sqrt{2}\,\sigma_i}\right)}\,.
\end{align*}
Here, \((a)\) follows from the symmetry of the Gaussian distribution (since \(y_i\) has zero mean), \((b)\) from evaluating the integral of the tail of a standard normal distribution, and \((c)\) from applying the approximation of the complementary error function given in~\eqref{eqn:erfc_approximation}.
\end{proof}




\begin{proof}[Proof of Theorem \ref{thm:dr_risk_upper_bound}]

To prove this theorem, we leverage the random variable decomposition from Lemma~\ref{lem:random_var_decomposition} and apply conditional expectation as follows:
\begin{align*}
    \mathbb{E}_{\mathbb{P}\mid \mathcal{F}^i}\!\left[\lvert y_j \rvert \right]
    &\overset{(a)}{=}\; \mathbb{E}_{\mathbb{P}\mid \mathcal{F}^i}\!\left[\;\bigl\lvert \rho \tfrac{\sigma_j}{\sigma_i} y_i + z \bigr\rvert \;\right] \\[0.3em]
    &\overset{(b)}{\leq}\; \mathbb{E}_{\mathbb{P}\mid \mathcal{F}^i}\!\left[\,\lvert \rho \rvert \tfrac{\sigma_j}{\sigma_i} \lvert y_i \rvert + \lvert z \rvert \,\right] \\[0.3em]
    &\overset{(c)}{=}\; \lvert \rho \rvert \tfrac{\sigma_j}{\sigma_i} \,\mathbb{E}_{\mathbb{P}\mid \mathcal{F}^i}\!\left[\lvert y_i \rvert \right] 
      + \mathbb{E}_{\mathbb{P}}\!\left[\lvert z \rvert \right] \\[0.3em]
    &\overset{(d)}{=}\; \sigma_j \Biggl( \lvert \rho \rvert\,\kappa\!\left({\sigma_i}\right) 
        + \sqrt{\tfrac{2}{\pi}}\, \rho' \Biggr),
\end{align*}
where  
(a) follows from Lemma~\ref{lem:random_var_decomposition},  
(b) from the triangle inequality~\cite{durrett2019probability},  
(c) from linearity of conditional expectation and independence of \(z\) (Lemma~\ref{lem:random_var_decomposition}), and  (d) from Lemma \ref{lem:conditional_expectation_mod_yi},  where \(\kappa\!\left(\cdot\right)\) as defined in \eqref{eqn:kappa_x}.

Maximizing over all admissible probability measures gives 
\begin{align*}
\sup_{\mathbb{P} \in \mathcal{M}_p} 
\mathbb{E}_{\mathbb{P} \mid \mathcal{F}_i}\!\left[\lvert y_j \rvert \right] 
\;&\leq\; 
\sup_{\mathbb{P} \in \mathcal{M}_p}
\sigma_j \Biggl( \lvert \rho \rvert\,\kappa\!\left({\sigma_i}\right) 
        + \sqrt{\tfrac{2}{\pi}}\, \rho' \Biggr).
\end{align*}

Using the feasibility constraints \eqref{eqn:opt_constraint_sigi}--\eqref{eqn:opt_constraint_rho}, we obtain (for \(\mathcal{M}_p = \{\mathcal{M}_{\tau}, \mathcal{M}_{\omega}\}, \))
\[
\mathcal{R}_i^j \leq \sup_{0\leq |\rho|<1}
\sigma_{0,j}^+ \Biggl( \lvert \rho \rvert\,\kappa\!\left({\sigma_{0,i}^+}\right) 
          + \sqrt{\tfrac{2}{\pi}}\, \rho' \Biggr),\]
\noindent where \(\sigma_{0,j}^+,\) and \(\sigma_{0,i}^+\) as defined in \eqref{eqn: sigma_+_-}. 

Invoking Lemma \ref{lem:Sigma_rayleigh_bound} yields
\[
\underset{\underset{i,j \in \{1, \dots, n\}}{i \neq j}}{\sup}\mathcal{R}_i^j \leq \sup_{0\leq |\rho|<1}
\sqrt{\tilde{\psi}_{0,n}^+} \Biggl( \lvert \rho \rvert\,\kappa\!\left({\sqrt{\tilde{\psi}_{0,n}^+}}\right) 
        + \sqrt{\tfrac{2}{\pi}}\, \rho' \Biggr),\]
\noindent where \(\tilde{\psi}_{0,n}^+,\) is as defined in \eqref{eqn: psi_+_-}. For notational convenience, we define \(
\kappa_i = \kappa\!\left({{\sqrt{\tilde{\psi}_{0,n}^+}}}\right).
\)

Finally, using basic trigonometric identities, we obtain
 \[
\underset{\underset{i,j \in \{1, \dots, n\}}{i \neq j}}{\sup}\mathcal{R}_i^j \leq \sup_{0\leq |\rho|<1}
\sqrt{\tilde{\psi}_{0,n}^+} \sqrt{\kappa_i^2 + \frac{2}{\pi} } \sin \left(\alpha + \beta\right) ,\]
where 
\[\tan \alpha = \frac{\rho'}{\rho} \qquad \tan \beta = \frac{\sqrt{2}}{\kappa_i \sqrt{\pi}}. \]
As a consequence, 
 \[
\underset{\underset{i,j \in \{1, \dots, n\}}{i \neq j}}{\sup}\mathcal{R}_i^j \leq 
 \sqrt{\left(\kappa_i^2 + \frac{2}{\pi} \right) \tilde{\psi}_{0,n}^+} .\]
For \(\mathcal{M}_{p} = \mathcal{M}_{b},\) the same analysis follows except \(\rho = \rho_0\) and
\[
\underset{\underset{i,j \in \{1, \dots, n\}}{i \neq j}}{\sup}\mathcal{R}_i^j \leq
\sqrt{\tilde{\psi}_{0,n}^+} \Biggl( \lvert \rho_0 \rvert\,\kappa\!\left({\sqrt{\tilde{\psi}_{0,n}^+}}\right) 
        + \sqrt{\tfrac{2}{\pi}}\, \rho'_0 \Biggr),\]
        The result then follows by \eqref{eqn:condn_dr_risk_definition}. 
\end{proof}

\begin{proof}[Proof of Theorem~\ref{thm:dr_risk_lower_bound}]
We derive a lower bound on the DR risk as follows:
\begin{align*}
    \mathbb{E}_{\mathbb{P} \mid \mathcal{F}_i}\!\left[\lvert y_j \rvert \right] 
    &\overset{(a)}{=} \mathbb{E}_{\mathbb{P} \mid \mathcal{F}_i}\!\left[\lvert \rho \frac{\sigma_j}{\sigma_i} y_i + z \rvert \right] \\[0.2em]
    &\overset{(b)}{\geq} \mathbb{E}_{\mathbb{P} \mid \mathcal{F}_i}\!\left[\;\bigl\lvert \lvert \rho \frac{\sigma_j}{\sigma_i} y_i \rvert - \lvert z \rvert \bigr\rvert \right] \\[0.2em]
    &\overset{(c)}{\geq} \biggl\lvert \mathbb{E}_{\mathbb{P} \mid \mathcal{F}_i}\!\left[\lvert \rho \frac{\sigma_j}{\sigma_i} y_i \rvert - \lvert z \rvert \right]\biggr\rvert \\[0.2em]
    &\overset{(d)}{=} \biggl\lvert \lvert \rho \frac{\sigma_j}{\sigma_i} \rvert\,\mathbb{E}_{\mathbb{P} \mid \mathcal{F}_i}\!\left[\lvert y_i \rvert \right] - \mathbb{E}_{\mathbb{P}}\!\left[\lvert z \rvert \right] \biggr\rvert \\[0.2em]
    &\overset{(e)}{=} \sigma_j \left\lvert \lvert \rho \rvert\, \kappa\!\left({\sigma_i}\right) - \sqrt{\frac{2}{\pi}}\, \rho' \right\rvert,
\end{align*}
where  
(a) follows from Lemma~\ref{lem:random_var_decomposition},  
(b) from the reverse triangle inequality~\cite{durrett2019probability},  
(c) from Jensen's inequality~\cite{durrett2019probability},  
(d) from linearity of conditional expectation and independence of \(z\) from Lemma~\ref{lem:random_var_decomposition}, and  \((e)\) follows from Lemma \ref{lem:conditional_expectation_mod_yi}.


Maximizing over admissible probability measures gives
\[
\sup_{\mathbb{P} \in \mathcal{M}_p} 
\mathbb{E}_{\mathbb{P} \mid \mathcal{F}_i}\!\left[\lvert y_j \rvert \right] 
\;\geq\; 
\sup_{\mathbb{P} \in \mathcal{M}_p}
\sigma_j\, \max\!\left\{ \sqrt{\frac{2}{\pi}}, \;\kappa\!\left({\sigma_i}\right) - \eta \right\},
\]
where $\eta > 0$ is arbitrarily small as $\lvert \rho \rvert \to 1$.  

Using the feasibility constraints \eqref{eqn:opt_constraint_sigi}--\eqref{eqn:opt_constraint_rho}, we obtain (for \(\mathcal{M}_p = \{\mathcal{M}_{\tau}, \mathcal{M}_{\omega}\} \))
\[
\mathcal{R}^j_i 
\;\geq\; 
\sigma_{0,j}^+\,
\max\!\left\{ \sqrt{\frac{2}{\pi}}, \;\kappa\!\left({\sigma_{0,i}^+}\right) - \eta \right\},
\]
\noindent where \(\sigma_{0,j}^+,\) and \(\sigma_{0,i}^+\) as defined in \eqref{eqn: sigma_+_-}. 

Finally, invoking Lemma~\ref{lem:Sigma_rayleigh_bound} yields
\[
\underset{\underset{i,j \in \{1, \dots, n\}}{i \neq j}}{\sup}\mathcal{R}^j_i 
\;\geq\; 
\sqrt{\tilde{\psi}_{0,2}^+}\,
\max\!\left\{ \sqrt{\frac{2}{\pi}}, \;\kappa\!\left({\sqrt{\tilde{\psi}_{0,2}^+}}\right) - \eta \right\},
\]
\noindent where \(\tilde{\psi}_{0,2}^+,\) is as defined in \eqref{eqn: psi_+_-}. 
For \(\mathcal{M}_{p} = \mathcal{M}_{b},\) the same analysis follows except \(\rho = \rho_0\) and
\[
\underset{\underset{i,j \in \{1, \dots, n\}}{i \neq j}}{\sup}\mathcal{R}_i^j \geq
\sqrt{\tilde{\psi}_{0,2}^+} \Biggl( \lvert \rho_0 \rvert\,\kappa\!\left({\sqrt{\tilde{\psi}_{0,2}^+}}\right) 
        - \sqrt{\tfrac{2}{\pi}}\, \rho'_0 \Biggr),\]
The result then follows by \eqref{eqn:condn_dr_risk_definition}. 
\end{proof}
\begin{proof}[Proof of Corollary \ref{cor:single_agent_upper_lower_bound}]
    The result follows trivially by substituting \(\rho = 0\) into Theorems \ref{thm:dr_risk_upper_bound} and \ref{thm:dr_risk_lower_bound}.
\end{proof}
\begin{proof}[Proof of Theorem \ref{thm:fundamental_limit}]
The proof follows directly by analyzing the dependence of \(\Psi\) on the eigenvalues of the Laplacian matrix, as elucidated in Lemma~\ref{lem:sigma_y_steady} and Remark~\ref{rem: Sigma_y_time_delay_zero}. The desired expression is obtained by substituting the maximum and minimum nontrivial eigenvalues of the covariance matrix into Theorem~\ref{thm:dr_risk_lower_bound}.
\end{proof}

\begin{proof}[Proof of Corollary \ref{cor:fundamental_limit_single_agent}]
The proof follows along similar lines as the proof of Theorem~\ref{thm:fundamental_limit}, leveraging the result of Corollary~\ref{cor:single_agent_upper_lower_bound}.
\end{proof}

\begin{proof}[Proof of Lemma \ref{lem:complete_graph_sigma_rho}]
    We start the proof by noting that the eigenvalues of unweighted complete graph are \(\lambda_1 = 0\) and \(\lambda_i = n\) for all \(i \in \{2, \dots, n\}.\) Then for all \(i \in \{2,\dots,n\},\) \(f(\lambda_i \tau) = f(n\tau).\)
    Then \(\Sigma\) as in Lemma \ref{lem:sigma_y_steady} can be written as 
        \begin{align*}
        \sigma_{ij} &= \tau f(n\tau) b^2 e_i^\top \left(\sum_{k = 2}^{n}  q_k q_k^\top \right) e_j \\
        &= \tau f(n\tau) b^2 e_i^\top \left(\sum_{k = 1}^{n}  q_k q_k^\top - q_1q_1^\top\right) e_j \\
                &= \tau f(n\tau) b^2 e_i^\top  \left(I_n - q_1q_1^\top \right) e_j 
    \end{align*}
    The \(kth\) diagonal element of \(\Sigma\) and correlation \(\rho\) can be written as 
    \[\sigma_k^2 = b {\tau f(n\tau)} \left(1 - \frac{1}{n}\right), \hspace{0.5cm} \rho = \frac{-1}{n-1}\]
\end{proof}

\begin{proof}[Proof of Lemma \ref{lem:p_cycle_graph_sigma}]
    We start the proof by noting that the Laplacian matrix of unweighted p-cycle graph is a circulant matrix, which is diagonalizable using Fourier basis \cite{van2010graph}. As a result, the eigenvector matrix is given as follows \cite{van2010graph}: 
    \[Q_{ij} = \frac{\omega^{ij}}{\sqrt{n}}, \quad \text{where} \quad \omega = e^{2\pi \iota/n}\] 
    Then \(\Sigma\) as in Lemma \ref{lem:sigma_y_steady} can be written as 
    \begin{align*}
              \sigma_{ij} &= \tau  b^2 e_i^\top \left(\sum_{k = 2}^{n} f(\lambda_k\tau) q_k q_k^H \right) e_j \\
        \sigma_{i}^2&= \tau  b^2  \left(\sum_{k = 2}^{n} f(\lambda_k\tau) q_{ki} q_{ki}^H \right) \\
        &=  \frac{\tau  b^2}{n}  \left(\sum_{k = 2}^{n} f(\lambda_k\tau) \right) 
    \end{align*}
\end{proof}

\end{document}